\theoremstyle{plain}
\newtheorem{theorem}{Theorem}
\newtheorem{lemma}{Lemma}
\newtheorem{remark}{Remark} %\newtheorem*{remark}{Remark} to remove numbering
\newtheorem{proposition}{Proposition}
\newtheorem{assumption}{Assumption}
\newtheorem{definition}{Definition}
\newtheorem{corollary}{Corollary}
\DeclareMathOperator*{\argmin}{arg\,min} %\argmin command
\newcommand{\dint}{\mathrm{d}} %d in integral
\newcommand{\ppx}[2]{\frac{\partial #1}{\partial #2}}
\newcommand{\ddt}[1]{\frac{\mathrm{d}#1}{\mathrm{d}t}}
\newcommand{\ddx}[1]{\frac{\mathrm{d}#1}{\mathrm{d}x}}
\newcommand{\ddxx}[2]{\frac{\mathrm{d}#1}{\mathrm{d}#2}}
\title{\sc NuZZ: numerical Zig-Zag for general models}
\author[1,*]{Filippo Pagani}
\author[2]{Augustin Chevallier}
\author[3]{Sam Power}
\author[4]{Thomas House}
\author[4]{Simon Cotter}
\affil[1]{MRC Biostatistics Unit, University of Cambridge}
\affil[3]{School of Mathematics, University of Bristol} 
\affil[2]{Department of Mathematics and Statistics, Lancaster University}
\affil[4]{Department of Mathematics, University of Manchester}
\affil[*]{Corresponding author(s): Filippo Pagani. E-mail(s): filippo.pagani@mrc-bsu.cam.ac.uk}
\date{\today}
\begin{document}

% INSTRUCTIONS: you need to compile from the file main-<name>.tex
% if you want to change template, you need to clear the cache
% the button for that is at the bottom of the error log.

\maketitle

\begin{abstract}
\noindent{Markov chain Monte Carlo (MCMC) is a key algorithm in computational statistics, and as datasets grow larger and models grow more complex, many popular MCMC algorithms become too computationally expensive to be practical. Recent progress has been made on this problem through development of MCMC algorithms based on Piecewise Deterministic Markov Processes (PDMPs), irreversible processes that can be engineered to converge at a rate which is independent of the size of data. While there has understandably been a surge of theoretical studies following these results, PDMPs have so far only been implemented for models where certain gradients can be bounded, which is not possible in many statistical contexts. Focusing on the Zig-Zag process, we present the Numerical Zig-Zag (NuZZ) algorithm, which is applicable to general statistical models without the need for bounds on the gradient of the log posterior. This allows us to perform numerical experiments on: (i) how the Zig-Zag dynamics behaves on some test problems with common challenging features; and (ii) how the error between the target and sampled distributions evolves as a function of computational effort for different MCMC algorithms including NuZZ. Moreover, due to the specifics of the NuZZ algorithms, we are able to give an explicit bound on the Wasserstein distance between the exact posterior and its numerically perturbed counterpart in terms of the user-specified numerical tolerances of NuZZ.}
\end{abstract}

%used \input as \include induces a page-break
\section{Introduction}
\label{intro}

Markov Chain Monte Carlo (MCMC) methods are a mainstay of modern statistics~\parencite{Brooks:2011}, and are employed in a wide variety of applications, such as astrophysics (e.g.\ \textcite{des2017}), epidemiology (e.g.\ \textcite{o1999bayesian}, \textcite{house2016}), chemistry (e.g.\ \textcite{cotter2019,cotter2020transport}) and finance (e.g.\ \textcite{kim1998}). There are a diverse range of challenges that MCMC algorithms can be hindered by, including multimodality of the posterior, complex non-linear dependencies between parameters, large number of parameters (e.g.\ \textcite{cotter2013mcmc}), or a large number of observations (e.g.\ \textcite{bardenet2017}).  Common MCMC methods such as the Random Walk Metropolis (RWM) algorithm~\textcite{metropolis1953} offer simplicity and flexibility, but also have several limitations. 

Throughout the history of MCMC there have been many attempts to improve algorithmic efficiency~\parencite{robert-casella2011}, including the work of~\textcite{hastings1970} in generalising the RWM approach to non-symmetric proposals, and the use of gradient information to explore the state space according to a discretised Langevin diffusion in the Metropolis-Adjusted Langevin Algorithm (MALA)~\parencite{roberts1996}. Both of these methods rely on reversible dynamics that satisfies a detailed balance condition, used to ensure convergence to the appropriate limiting distribution.

Lifting the process simulated to a larger state space can aid exploration of the original state space, with one of the most successful such approaches being the Hamiltonian Monte Carlo (HMC) algorithm in Euclidean space \parencite{duane1987}, and on Riemannian manifolds (RMHMC)~\parencite{girolami2011}. HMC relies on adding momentum variables to the state space, while maintaining detailed balance and reversibility thanks to a Metropolis acceptance step.

Recently, there has been growing interest in irreversible algorithms, which have been shown to converge faster and have lower asymptotic variance than their reversible counterparts~\parencite{chen2013, ottobre2016}.  This was initially investigated by \textcite{diaconis2000} and \textcite{turitsyn2011}, where the authors achieve irreversibility via ``lifting" and injecting vortices in the state space. In \textcite{ma-fox2016} the authors provide a framework for developing irreversible algorithms.

Not all irreversible algorithms are built from reversible ones. In the physics literature, \textcite{rapaport2009} and \textcite{peters-dewith2012} laid the foundations for the Bouncy Particle Sampler (BPS), which was then thoroughly studied by \textcite{bouchard2017}. The BPS abandons diffusions in favour of completely irreversible, piecewise deterministic dynamics based on the theory developed in \textcite{davis1984, davis1993}. The close relationship between the BPS and HMC is explored in \textcite{deligiannidis2018}.

The Zig-Zag (ZZ) process~\parencite{bierkens2016, bierkens-roberts2016, bierkens-duncan2016}, a variant of the Telegraph process~\parencite{kolesnik2013}, is another Piecewise Deterministic Markov Process (PDMP) closely related to the BPS, even coinciding in one dimension. Its properties were first explored in work by \textcite{bierkens-duncan2016} and \textcite{bierkens2019bigdata}.  The ZZ sampler is particularly appealing as when properly modified with upper bounds for the switching rate, it has a computational cost per iteration that need not grow with the size of the data. This feature makes it particularly well suited for `tall, thin' data, e.g.\ multivariate datasets with a large number of observations, $n$, and a relatively small number of parameters, $d$.

The BPS sampler and the ZZ sampler are both PDMPs that change direction at random times according to the interarrival times of a Poisson process whose rate depends on the gradient of the target density. The main difference between them is the way in which the new direction is chosen. Conditions for ergodicity have been established for both algorithms~\parencite{bierkens2019ergodicity, durmus2019}, and the Generalised Bouncy Particle Sampler~\parencite{wu2017} partially bridges the gap between the two.

Given their desirable properties, there has been a wealth of theoretical analysis of PDMPs in MCMC developed in recent years~\parencite{bierkens2019highdimensional, zhao2019, andrieu2021hypocoercivity, bierkens2019, chevallier2020, andrieu2021, chevallier2021}. In \textcite{vanetti2018}, the authors provide insights into a general framework for construction of continuous and discrete time algorithms based on PDMPs, and discuss some improvements that ameliorate some of the current challenges that PDMP samplers face.  There remains, however, a lack of guidance about which classes of statistical problems are well-suited to the use of PDMP algorithms, and how those algorithms are best to be implemented. 

In this work we aim to study the Zig-Zag dynamics and how well they explore distributions with specific features, often found in practical Bayesian statistics. In order to do this, we develop a numerical approximation to calculate the time to the next switch for general distributions, without requiring analytical solutions or bounds for the switching rate.

Moreover, we obtain results in controlling the numerical error induced by our algorithm. A first result in this sense was achieved by \textcite{riedler2013}, where the author shows that under general assumptions, sample paths resulting from numerical approximations of PDMPs converge almost surely to the paths followed by the exact process as the numerical tolerances approach zero. A more relevant result can be found in \textcite{bertazzi2021}, where the authors study a general class of Euler-type integration routines, which yield a Markov process at each point of the mesh defined by the integrator.  Their framework is sufficiently general to accomodate PDMPs with trajectories that are not linear and need to be approximated numerically.  On the contrary, the process derived using the adaptive integration methods examined in this work is Markov only at points on the skeleton chain, but not at points on the deterministic portion of the trajectory.

The paper is organised as follows.  In Section \ref{sec:reviewZZ} we review the technical details of the Zig-Zag process and fix the notation for the rest of this work. In Section \ref{sec:NuZZ} we present an alternative algorithm, the Numerical Zig-Zag (NuZZ), which uses numerical methodology to compute switching times of the Zig-Zag process. In Section \ref{sec:err} we study how the numerical tolerances affect the quality of the NuZZ MCMC sample.  In Section \ref{sec:exp} we present some numerical experiments which test the performance of a range of MCMC algorithms, including NuZZ, for a range of test problems with features often found in real-world models. Finally in Section \ref{sec:con} we conclude with some discussions of the results.

\section{The Zig-Zag sampler}
\label{sec:reviewZZ} 

The Zig-Zag sampler is an MCMC algorithm designed for drawing samples from a probability measure $\pi(\mathbf{x})$ with a smooth density on $\mathcal{X} = \mathbb{R}^d$. In the same vein as methods like HMC, the Zig-Zag sampler operates on an extended state space where the position variable $\mathbf{x}$ is augmented by a velocity variable $\mathbf{v}$.

The Zig-Zag sampler is driven by the Zig-Zag process, a continuous time PDMP which describes the motion of a particle $\mathbf{x}$, with a piecewise constant velocity which is subject to intermittent jumps. The dynamics and jumps are constructed in such a way that the process converges in law to a unique invariant measure which admits $\pi$ as its marginal.

Let the Zig-Zag process be defined as $\mathbf{Z}_t = (\mathbf{X}_t, \mathbf{V}_t)$ on the extended space $E = \mathcal{X} \times \mathcal{V}$, with $\mathcal{X}=\mathbb{R}^d$ and $\mathcal{V} = \{-1,1\}^d$. From an initial state $(\mathbf{x}, \mathbf{v})$, the process evolves in time following the deterministic linear dynamics
\begin{equation}
\label{linearDyn}
\dot{\mathbf{x}} = \mathbf{v}\, , \qquad \dot{\mathbf{v}} = \mathbf{0} \, ,
\end{equation}
where the dots denote total differentiation with respect to time. The probability of changing direction (i.e.\ the sign of one component of the vector $\mathbf{v}$) increases as the process heads into the tails of $\pi(\mathbf{x})$. The event of changing the sign of $v_i$ is modelled as an inhomogeneous Poisson process with component-wise rate
\begin{equation}
\label{lambda}
\lambda_i (\mathbf{x}, \mathbf{v}) = 0 \vee - v_i \partial_i \log \pi (\mathbf{x}) \, ,
\end{equation}
where $\partial_i$ is the partial derivative w.r.t.\ the $i$th component. Hence, the probability of having an event for coordinate $i$ in $[t,t+ \dint t]$ is $\lambda_i(\mathbf{z}_t) \dint t + o( \dint t)$. This induces the ZZ process to spend more time in regions of high density of $\pi(\mathbf{x})$.
When a velocity switch is triggered, a switching time $\tau$ is obtained, and the process moves forward from $\mathbf{x}$ to $\mathbf{x}' = \mathbf{x}+\tau \mathbf{v}$. At which point the transition kernel $Q(\mathbf{v}' | \mathbf{x}',\mathbf{v})$ selects the velocity component to switch and flips its sign. From $(\mathbf{x}',\mathbf{v}')$ the dynamics proceed following Equation \ref{linearDyn} until another velocity switch occurs. The trajectories of the process produce `zig-zag' patterns such as those seen in Figure \ref{ZZrosenTraj}, motivating its name.

\begin{figure}[!h]
	\centering
	\includegraphics[width=.7\linewidth]{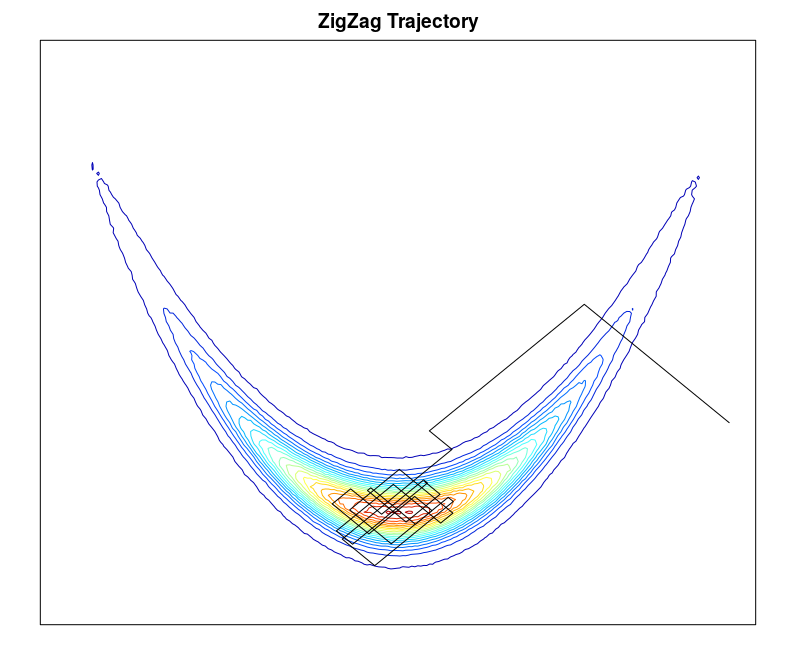}
	\caption{Example Zig-Zag trajectory. The target is a 2-dimensional Rosenbrock
		distribution.}
	\label{ZZrosenTraj}
\end{figure}

The linear dynamics in Equation \ref{linearDyn}, the jump rates $\lambda_i(\mathbf{x},\mathbf{v})$, and the transition kernel $Q(\mathbf{v}' | \mathbf{x}',\mathbf{v})$ uniquely define the Zig-Zag process. Its infinitesimal generator is known, and can be given in terms of the three above quantities as
\ifdefined\doublecolumn
  \begin{multline}
    \label{eq:generatorZZ}
    \mathcal{U}_{\mathrm{ZZ}}f(\mathbf{x},\mathbf{v}) = 
    \sum_{i=1}^d \biggl \{ v_i \frac{\partial f}{\partial x_i} +
    \lambda_i(\mathbf{x},\mathbf{v}) \big(f(\mathbf{x},\mathbf{F}_i(\mathbf{v})) - \\
    f(\mathbf{x},\mathbf{v})\big) \biggr \}
    \, ,
  \end{multline}
\else
  \begin{equation}
    \label{eq:generatorZZ}
    \mathcal{U}_{\mathrm{ZZ}}f(\mathbf{x},\mathbf{v}) = 
    \sum_{i=1}^d \left \{ v_i \frac{\partial f}{\partial x_i} +
    \lambda_i(\mathbf{x},\mathbf{v}) \left(f(\mathbf{x},\mathbf{F}_i(\mathbf{v})) -
    f(\mathbf{x},\mathbf{v})\right) \right \}
    \, ,
  \end{equation}
\fi
where $F_i(\mathbf{v})$ represents the operation of flipping the sign of the $i$th component.

What is described above is the Zig-Zag process with canonical rates, which has been proven to be ergodic for a large class of targets \parencite{bierkens2019ergodicity}. Taking the rate 
\begin{equation}
\label{lambda_gamma}
\lambda_i (\mathbf{x}, \mathbf{v}) = \left ( 0 \vee - v_i \partial_i \log \pi (\mathbf{x}) \right ) + \gamma_i (\mathbf{x}, \mathbf{v})
\end{equation}
(compare with Equation \eqref{lambda}), with $\gamma_i$ satisfying
\begin{equation}
    \lambda_i(\mathbf{x},\mathbf{v}) - \lambda_i(\mathbf{x},F_i(\mathbf{v}) ) = - v_i \partial_i \log \pi (\mathbf{x}) \, , \qquad \forall i=1,...,d \, ,
\end{equation}
also yields a process that targets the desired distribution, and the process is ergodic on a larger class of models. However, increasing $\gamma_i$ induces diffusive behaviour, in that the process tends to change direction more often than it otherwise would, and to cover less distance in the same amount of time. Hence, if the functions $\gamma_i$ are taken to be strictly positive, they are generally set to be as small as possible.

In this work we take $\gamma_i(\mathbf{x}, \mathbf{v}) = \gamma$, which gives an overall baseline switching rate of $\Gamma = \sum_i \gamma_i = d \gamma$. For $\mathbf{z} = (\mathbf{x}, \mathbf{v})$, we will often refer to the switching rate as a function of time as $\lambda_i^\mathbf{z} (t) := \lambda_i ( \mathbf{x} + t \cdot \mathbf{v}, \mathbf{v})$.

\subsection{Sampling the switching time}
\label{switchTimep}

In this section we outline the approach followed in \cite{bierkens2019bigdata}. Switches in $\mathbf{v}$ occur as the interarrival times of an inhomogeneous Poisson process with rate $\lambda_i(\mathbf{x},\mathbf{v})$. Starting the process at $t=0$, each $\tau_i$, the time to the next switch in the $i$th component conditional on no other switches occurring, has cumulative distribution function (cdf)
\begin{equation}
\label{eq:pdfTau}
G_{T_i}(\tau_i) = 1 - \exp
\left( -\int_0^{\tau_i} \lambda_i^\mathbf{z}(s) \, \dint{s} \right)
\, .
\end{equation}
%where $\mathbf{x}(s) = \mathbf{x} + s \mathbf{v}$. 
A switching time $\tau_i$ is sampled separately for each component with law given by \eqref{eq:pdfTau}, and the final switching time $\tau$ is selected as
\begin{equation*}
\tau = \min_{i=1,\dots,d} \tau_i \, , \qquad \mbox{with } \qquad
i^* := \argmin_{i=1,\dots,d} \tau_i \, .
\end{equation*}
Once $\tau$ has been calculated, the system follows the deterministic dynamics from $\mathbf{x}(s) = \mathbf{x}(0) + s \mathbf{v}$ to $\mathbf{x}(s) = \mathbf{x}(0) + \tau^- \mathbf{v}$, where a velocity flip occurs, and the process starts following the new dynamics $\mathbf{x}(s) = \mathbf{x}(\tau) +\mathbf{F}_{i^*}(\mathbf{v}) (s-\tau)$. All the larger $ \{ \tau_i \}_{i \neq i^*}$ are usually discarded, as they refer to a state of the system that can no longer be achieved. The same procedure is repeated at each iteration of the algorithm.

This method for sampling the switching times requires realisations of $d$ random variables with densities given by \eqref{eq:pdfTau}, where $d$ is the number of components of the Zig-Zag process, i.e.\ the number of parameters to be estimated via MCMC.

\subsection{Poisson sampling via thinning}

Sampling $\tau_i$ from Equation \eqref{eq:pdfTau} requires solving the integral in the exponential. This is often a difficult task, as the rates $\lambda_i$ depend on the posterior in a complex way as shown in Equation \eqref{lambda_gamma}. Analytic solutions are available only for simple targets and are problem-dependent.

In~\cite{bierkens2019bigdata} the authors propose the method of \emph{thinning} to sample the waiting times to the next switch, $\tau_i$. If the Jacobian or Hessian of the target $\pi(\mathbf{x})$ is bounded, then there exists a bound $\bar{\lambda}_i(\mathbf{x},\mathbf{v}) \ge \lambda_i(\mathbf{x},\mathbf{v})$, $\forall \mathbf{x},\mathbf{v}, i$. The bound $\bar{\lambda}_i$ can be used instead of $\lambda_i$ in Equation \eqref{eq:pdfTau}, so that $G_{T_i}(\tau_i)$ can be replaced with a simpler function. Once the $\tau_i$ are sampled and the time $\tau$ to the next switch is found, the proposed velocity switch occurs with probability $\lambda_{i*} / \bar{\lambda}_{i*}$, to correct for the fact that $\tau$ was not sampled from the correct distribution. If the switch is rejected, there is no change in $\mathbf{v}$, the process moves linearly to $\mathbf{x}+\tau \mathbf{v}$, and new switching times are sampled.

Sampling the switching times via thinning extends the applicability of the Zig-Zag sampler to a wider class of models than those where $\tau$ can be sampled by inverting the cdf, and forms the basis for the ZZCV algorithm \cite{bierkens2019bigdata}. Additionally, in the recent work of \cite{sutton2021}, the authors propose a method for systematically identifying numerical bounds which are valid on a given time horizon. Their method significantly extends the applicability of PDMPs, but again retain the cost of needing to construct a numerical bound at each iteration, and requires a priori knowledge of the target measure.

As such, despite these developments, the Zig-Zag sampler is still limited in its applicability to a narrow class of models, which has so far hampered the study of PDMP-based MCMC from a practical point of view.

\section{Numerical Zig-Zag}
\label{sec:NuZZ}

The Zig-Zag sampler described in Section \ref{sec:reviewZZ} relies on the availability of an easy way to sample from Equation \eqref{eq:pdfTau}, or on having knowledge of a bound for $\lambda_i$. In this section we extend the Zig-Zag algorithm to sample from arbitrary targets via numerical approximation of the switching times.

\subsection{The Sellke Zig-Zag process}\label{sec:SeZZ}

In order to ameliorate the issues described in Section \ref{intro} and \ref{sec:reviewZZ}, it is possible to reformulate the problem of finding the time to the next velocity switch in a more convenient way.  This approach is based on the Sellke construction~\parencite{sellke1983}, and we note that picking $\tau_i$ with law \eqref{eq:pdfTau} is equivalent to finding $\tau_i >0$ such that
\begin{equation}
\label{eq:rootfunct}
g(\tau_i) = \int_0^{\tau _i} \lambda_i(\mathbf{x}(s),\mathbf{v}) \, \dint{s} -R_i = 0 \, ,
\end{equation}
where $R_i \sim {\rm Exp}(1)$.  Ergo, finding $\tau_i$ is equivalent to finding the root of the function $g(\tau_i)$.

We demonstrate this equivalence in the appendix in Section \ref{app:SeZZ}, in which we describe SeZZ, the Zig-Zag process where the switching times are given by roots of \eqref{eq:rootfunct}, and where we show that the SeZZ generator is stationary with respect to the target measure. The Numerical Zig-Zag, or NuZZ, arises from numerical approximations of the roots of \eqref{eq:rootfunct}.

\subsection{Numerical integration}
\label{sec:integration}
 Unlike other Zig-Zag processes, which can rely on bounds which are either unknown or might not exist, numerical approximations of the roots of \eqref{eq:rootfunct} can be accomplished for a very wide class of models. There are, however, a number of problems to overcome. The function $g$ is in the set $C^1(\mathbb{R})\backslash C^2(\mathbb{R})$, and in practice it often increases very steeply when moving into the tails of the distribution.  The numerical method used to solve the integral in \eqref{eq:rootfunct} up to some $\widetilde{\tau}$ should be chosen on the basis of its efficient and accurate approximation of such a problem. The function $g$ is only once differentiable, and therefore numerical integrators like Runge-Kutta may perform unreliably. In particular, the first derivative of $\lambda_i$ is only piecewise continuous, which may also create problems for common root finding methods like Newton-Raphson.

The method we use to find the time to the next switch combines the QAGS (Quadrature Adaptive Gauss-Kronrod Singularities) integration routine from the GSL library \parencite{gsl}, with Brent's method \parencite{press2007} for finding the root.

\subsection{Brent's Algorithm}

Common root-finding algorithms such as Newton-Raphson and the secant method have good theoretical properties, such as superlinear convergence to the root. However, if the target function is not extremely well behaved, these algorithms may converge very slowly, or not converge at all. On the other hand, iteratively bisecting intervals is a more reliable method to find the root, as it converges linearly even when the function is not so well behaved. However, bisection has slower linear convergence.

Brent's method \parencite{press2007} was originally developed by Van Wijngaarden and Dekker in the 1960s, and it was only successively improved by Brent in 1973. Dekker's root-finding method combined the secant method with bisection, to converge super linearly when the function is well behaved, while maintaining linear convergence if the target is particularly difficult. 
Unfortunately, it is possible for Dekker's algorithm to choose to apply the secant method even when the change to the current guess of the root is arbitrarily small, leading to very slow convergence.

Brent's method introduces an additional check on the change that the secant method would apply to the current root guess. If the update is too small, the algorithm then performs a bisection step instead, which guarantees linear convergence in the worst case scenario.
%This ensures that in the worst case scenario, Brent's method only takes $N^2$ iterations to converge, where $N$ is the number of iterations that the bisection method would take to converge. 

Moreover, Brent's method is superior to previous methods in that when it determines the new root guess via the secant method, it uses inverse quadratic interpolation instead of the usual linear interpolation. 

%If the secant method determines the new guess for the root according to the formula
%\begin{equation*}
%\mathbf{x}_{n+1} = \mathbf{x}_{n} - f(\mathbf{x}_{n}) \frac{\mathbf{x}_{n} - \mathbf{x}_{n-1}}{f(\mathbf{x}_{n}) - f(\mathbf{x}_{n-1})} \, ,
%\end{equation*}
%Brent's method updates it as
%\begin{align*}
%	\mathbf{x}_{n+1} & = 
	%
%	\frac{ f(\mathbf{x}_{n-1}) f(\mathbf{x}_{n}) }{ (f(\mathbf{x}_{n-2}) - f(\mathbf{x}_{n-1}) ) (f(\mathbf{x}_{n-2}) - f(\mathbf{x}_{n})) } \mathbf{x}_{n-2} \\
	%
%	& + \frac{f(\mathbf{x}_{n-2}) f(\mathbf{x}_{n})}{ (f(\mathbf{x}_{n-1}) - f(\mathbf{x}_{n-2})) ( f(\mathbf{x}_{n-1}) - f(\mathbf{x}_{n}) ) } \mathbf{x}_{n-1} \\
	%
%	& + \frac{f(\mathbf{x}_{n-2}) f(\mathbf{x}_{n-1})}{ (f(\mathbf{x}_{n}) - f(\mathbf{x}_{n-2})) (f(\mathbf{x}_{n}) - f(\mathbf{x}_{n-1})) } \mathbf{x}_{n} \, .
%\end{align*}
Brent's method deals with numerical instabilities caused by the denominators of the secant update being too small by bracketing the root. That is to say, by storing and updating the range where the root is known to be by the intermediate value theorem, as the iterations are performed and the search for the root continues.

%As many root finding algorithms, Brent's method will stop when the error between two consecutive guesses is smaller than a user defined tolerance $\varepsilon_\mathrm{Bre}$.

In this work, for a Zig-Zag process as defined in Section \ref{switchTimep}, the root returned by Brent's method for each of the components, $\widetilde{\tau}_i$, satisfies the following inequality
\begin{equation}
    \left| \int_0^{\tilde{\tau}_i} \widetilde{\lambda}_i(\mathbf{x}(s), \mathbf{v} ) \, \dint s - R_i \right| \leqslant \varepsilon_{\mathrm{Bre}} \, , \qquad i=1,...,d \, ,
\end{equation}
where $\varepsilon_{\mathrm{Bre}}$ is the user-specified numerical tolerance, and $\widetilde{\lambda}_i$ is the piecewise linear approximation to to $\lambda$ induced by the numerical integration routine.

In Section \ref{sec:err} we discuss how we model the numerical tolerances for the NuZZ process, and what effect they have on the resulting MCMC sample. In the next sub-section, we discuss our actual implementation, where we reduce the number of integrals to be computed from $d$ to $1$.

\subsection{Efficient implementation}

The Zig-Zag Sampler discussed in Section \ref{switchTimep} relies on sampling a switching time $\tau_i$ for each component, then taking the minimum $\tau = \min_{i=1,\ldots,d} \tau_i$. This approach requires solving $d$ integrals for each iteration. However, it is possible to reduce the cost to only one integral per iteration, by using the following procedure. First, sample $\tau$ such that
\begin{equation}
\label{eq:sellketGeneral}
\int_0^\tau \Lambda(\mathbf{x}(s),\mathbf{v}) \, \dint{s} = R \sim \mathrm{Exp}(1) \, ,
\end{equation}
where
$\Lambda(\mathbf{x}(s),\mathbf{v}) := \sum_{i=1}^d \lambda_i(\mathbf{x}(s),\mathbf{v})$ is the total rate of switching in any component. Then sample the index of the first component to switch, $i^*$, as a multinomial distribution with $i$th cell probability $\lambda_i(\mathbf{x}(\tau),\mathbf{v}) / \Lambda(\mathbf{x}(\tau),\mathbf{v})$. The last step can be achieved efficiently by taking $i^*$ such that
\ifdefined\doublecolumn
\begin{multline}
i^* = \min \bigg\{ l \in \{1,\ldots,d \} \,\bigg|\, \\ \sum_{i=1}^l
\frac{\lambda_i(\mathbf{x}(\tau),\mathbf{v})}{\Lambda(\mathbf{x}(\tau),\mathbf{v})}
\ge \mathrm{Unif}(0,1) \bigg\} \, ,
\end{multline}
\else
\begin{equation}
i^* = \min \bigg\{ l \in \{1,\ldots,d \} \,\bigg|\, \sum_{i=1}^l
\frac{\lambda_i(\mathbf{x}(\tau),\mathbf{v})}{\Lambda(\mathbf{x}(\tau),\mathbf{v})}
\ge \mathrm{Unif}(0,1) \bigg\} \, ,
\end{equation}
\fi
and finding $i^*$ by bisection search. This sampling method is used in the Gillespie algorithm, for Monte Carlo simulation of chemical reaction networks~\parencite{gillespie1977}, and is a standard result for Poisson processes. 

\begin{proposition}
	\label{prop:gillespie}
	A Zig-Zag process where switches happen at rate
	$\Lambda(\mathbf{x}(s),\mathbf{v}) = \sum_{i=1}^d
	\lambda_i(\mathbf{x}(s),\mathbf{v})$, and the index of the velocity component
	to switch is distributed as 
	\begin{equation}
	\label{multinomialindex}
	i^* \sim \mathrm{Multi} \left ( \frac{\lambda_i(\mathbf{x}(\tau),\mathbf{v})}{ \Lambda(\mathbf{x}(\tau),\mathbf{v})} \right ) \, ,
	\end{equation}
	targets the same invariant distribution $\pi(\mathbf{x})$ as the Zig-Zag process
	with generator given in \eqref{eq:generatorZZ}.
\end{proposition}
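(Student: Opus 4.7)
The plan is to compare generators: the standard Zig-Zag generator $\mathcal{U}_{\mathrm{ZZ}}$ in equation \eqref{eq:generatorZZ} uniquely determines (on a suitable core) the law of the process and hence its invariant distribution, so it suffices to show that the proposed ``aggregate-rate plus multinomial-index'' process has exactly the same generator. The invariance of $\pi$ is then inherited from the standard ZZ process analysed in \textcite{bierkens2019ergodicity}.

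First I would write down the generator of the proposed PDMP using the general formula of \textcite{davis1993}. The deterministic part of the dynamics is unchanged and contributes $\sum_i v_i \partial_i f$. The jumps fire at total rate $\Lambda(\mathbf{x},\mathbf{v}) = \sum_i \lambda_i(\mathbf{x},\mathbf{v})$, and conditional on a firing at state $(\mathbf{x},\mathbf{v})$ the velocity is sent to $F_i(\mathbf{v})$ with probability $\lambda_i(\mathbf{x},\mathbf{v})/\Lambda(\mathbf{x},\mathbf{v})$. Combining these pieces yields
\begin{equation*}
\mathcal{U}f(\mathbf{x},\mathbf{v}) = \sum_{i=1}^d v_i \frac{\partial f}{\partial x_i} + \Lambda(\mathbf{x},\mathbf{v}) \sum_{i=1}^d \frac{\lambda_i(\mathbf{x},\mathbf{v})}{\Lambda(\mathbf{x},\mathbf{v})} \bigl( f(\mathbf{x},F_i(\mathbf{v})) - f(\mathbf{x},\mathbf{v}) \bigr) \, ,
\end{equation*}
and cancelling $\Lambda(\mathbf{x},\mathbf{v})$ inside each summand recovers exactly $\mathcal{U}_{\mathrm{ZZ}}$. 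As a pathwise sanity check I would invoke the superposition property of Poisson processes: $d$ independent inhomogeneous clocks with rates $\lambda_i^{\mathbf{z}}(t)$ coincide in law with a single clock of rate $\Lambda^{\mathbf{z}}(t)$ together with an independent multinomial label drawn at the firing instant with weights $\lambda_i(\mathbf{z}_\tau)/\Lambda(\mathbf{z}_\tau)$, which is precisely the Gillespie-type construction described above.

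The main obstacle is, essentially, bookkeeping rather than mathematics: one must argue that equality of the two generators on a rich enough core implies equality of the associated Markov semigroups, and hence of invariant measures. Under the mild assumptions on $\pi$ already used for the standard ZZ sampler (local integrability of the switching rates, no explosion, and smoothness of the linear flow), the PDMP well-posedness results of \textcite{davis1993} give uniqueness on the class of bounded functions smooth in $\mathbf{x}$, so the stationarity of $\pi$ for $\mathcal{U}_{\mathrm{ZZ}}$ transfers directly to the aggregate-rate process. I would cite these frameworks rather than redevelop them, keeping the proof to the short generator calculation above.
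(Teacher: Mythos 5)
Your proof is correct and follows essentially the same route as the paper: write down the PDMP generator for the aggregate-rate process via Davis's formula, cancel $\Lambda(\mathbf{x},\mathbf{v})$ against the multinomial weights, and observe that the result is exactly $\mathcal{U}_{\mathrm{ZZ}}$ from \eqref{eq:generatorZZ}. The extra remarks you add about Poisson superposition and generator-to-semigroup uniqueness are sound and fill in detail the paper leaves implicit, but the core argument is the same.
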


\begin{proof}
	This follows from standard results on Poisson processes. For a proof tailored to this case see Appendix \ref{gillespieProof}
\end{proof}

Importantly, sampling $\tau$ in this way reduces the number of integrals to be computed numerically from $d$ to 1, at the price of a slight increase in the complexity of the integrand. Moreover, there may be more efficient implementations than bisection to find $i^*$, which may reduce the computational complexity further.

\medskip

\begin{algorithm}%[H] H option not allowed on 2 cols
	\caption{Numerical Zig-Zag (NuZZ)}
	\DontPrintSemicolon
	\BlankLine
	Initialise: $t=0$; $\mathbf{x}(0)=\mathbf{x}_{\mathrm{init}}$;
	$\mathbf{v}(0)=\mathbf{v}_{\mathrm{init}}$; $N=$ number of desired switching points.\;
	\For{$k = 1:N$}{
		Sample $R \sim \mbox{Exp}(1)$\;
		Calculate numerically $\tau$ such that $\int_0^\tau \Lambda(\mathbf{x}(s),\mathbf{v}) \,
		\dint{s} -R=0$ \;
		Set $\mathbf{x}(t+\tau) = \mathbf{x}(t) + \tau \mathbf{v}(t)$\;
		Sample $i^* \sim \mathrm{Multi} \left(
		\frac{\lambda_i(\mathbf{x}(t+\tau),\mathbf{v}(t))}{
			\Lambda(\mathbf{x}(t+\tau),\mathbf{v}(t))} \right)$\;
		Set $\mathbf{v}(t+\tau) = \mathbf{F}_{i^*}(\mathbf{v}(t))$\;
		Record switching point $\mathbf{X}_k = \mathbf{x}(t+\tau)$,
		$\mathbf{V}_k = \mathbf{v}(t+\tau)$, $T_k = t+\tau$.\;
		%\If{$\mathcal{U}(0,1) < \frac{\lambda_{i^*}(\bm{x}(\tau),\bm{v})}{\Lambda(\bm{x}(\tau),\bm{v})}$}{
		%Accept: record $\bm{x}_k$ as a switching point\;
		%$k=k+1$\;
		%} %{ Reject: record $\bm{x}_k=\bm{x}_{k-1}$ } %\eIf
	}
\end{algorithm}

\medskip

It is worth pointing out that the benefits of this implementation of the Zig-Zag Sampler, i.e.\ working with the sum of rates, are not only applicable to the NuZZ algorithm. The approach can be applied to other Zig-Zag-based algorithms and PDMPs, and in fact it synergises particularly well with the Zig-Zag algorithm with control variates, as showed in Appendix \ref{proofZZCVbound}. 

Even though the efficient implementation has the same theoretical cost as the original (without taking numerical approximations into consideration), our experiments suggest that the efficient implementation may be significantly more efficient in practice.

\subsection{Reusing information}

As integration over a similar range is repeated at every root finding iteration, we have modified the integration routine so that information can be shared through subsequent iterations of Brent's method. 

The QAGS routine performs a 10--21 Gauss-Kronrod numerical integral on the interval $[0,\widetilde{\tau}]$ at each root finding iteration, where $\widetilde{\tau}$ is the current proposed root. If the estimated integration error is greater than the threshold $\varepsilon_{\mathrm{Int}}$, the interval $[0,\widetilde{\tau}]$ is split into subintervals, on which new 10--21 numerical integrals are computed, and so on recursively. 

Note that the value of the numerically approximated integral of the true rate $\Lambda$ on $[0,\widetilde{\tau}]$ is equal to the value of the exact integral of $\widetilde{\Lambda}$, the piecewise linear approximation of the rate $\Lambda$ obtained joining the nodes of the numerical integral.

If at a given iteration of the root finding algorithm the following inequality holds,
\begin{equation*}
g(\widetilde{\tau})= \int_0^{\tilde{\tau}} \widetilde{\Lambda}(s) \, \dint s - R < 0 \, ,
\end{equation*}
then the quantities computed between $0$ and $\widetilde{\tau}$ can be saved and reused during the next iteration of the root finder. At the next iteration, the new proposed root would be $\widetilde{\tau}'$, and the objective function can be decomposed as
\begin{align*}
g(\widetilde{\tau}') & = \int_0^{\tilde{\tau}'} \widetilde{\Lambda}(s) \, \dint s -R \\
& = \int_0^{\tilde{\tau}} \widetilde{\Lambda}(s) \, \dint s + \int_{\tilde{\tau}}^{\tilde{\tau}'} \widetilde{\Lambda}(s) \, \dint s -R \, ,
\end{align*}
where now we only have to compute the second integral, which typically requires fewer resources to achieve the desired level of precision. 
%Because of this, the Gauss-Kronrod routine's cost was decreased from the original 10-21 points to a 7-15 points one.
Because of this, the Gauss-Kronrod number of points was decreased from the original 10--21 to 7--15 points.

\section{Error and convergence analysis} 
\label{sec:err}

In this section we study how the error induced by numerical procedures impacts the quality of the samples which are produced by NuZZ. In order to do this, we will consider NuZZ as a stochastic process that is a perturbation of the original process, the Zig-Zag sampler.

Our aim is to show that the ergodic averages which we calculate via NuZZ are close to the true equilibrium expectations of interest, with an error that can be quantified, and whose dependence on the numerical tolerances can be made transparent. 

However, the continuous time NuZZ process is not Markov, as numerical integration induces a mesh on the deterministic part of the trajectory that depends on both end-points 
It is possible to enlarge the state space to make the NuZZ process Markov, but calculations quickly become unwieldy.
Therefore we will follow an indirect route using the skeleton chain. 

Recall that the full Zig-Zag process $\mathbf{Z}_t = (\mathbf{X}_t,
\mathbf{V}_t)$ is continuous-time, however the differential equations
\eqref{linearDyn} describing its behaviour in between velocity switches are trivially solvable and so the full process can be reconstructed from the skeleton chain of points at which velocities change, $\mathbf{Y}_k = (\mathbf{X}_{T_k}, \mathbf{V}_{T_k})$. Explicitly, 
\begin{equation}
t \in [T_k, T_{k+1}) \Rightarrow \mathbf{X}_{t} =
 \mathbf{X}_{T_{k}} + (t-T_{k}) \mathbf{V}_{T_{k}} \, .
\label{interp}
\end{equation}
Our strategy for bounding the impact of numerical errors on ergodic averages therefore involves bounding the discrepancy between the skeleton chain from the exact process, and the skeleton chain from the NuZZ process. One can then reconstruct ergodic averages via interpolation using~\eqref{interp}.

\subsection{Results}

Our final result, contained in Theorem \ref{th:mu_ergodic_measure}, can be summarised with the following corollary. %, though compactness is not quite necessary, as explained at the end of this section.
Let $\varepsilon_\mathrm{int}, \varepsilon_\mathrm{Bre}$ be the user-defined error tolerances for the numerical (QAGS) integration routine and for Brent's method. 

Let us assume now (we will motivate these assumptions later) that
\begin{enumerate}
    \item $\eta_\mathrm{int} \leqslant \varepsilon_\mathrm{int}$, i.e.\ the true numerical integration error is always smaller than the tolerance,
    
    \item Every element of the Hessian of $\log \pi(\mathbf{x})$ is globally bounded from above by the constant $M$.
    
    \item The skeleton chain $\mathbf{Y}_k$ of the Zig-Zag process.
without numerical errors, $\mathbf{Z}_t$, is Wasserstein geometrically ergodic. %with constants $\rho$ and $C$,
    
    \item The skeleton chain $\widetilde{\mathbf{Y}}_k$ of the Zig-Zag process with numerical errors (the NuZZ process), $\widetilde{\mathbf{Z}}_t$, is Wasserstein geometrically ergodic.

    \item The total refreshment rate $\Gamma$ is strictly positive.
\end{enumerate}
Then we will be able to obtain the following result.

\begin{corollary}
\label{cor:final_results2}
Let $\mathbf{Z}_t$ be the exact Zig-Zag process with stationary measure $\mu$, and let $\widetilde{\mathbf{Z}}_t$ be the NuZZ process. There exists a probability measure $\widetilde{\mu}$ on $E$ such that for all functions $f$ bounded and Lipschitz, the limit
\begin{equation}
    \lim_{t \to \infty} \int_0^t f(\widetilde{\mathbf{z}}_s) \, \dint s
\end{equation}
exists and takes the value
\begin{equation}
    \widetilde{\mu}(f) = \int_E f(\widetilde{\mathbf{z}}) \, \widetilde{\mu}(\dint \widetilde{\mathbf{z}}) \, .
\end{equation}
Moreover, defining the Kantorovich-Rubinstein distance as
\begin{equation}
    d_{KR}(\mu_1, \mu_2) = \sup_{\|f\|_\infty \le 1, \, |f|_{Lip} \le 1} | \mu_1(f) - \mu_2(f)| \, , 
\end{equation}
it holds that
\begin{equation}
\label{eq:order_bound}
    d_{KR}(\mu, \widetilde{\mu}) \leq \mathcal{O} \left (\Gamma^{-\frac{1}{2}} \max (\varepsilon_\mathrm{Tot}, \sqrt{M d \, \varepsilon_\mathrm{Tot}}) \right ) \, ,
\end{equation}
where $\varepsilon_\mathrm{Tot}$ is the sum between the numerical tolerance for the integration routine, $\varepsilon_\mathrm{int}$, and the tolerance for Brent's method, $\varepsilon_\mathrm{Bre}$.
%
%There exists constants $M_1, M_2, M_3 > 0$ which depend only on the ergodicity properties of $\mathbf{Z}_t$ and $\widetilde{\mathbf{Z}}_t$, and a class $\mathcal{F}$ which includes (but is not limited to) bounded Lipschitz functions with compact domains, such that for every $f : E \to \mathbb{R}$, $f \in \mathcal{F}$, 

%\ifdefined\doublecolumn

%\begin{align}
%    \left| \lim_{t \to +\infty} \frac{1}{t} \int_0^t f(\mathbf{z}_s) \, \dint s - \lim_{t\rightarrow +\infty} \frac{1}{t} \int_0^t f(\widetilde{\mathbf{z}}_s) \, \dint s \right| \nonumber \\
%    %
%    \leqslant M_3 \left(M_1 |f|_\mathrm{Lip} + M_2 \| f \|_{\infty} \right)
%\end{align}

%\else

%\begin{equation}
%    \left| \lim_{t \to +\infty} \frac{1}{t} \int_0^t f(\mathbf{z}_s) \, \dint s - \lim_{t\rightarrow +\infty} \frac{1}{t} \int_0^t f(\widetilde{\mathbf{z}}_s) \, \dint s \right|
    %
%    \leqslant M_3 \left(M_1 |f|_\mathrm{Lip} + M_2 \| f \|_{\infty} \right)
%\end{equation}
%\fi
%where $\widetilde{\mathbf{Z}}_t$ is the NuZZ process, with
%\begin{equation}
%M_3 =  \sqrt{2} \, \Gamma^{-\frac{1}{2}} \left (  (\varepsilon_\mathrm{Bre} + \varepsilon_\mathrm{int})^2 + 16 M d \, (\varepsilon_\mathrm{Bre} + \varepsilon_\mathrm{int}) \right )^{1/2} \, .
%\end{equation}
\end{corollary}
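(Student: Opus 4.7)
The plan is to reduce everything to the skeleton chains and then lift back to the continuous-time process via the interpolation formula~\eqref{interp}. Concretely, I would (i) synchronously couple the exact Zig-Zag skeleton $\mathbf{Y}_k$ and the NuZZ skeleton $\widetilde{\mathbf{Y}}_k$ started from the same point, (ii) bound the resulting one-step Wasserstein discrepancy between their transition kernels, (iii) invoke a standard perturbation lemma for Wasserstein-geometrically-ergodic chains (available under Assumptions~3 and~4) to turn this one-step bound into a bound on the $W_1$-distance between the skeleton invariant measures, and finally (iv) transfer the bound from the skeleton level back to $\mu$ and $\widetilde{\mu}$ using the renewal structure guaranteed by $\Gamma > 0$ (Assumption~5).

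The technical heart is step~(ii). Under the coupling both processes attempt to solve $G(t) = \int_0^t \Lambda(\mathbf{x}+s\mathbf{v},\mathbf{v}) \, \dint s = R$ with the same driver $R \sim \mathrm{Exp}(1)$; the NuZZ instead returns $\widetilde{\tau}$ with $|\widetilde{G}(\widetilde{\tau}) - R| \leq \varepsilon_\mathrm{Bre}$, where $\widetilde{G}$ is the adaptive piecewise-linear surrogate for $G$ satisfying $\|\widetilde{G} - G\|_\infty \leq \varepsilon_\mathrm{int}$ by Assumption~1. Since $\Lambda \geq \Gamma$ by Assumption~5, $G$ has derivative uniformly bounded below by $\Gamma$, so a mean-value estimate yields $|\tau - \widetilde{\tau}| \leq \varepsilon_\mathrm{Tot}/\Gamma$. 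Assumption~2 then enters through the Lipschitz constant of $\Lambda$, which is of order $Md$ because $\Lambda$ is a sum of $d$ terms each controlled by the Hessian bound $M$; this Lipschitz constant governs both (a) the refined piecewise-linear error inside the active quadrature cell, which produces the $\sqrt{Md\,\varepsilon_\mathrm{Tot}}$ regime when the adaptive mesh is locally refined to meet the tolerance, and (b) the total-variation gap in the multinomial draw of $i^*$ evaluated at $\mathbf{x}+\widetilde{\tau}\mathbf{v}$ rather than $\mathbf{x}+\tau\mathbf{v}$. Assembling these ingredients gives a one-step skeleton $W_1$ bound of the form $C\,\Gamma^{-1}\max(\varepsilon_\mathrm{Tot},\sqrt{Md\,\varepsilon_\mathrm{Tot}})$.

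Step~(iii) is then classical: telescoping $\nu - \widetilde{\nu} = \nu P^k - \widetilde{\nu}\widetilde{P}^k$ and using the geometric contraction rate of $P$ from Assumption~3 converts the one-step bound into a $W_1(\nu,\widetilde{\nu})$ bound of the same order, where $\nu$ and $\widetilde{\nu}$ denote the two skeleton invariant measures. Existence of $\widetilde{\mu}$ and convergence of the continuous-time ergodic averages follow from Assumption~4 applied to $\widetilde{\mathbf{Y}}_k$ together with~\eqref{interp}: one defines $\widetilde{\mu}$ as the expectation, under $\widetilde{\nu}$ for the starting point and the exponential driver, of a uniformly chosen point along the segment to the next skeleton point, normalised by the mean inter-arrival time $\sim 1/\Gamma$. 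The factor $\Gamma^{-1/2}$ appearing in~\eqref{eq:order_bound} comes out of this lifting step: observables against $\mu$ and $\widetilde{\mu}$ are length-weighted averages against the respective skeleton measures, and a Cauchy-Schwarz argument on the second moment of the inter-arrival distribution produces the square-root scaling in $\Gamma$.

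The hardest part will be step~(ii) in the square-root regime. When $\varepsilon_\mathrm{Tot}$ is much smaller than $Md$, the quadrature cell containing $\widetilde{\tau}$ has width which is itself determined adaptively by $\varepsilon_\mathrm{int}$ and the local variation of $\Lambda$, so the bound on $|\widetilde{G}-G|$ inside that cell becomes self-referential and must be disentangled by an explicit mesh-refinement argument that is uniform in the starting state. A secondary difficulty, needed to make step~(iv) fully rigorous, is that $\widetilde{\mathbf{Z}}_t$ is not Markov on $E$; this can be handled either by working with the Markov extension on an enlarged state space that records the adaptive mesh and then projecting, or equivalently by exploiting the fact that the skeleton chain alone is Markov and reconstructing $\widetilde{\mu}$ by the renewal argument sketched above.
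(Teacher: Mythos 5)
Your overall scaffolding---bound the one-step discrepancy between the skeleton transition kernels under a synchronous coupling driven by a shared exponential $R$, feed this through a Wasserstein perturbation lemma for geometrically ergodic chains, and lift to the continuous-time occupation measures via the renewal/interpolation structure---is the right skeleton and does match the paper's route (Propositions \ref{prop:wass_dist_P_Ptilde}, \ref{prop:WdistYchain}, \ref{pr:W_U_Utilde}, Lemma \ref{lm:ergodic_avg_helper}, Theorem \ref{th:mu_ergodic_measure}). However, the two ingredients that actually produce the specific order $\Gamma^{-1/2}\max(\varepsilon_{\mathrm{Tot}}, \sqrt{Md\,\varepsilon_{\mathrm{Tot}}})$ are misidentified, and a structural piece is missing.

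First, the square-root regime $\sqrt{Md\,\varepsilon_{\mathrm{Tot}}}$ does not come from an adaptive quadrature-cell argument. It comes from the interaction between the \emph{discrete} velocity and the \emph{2-Wasserstein} distance: under a maximal coupling of the jump kernels $Q(\cdot\mid\mathbf{x}',\mathbf{v})$ and $Q(\cdot\mid\widetilde{\mathbf{x}}',\mathbf{v})$, the probability that the sampled flip indices differ is Lipschitz in $\|\mathbf{x}'-\widetilde{\mathbf{x}}'\|$ with constant $\mathcal{O}(\Gamma^{-1}Md^{3/2})$ (Lemma \ref{lem:lip_jump}, driven by the Hessian bound $M$). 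Because $\|\mathbf{v}'-\widetilde{\mathbf{v}}'\|^2$ is a bounded indicator, the velocity contribution to $d_W^2(\delta_\mathbf{z}P,\delta_\mathbf{z}\widetilde{P})$ is \emph{linear}, not quadratic, in $\|\mathbf{x}'-\widetilde{\mathbf{x}}'\|\sim\Gamma^{-1}\varepsilon_{\mathrm{Tot}}\sqrt{d}$, whereas the position contribution is quadratic. Taking the square root of the sum yields the stated max of linear and square-root rates. In a pure $W_1$ framework, which you work in throughout, the velocity term would also be linear in $\mathbb{P}(\mathbf{v}'\neq\widetilde{\mathbf{v}}')$, so the square-root regime would not appear; your mesh-refinement mechanism is not what drives it and it is unclear it could be made to close. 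The paper's choice of 2-Wasserstein is thus essential, not incidental.

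Second, the $\Gamma^{-1/2}$ is already present in the one-step kernel bound---two powers of $\Gamma^{-1}$ appear (one from the mean-value estimate $|\tau-\widetilde{\tau}|\le\varepsilon_{\mathrm{Tot}}/\Lambda_{\mathrm{min}}$ with $\Lambda_{\mathrm{min}}=\Gamma$, one from the Lipschitz constant of the jump kernel), and the square root of $d_W^2$ then gives the $-1/2$ power. Your attribution of this factor to a Cauchy--Schwarz on inter-arrival times during the lifting step is not the paper's mechanism, and would double-count. Finally, the lifting step in the paper is performed via the \emph{extended} skeleton chain $\mathbf{U}_k = (\mathbf{Y}_{k-1},\mathbf{Y}_k)$ and the interpolation operator $\mathcal{J}$ acting on $E\times E$: the time-integral of $f$ along a leg of the trajectory is a function of both endpoints, so one needs $d_W(\zeta,\widetilde{\zeta})$ (Proposition \ref{pr:W_U_Utilde}), not merely $d_W(\nu,\widetilde{\nu})$, and this requires a further coupling step. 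Your step (iv) as stated would only control observables of the switching points, not time-averages along the continuous trajectory.
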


This bound motivates the following considerations.

\begin{remark}
We note that the bound in Lemma \ref{lm:tau_tau_tilde}, which then determines the bound in Corollary \ref{cor:final_results2}, may not be tight, and so it may be possible to use other techniques to obtain similar stability results in the degenerate case in which $\Gamma=0$. It would also be natural to devise a specialised root-finding routine which is adapted to this setting. We leave such specialisations for future work. 
\end{remark}

\begin{remark}
\label{rm:tuning_recom}

Consider the bound in Equation \eqref{eq:order_bound}, and a sequence of model problems in dimension $d$, with Hessian bounds growing as $M \sim d^a$, numerical tolerances tightening as $\varepsilon_\mathrm{Tot} \sim d^{-b}$, and aggregate refreshment rates shrinking as $\Gamma \sim d^{-c}$ with $a, b, c \geqslant 0$. We choose to consider bounded $\Gamma$ as it represents a minimal perturbation to the non-reversible character of the original process. 

In order to have the error bound on the ergodic averages decaying to $0$ as the dimension grows, we must have 
\begin{equation}
\max \left( \frac{c}{2} - b, \frac{a + 1 - b + c}{2} \right) < 0 \, .
\end{equation}
Noting that the cost per unit time of simulating the process grows with $b$, efficient algorithms will seek to make this quantity as small as possible. Since we must have $b > a + c + 1$, it suffices to take $c = 0$ and $b > a + 1$, i.e. let $\Gamma$ be of constant order, and take $\varepsilon_\mathrm{Tot} = o \left( 1 / (Md) \right)$.

One outcome of this argument is to confirm the intuition that for more irregular targets (i.e.\ with larger $M$), a tighter numerical tolerance $\varepsilon_\mathrm{Tot}$ is required in order to faithfully resolve the event times and jump decisions of the process numerically, and hence to accurately reproduce equilibrium expectation quantities.
\end{remark}

In order to produce these results, we start by recalling some facts about
Markov chains and Wasserstein distances that will be of use later.

\subsection{Wasserstein distance}

In \cite{rudolf2018} the authors derive perturbation results for Markov chains which are formulated in terms of the 1-Wasserstein distance between transition kernels. We found that in our calculations, it is more convenient to work instead with the 2-Wasserstein distance. Hence, in this section, we report some basic results on Markov chains and Wasserstein distances, and prove that the relevant results in \cite{rudolf2018} do indeed extend to the 2-Wasserstein case. For brevity, many of the results of this section have been moved to Appendix \ref{app:wasserstein}, as they are only technical conditions and are not used directly in the results that we chose to include in the main body of this work. Let us now define the following.

Let $(E, \mathcal{B}(E))$ be a Polish space with its associated Borel $\sigma$-algebra, and $\mathcal{P}$ the set of Borel probability measures on $E$. 
The 2-Wasserstein distance between $\mu_1,\mu_2 \in \mathcal{P}$ is
\begin{equation}
\label{eq:wasserstein_dist_def}
d_W(\mu_1,\mu_2) = \left ( \inf_{\xi \in \Xi(\mu_1,\mu_2)} \int_{E \times E} \| \mathbf{x}_1-\mathbf{x}_2 \|_2^2 \, \dint{\xi (\mathbf{x}_1,\mathbf{x}_2)} \right )^{\frac{1}{2}} \, ,
\end{equation}
where $\Xi$ is the set of all couplings of $\mu_1$ and $\mu_2$ on $E \times E$ that have $\mu_1$ and $\mu_2$ as marginals, and $\| \cdot \|_2$ is the 2-norm. We will use the following notation for the expectation of a function $f$ with respect to the measure $\mu_1$:
\begin{equation}
    \mu_1 (f) := \int_E f(\mathbf{x}_1) \, \dint \mu_1 (\mathbf{x}_1) \, .
\end{equation}
Let $P:\mathcal{P} \to \mathcal{P}$ be a linear operator and a transition kernel on $(E, \mathcal{B}(E))$ such that measures of interest ($\mu_1$, $\mu_2$ etc.) are elements of $\mathcal{P}$. Then
\begin{equation}
    \left( \mu_1 \right) P(A) = \int_E P(\mathbf{x}_1,A) \, \dint \mu_1 (\mathbf{x}_1) \, , \qquad A \in \mathcal{B}(E)\, .
\end{equation}
This implies that $\delta_{\mathbf{x}_1} P(A) = P(\mathbf{x}_1,A)$, and for a measurable function $f: E \to \mathbb{R}$,
\begin{equation}
    \int_E f(\mathbf{x}_1) \, \dint{(\mu_1 P)(\mathbf{x}_1)} = \int_E \left(P f \right) (\mathbf{x}_1) \, \dint{\mu_1(\mathbf{x}_1)} \, ,
\end{equation}
where
\begin{equation}
    \left(P f \right)(\mathbf{x}_1) = \int_E P(\mathbf{x}_1, \dint \mathbf{x}_2) f(\mathbf{x}_2) \, .
\end{equation}

\begin{definition}[Wasserstein Geometric Ergodicity of Markov kernels]
\label{def:Wass_geom_ergo}
We say that the transition kernel $P$ is Wasserstein geometrically ergodic with constants $(C, \rho) \in [0, \infty) \times [0, 1)$ if, for all $n \in \mathbb{N}$, it holds that
\begin{equation}
\forall \mathbf{x}_1, \mathbf{x}_2 \in E, \quad d_W \left( P^n(\mathbf{x}_1,\cdot), P^n(\mathbf{x}_2,\cdot) \right) \leqslant C \rho^n \cdot \| \mathbf{x}_1 - \mathbf{x}_2 \|_2 \, .
\end{equation}
\end{definition}

\begin{proposition} 
\label{pr:close_2_measures_n_step}
Let $P$ and $\widetilde{P}$ be Markov kernels such that $P$ is Wasserstein geometrically ergodic, and $d_{W}(\delta_\mathbf{x} P,\delta_\mathbf{x} \widetilde{P}) \leqslant \epsilon$ for all $\mathbf{x}$. Let $\pi_n$ and $\widetilde{\pi}_n$ be the measures defined by $\pi_n = \pi_0 P^n$ and $\widetilde{\pi}_n = \widetilde{\pi}_0 \widetilde{P}^n$. It then holds that
\begin{align}
    d_{W}(\pi_n,\widetilde{\pi}_n) \leqslant d_{W}(\pi_0,\widetilde{\pi}_0) C \rho^n + \epsilon C \frac{1-\rho^n}{1-\rho} \, .
\end{align}
\end{proposition}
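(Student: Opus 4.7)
The plan is to write $\pi_n - \widetilde{\pi}_n$ via a telescoping identity and then bound each resulting piece using either the geometric-ergodicity hypothesis on $P$ or the one-step perturbation bound between $P$ and $\widetilde{P}$.

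As a first step I would extend the two hypotheses from point masses to arbitrary probability measures. The contraction property on $P$ extends to give $d_W(\mu P^m, \nu P^m) \leqslant C \rho^m\, d_W(\mu,\nu)$ for arbitrary $\mu,\nu$: take an optimal coupling $\xi$ of $(\mu,\nu)$ and, for each $(\mathbf{x}_1,\mathbf{x}_2)$, an optimal coupling of $P^m(\mathbf{x}_1,\cdot)$ and $P^m(\mathbf{x}_2,\cdot)$; integrating the squared-distance bound from Definition \ref{def:Wass_geom_ergo} against $\xi$ yields the stated contraction. Similarly, the one-step bound extends to $d_W(\mu P, \mu \widetilde{P}) \leqslant \varepsilon$ for any $\mu$ by the standard convexity/gluing argument: couple $\delta_\mathbf{x} P$ and $\delta_\mathbf{x} \widetilde{P}$ optimally for each $\mathbf{x}$, mix over $\mu$, and apply Jensen to the squared Wasserstein cost.

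Second, I would introduce the telescoping decomposition, writing
\begin{equation*}
\pi_0 P^n - \widetilde{\pi}_0 \widetilde{P}^n \;=\; (\pi_0 - \widetilde{\pi}_0) P^n \;+\; \sum_{k=0}^{n-1} \widetilde{\pi}_0 \widetilde{P}^{\,k} \bigl( P - \widetilde{P} \bigr) P^{\,n-1-k} \, ,
\end{equation*}
and applying the triangle inequality for $d_W$ to obtain
\begin{equation*}
d_W(\pi_n, \widetilde{\pi}_n) \;\leqslant\; d_W\!\bigl(\pi_0 P^n, \widetilde{\pi}_0 P^n\bigr) \;+\; \sum_{k=0}^{n-1} d_W\!\bigl(\nu_k\, P\, P^{n-1-k},\, \nu_k\, \widetilde{P}\, P^{n-1-k}\bigr) \, ,
\end{equation*}
where $\nu_k := \widetilde{\pi}_0 \widetilde{P}^{\,k}$.

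Third, I would bound each piece. The first term is controlled directly by the extended contraction: it is at most $C\rho^n\, d_W(\pi_0,\widetilde{\pi}_0)$. For the $k$-th summand, I first apply the extended contraction to the trailing $P^{n-1-k}$, obtaining the bound $C \rho^{\,n-1-k}\, d_W(\nu_k P, \nu_k \widetilde{P})$, and then the extended one-step perturbation bound gives $d_W(\nu_k P, \nu_k \widetilde{P}) \leqslant \varepsilon$. Summing the resulting geometric series yields
\begin{equation*}
\sum_{k=0}^{n-1} \varepsilon C \rho^{\,n-1-k} \;=\; \varepsilon C \frac{1-\rho^n}{1-\rho} \, ,
\end{equation*}
which combines with the first term to give the stated estimate.

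The proof is essentially routine once the two extension lemmas are in place; if anywhere is delicate it is the convexity step promoting $d_W(\delta_\mathbf{x} P, \delta_\mathbf{x} \widetilde{P}) \leqslant \varepsilon$ to a bound for all mixtures, which requires a measurable selection of couplings and an application of Jensen's inequality to the squared cost. Since the paper works in the $2$-Wasserstein distance, I would take care to square, apply the bound pointwise, integrate, and then take square roots, so that the constants $C$, $\rho$, and $\varepsilon$ propagate cleanly without picking up any superfluous factors.
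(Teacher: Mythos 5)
Your proof is correct and follows essentially the same route as the paper: the same telescoping identity $\pi_n - \widetilde{\pi}_n = (\pi_0 - \widetilde{\pi}_0)P^n + \sum_{k=0}^{n-1}\widetilde{\pi}_k(P-\widetilde{P})P^{n-1-k}$, the same extension of the one-step bound to mixtures via a coupling/convexity argument, and the same use of the Wasserstein contraction of $P^{n-1-k}$ (the paper's Proposition~\ref{pr:W_contract_submult}) before summing the geometric series. The only differences are cosmetic (overall sign convention in the telescoping identity and $\varepsilon$ versus $\epsilon$).
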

\begin{proof}
    See Appendix \ref{app:wasserstein}.
\end{proof}

\begin{proposition}
\label{pr:P_unique_pi}
If a Markov kernel $P$ is Wasserstein geometrically ergodic, then it has an ergodic invariant measure $\pi$ with finite second moment, which is therefore unique. Furthermore, for any measure $\nu$ with finite second moment, it holds that $\nu P^n \rightarrow \pi$ in the Wasserstein distance.
\end{proposition}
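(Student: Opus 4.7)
The plan is to prove existence, uniqueness, and convergence by upgrading the pointwise contraction in Definition \ref{def:Wass_geom_ergo} to a contraction at the level of probability measures, and then running a standard Banach-style fixed-point argument in the complete metric space $(\mathcal{P}_2(E), d_W)$ of probability measures on the Polish space $E$ with finite second moment. Here ``ergodic'' is taken in the usual sense that $\pi$ is the unique invariant measure and attracts all suitable starting distributions.

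First, I would establish the measure-level contraction
\begin{equation*}
d_W(\mu_1 P^n, \mu_2 P^n) \leqslant C \rho^n d_W(\mu_1, \mu_2)
\end{equation*}
for all $\mu_1, \mu_2 \in \mathcal{P}_2(E)$. The proof uses the gluing lemma: take an optimal coupling $\xi$ of $(\mu_1, \mu_2)$ and, for each pair $(\mathbf{x}_1, \mathbf{x}_2)$, an optimal coupling $\gamma_{\mathbf{x}_1, \mathbf{x}_2}$ of $P^n(\mathbf{x}_1, \cdot)$ and $P^n(\mathbf{x}_2, \cdot)$; then $\widetilde{\gamma}(\dint \mathbf{y}_1, \dint \mathbf{y}_2) = \int \gamma_{\mathbf{x}_1, \mathbf{x}_2}(\dint \mathbf{y}_1, \dint \mathbf{y}_2) \, \dint \xi(\mathbf{x}_1, \mathbf{x}_2)$ is a valid coupling of $\mu_1 P^n$ and $\mu_2 P^n$ whose squared cost is bounded by $C^2 \rho^{2n} d_W(\mu_1, \mu_2)^2$ via the hypothesised pointwise bound. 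A small auxiliary lemma is also needed to keep everything inside $\mathcal{P}_2(E)$: fixing a reference point $\mathbf{x}_0$ for which $P(\mathbf{x}_0, \cdot) \in \mathcal{P}_2(E)$ (this is implicit in the hypothesis, since otherwise $d_W(P(\mathbf{x}_0, \cdot), P(\mathbf{x}, \cdot))$ would be infinite), the estimate $\int \|\mathbf{z}\|_2^2 \, \dint(\nu P)(\mathbf{z}) \leqslant 2 C^2 \rho^2 \int \|\mathbf{x} - \mathbf{x}_0\|_2^2 \dint\nu(\mathbf{x}) + 2 \int \|\mathbf{z}\|_2^2 P(\mathbf{x}_0, \dint\mathbf{z})$ shows that $\nu \in \mathcal{P}_2(E) \Rightarrow \nu P \in \mathcal{P}_2(E)$.

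Given any $\nu \in \mathcal{P}_2(E)$, set $\nu_n := \nu P^n$. The measure-level contraction gives $d_W(\nu_n, \nu_{n+1}) \leqslant C \rho^n d_W(\nu, \nu P)$, with $d_W(\nu, \nu P) < \infty$ since both arguments lie in $\mathcal{P}_2(E)$. Telescoping, $d_W(\nu_n, \nu_m) \leqslant C d_W(\nu, \nu P) \sum_{k=n}^{m-1} \rho^k \to 0$, so $(\nu_n)$ is Cauchy in $(\mathcal{P}_2(E), d_W)$. By completeness of this space on a Polish target (a standard result), $\nu_n \to \pi$ for some $\pi \in \mathcal{P}_2(E)$. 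The continuity of $\mu \mapsto \mu P$ on $(\mathcal{P}_2(E), d_W)$ (inherited directly from the contraction) lets us pass to the limit in $\nu_n P = \nu_{n+1}$, yielding $\pi P = \pi$.

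Uniqueness of the invariant measure within $\mathcal{P}_2(E)$ and convergence from arbitrary $\nu \in \mathcal{P}_2(E)$ both follow by one more application of the contraction: if $\pi'$ is another invariant measure in $\mathcal{P}_2(E)$, then $d_W(\pi, \pi') = d_W(\pi P^n, \pi' P^n) \leqslant C \rho^n d_W(\pi, \pi') \to 0$ forces $\pi = \pi'$; and $d_W(\nu P^n, \pi) = d_W(\nu P^n, \pi P^n) \leqslant C \rho^n d_W(\nu, \pi) \to 0$ yields the advertised convergence. The main obstacle is purely technical: carefully propagating the finite-second-moment condition through $P$ and justifying the gluing/optimal coupling construction so that the pointwise hypothesis genuinely integrates against the optimal $\xi$; once these ingredients are in place, the remainder is a routine contraction-mapping argument in the Wasserstein space.
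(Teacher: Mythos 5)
Your overall strategy is the same as the paper's: a contraction-mapping argument in $(\mathcal{P}_2(E), d_W)$. The paper invokes the Banach fixed point theorem directly and then uses a coupling argument to verify $\pi P = \pi$, while you spell out the underlying machinery more carefully---the gluing-lemma construction of the measure-level contraction, the propagation of finite second moments through $P$, and the Cauchy-sequence argument together with completeness of $\mathcal{P}_2(E)$. Those ingredients are correct and make the argument more self-contained than the paper's terse version.

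The genuine gap concerns the ergodicity claim. The proposition asserts that $\pi$ is an \emph{ergodic} invariant measure, and the paper's proof addresses this in the classical sense: if $A$ is a $P$-invariant set, then $\pi(A) \in \{0,1\}$. This is the form that is actually used later, when the paper invokes the Birkhoff ergodic theorem in the proof of Lemma \ref{lm:ergodic_avg_helper}. You instead redefine ``ergodic'' to mean ``unique invariant measure that attracts all suitable starting distributions,'' which is not the standard definition and does not by itself license the Birkhoff argument. The fix is short and fits your framework: if $A$ were a $P$-invariant set with $0 < \pi(A) < 1$, then $\pi(\cdot \mid A)$ would also be an invariant probability measure, and it has a bounded density with respect to $\pi$ (namely $\mathbf{1}_A / \pi(A)$), hence finite second moment; since $\pi(\cdot \mid A) \neq \pi$, this contradicts the uniqueness you established within $\mathcal{P}_2(E)$. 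Adding this observation closes the gap and recovers the full claim of the proposition.
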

\begin{proof}
See Appendix \ref{app:wasserstein}.
\end{proof}

We are now ready to discuss results specific to the Zig-Zag process.
%, and in order to do that we briefly recall the relevant notation in the following section.

\subsection{Zig-Zag notation}

Recall that $\mathbf{Z}_t=(\mathbf{X}_t, \mathbf{V}_t)$ is the exact (Zig-Zag) continuous time Markov process defined on the state space $E = \mathcal{X} \times \mathcal{V}$, with stationary measure $\mu (\dint \mathbf{x}, \dint \mathbf{v}) = \pi (\mathbf{x}) \otimes \psi (\mathbf{v})$. Let $T_k$ be the $k$th switching time of the Zig-Zag process, and let $\mathbf{Y}_k = (\mathbf{X}_{T_k}, \mathbf{V}_{T_k})$ for $k=1,...,n$ be the skeleton Markov chain, with stationary measure $\nu (\dint \mathbf{x}, \dint \mathbf{v})$.

%Let us also define the objects $P$ and $\mathbf{U}_k$, which are related to the other mathematical objects reviewed in this section, and will be used in the following proofs.

Let $P:E\rightarrow E$ denote the Markov transition kernel of the skeleton chain $\mathbf{Y}_k$. Intuitively, applying $P$ to a state $(\mathbf{x},\mathbf{v})$ of the process involves sampling a new $\tau$, pushing the dynamics forward from $(\mathbf{x}, \mathbf{v})$ to $(\mathbf{x}+\tau \mathbf{v}, \mathbf{v})$, and sampling a new velocity from the distribution associated with the transition kernel $Q(\mathbf{v}'|\mathbf{x}+\tau \mathbf{v},\mathbf{v})$. 

%For our purposes, it is important to note that $\delta_{\mathbf{y}} P(A) = P(\mathbf{y},A)$.

Let us also define $\mathbf{U}_k = ( \mathbf{Y}_{k-1}, \mathbf{Y}_k)$ on $E \times E$ (or equivalently, $E^2$) as the `extended' skeleton chain, which includes two consecutive states of the $\mathbf{Y}_k$ chain, with invariant measure $\zeta$. As mentioned above, our proof strategy is based on the interpolation of the skeleton chain, and the process $\mathbf{U}_k$ will be a key component.

Lastly, we place tildes above all of the quantities above to denote their perturbed versions under numerical approximation. In this spirit, if $\mathbf{Z}_t$ is the exact Zig-Zag process, $\widetilde{\mathbf{Z}}_t$ is the NuZZ process. If $\mathbf{Y}_k$ is the skeleton chain of the Zig-Zag process, then $\widetilde{\mathbf{Y}}_k$ is the skeleton chain of NuZZ, and if $P$ is the Markov transition kernel that moves the skeleton chain of the exact process one step forward, then $\widetilde{P}$ is its numerical equivalent. The same applies to the other relevant objects. We assume that all of these quantities exist, and we will address their properties below. 
In particular, in the next section we will describe the analytical form that numerical errors take for the NuZZ algorithm, which will be useful in bounding discrepancies between the exact and numerical approximation of the objects described above.

For notational simplicity, we use the unlabelled norm $\| \cdot \|$ to represent a 2-norm, and norms defined on elements of the extended state space $E$ and $E \times E$ (denoted as $E^2$) are defined as
\begin{equation}
\label{eq:2-norm-E}
    \| \mathbf{z} \|_E = \| \mathbf{x} \|_2 + \| \mathbf{v} \|_2 \, ,
\quad
%\label{eq:2-norm-E-square}
    \| \mathbf{u} \|_{E^2} = \| (\mathbf{x}, \mathbf{x}') \|_E + \| (\mathbf{v}, \mathbf{v}') \|_E \, .
\end{equation}

\subsection{Perturbation bound for the skeleton chain}
\label{err:numericalError}

Under appropriate conditions that we detail below, a sufficient condition for the processes $\mathbf{Z}_t$ and $\widetilde{\mathbf{Z}}_t$ to produce comparable ergodic behaviour is that $\mathbf{Z}_t$ contracts towards equilibrium sufficiently rapidly, and that the transition kernels $P$ and $\widetilde{P}$ should be close to each other in a uniform sense. 
%By analysing the numerical error, we will show in Proposition \ref{prop:wass_dist_P_Ptilde} that the latter property holds for the NuZZ.
By construction, the latter property holds for the NuZZ if on average, (i) the new locations $\mathbf{x}', \widetilde{\mathbf{x}}'$ are close to one another, and (ii) the new velocities $\mathbf{v}', \widetilde{\mathbf{v}}'$ are close to one another. 
By analysing the numerical error, we will show in Proposition \ref{prop:wass_dist_P_Ptilde} that this holds. 
%in this case.
%We confirm this in the following lemmas.

Recall that in Section \ref{sec:integration} we described how NuZZ relies on two different numerical methods to solve Equation \eqref{eq:rootfunct}: an adaptive integration routine and Brent's method, each requiring a user-set tolerance $\varepsilon_{\mathrm{int}}$ and $\varepsilon_{\mathrm{Bre}}$, respectively. 

In our specific setting, fixing $R$ at the beginning of each MCMC iteration, the error at each iteration of Brent's method can be written as
\begin{equation}
\label{errorbrent}
\eta_{\mathrm{Bre}} = \int_0^{\tilde{\tau}} \widetilde{\Lambda}^\mathbf{z}(s) \, \dint{s} -R \, ,
\qquad R \sim \mbox{Exp}(1) \, ,
\end{equation}
where $\tilde{\Lambda}$ is the piecewise polynomial approximation to $\Lambda$ induced by the integration routine, and $\widetilde{\tau}$ $(\neq \tau)$ is the root currently returned by the algorithm. Using Equation \eqref{eq:sellketGeneral} and adding and subtracting terms to Equation \eqref{errorbrent} we obtain the following error decomposition

\ifdefined\doublecolumn
\begin{align*}
\eta_{\mathrm{Bre}} & = \underbrace{\int_0^{\tilde{\tau}} \widetilde{\Lambda}^\mathbf{z}(s) \, \dint{s}
	- \int_0^{\tilde{\tau}} \Lambda^\mathbf{z}(s) \, \dint{s}}_{\eta_{\mathrm{int}}} \\
& \quad + \underbrace{\int_0^{\tilde{\tau}} \Lambda^\mathbf{z}(s) \, \dint{s} -
	\int_0^\tau \Lambda^\mathbf{z}(s) \, \dint{s}}_{\eta_{\mathrm{root}}} \, .
\end{align*}
\else
\begin{equation*}
\eta_{\mathrm{Bre}} = \underbrace{\int_0^{\tilde{\tau}} \widetilde{\Lambda}^\mathbf{z}(s) \, \dint{s}
	- \int_0^{\tilde{\tau}} \Lambda^\mathbf{z}(s) \, \dint{s}}_{\eta_{\mathrm{int}}}
+ \underbrace{\int_0^{\tilde{\tau}} \Lambda^\mathbf{z}(s) \, \dint{s} -
	\int_0^\tau \Lambda^\mathbf{z}(s) \, \dint{s}}_{\eta_{\mathrm{root}}} \, .
\end{equation*}
\fi
The term $\eta_{\mathrm{int}}$ represents the component of the total Brent error deriving from the numerical integration routine, while the term $\eta_{\mathrm{root}}$ represents the component of the total Brent error deriving from the fact that, conditional on integrating on the exact rate, the integral is calculated up to $\widetilde{\tau}$ instead of to $\tau$.

Let us now make an assumption on the smoothness of the event rates of the process.
\begin{assumption}
\label{as:eta_bounded_epsilon}
The function $\Lambda$ is 
such
%sufficiently smooth 
that the numerical error $\eta_\mathrm{int}$ is always bounded by the given tolerance $\varepsilon_{\mathrm{int}}$.
\end{assumption}

In our specific case, the integration routine is adaptive, and it stops improving the result when the size of the estimated $\eta_{\mathrm{int}}$, the integration error between the two layers of Gauss-Kronrod nodes, is below the tolerance threshold $\varepsilon_{\mathrm{int}}$. An irregular $\Lambda$ with very tall and thin spikes can in principle fool the integrator, but the QAGS routine places more nodes around areas of high variation of the target, to minimise the impact of the irregularities on the final value of the integral.
Therefore, only a very pathological $\Lambda$, e.g.\ one with very large, or infinite, values of certain derivatives are likely to cause Assumption \ref{as:eta_bounded_epsilon} to be false, and hence we expect this to hold for most densities
arising in statistical applications, including all targets studied in this work. Let us now proceed with the discussion.

Brent's method will continue iterating until $|\eta_{\mathrm{Bre}}| \le \varepsilon_{\mathrm{Bre}}$. If Assumption \ref{as:eta_bounded_epsilon} holds, the root error $\eta_{\mathrm{root}}$ at the last iteration of Brent's method can be bounded as
\begin{equation*}
|\eta_{\mathrm{root}}| = |\eta_{\mathrm{Bre}} - \eta_{\mathrm{int}} |
\le \varepsilon_{\mathrm{Bre}} + \varepsilon_{\mathrm{int}} \, .
\end{equation*}
Effectively, as $\varepsilon_{\mathrm{Bre}}, \varepsilon_{\mathrm{int}} \to 0$, then $\eta_{\mathrm{root}} \to 0$. This does not guarantee that $\widetilde{\tau} \to \tau$, which will be the subject of Lemma \ref{lm:tau_tau_tilde}.

\begin{lemma}
\label{lm:tau_tau_tilde}
Let $\tau$ be the exact switching time found by solving Equation \eqref{eq:rootfunct}, and $\widetilde{\tau}$ be the approximated switching time obtained by solving Equation \eqref{errorbrent}. 
The difference between the exact and numerically approximate switching time can be bounded (deterministically) as
\begin{equation}
\label{eq:tau_tau_tilde}
    |\tau - \widetilde{\tau}| \leqslant \frac{\varepsilon_\mathrm{Bre} + \varepsilon_\mathrm{int}}{\Lambda_\mathrm{min}} \, .
\end{equation}
\end{lemma}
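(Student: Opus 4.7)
The plan is to chain two observations. First, I would unpack the already-established root-error bound: from the decomposition $\eta_{\mathrm{Bre}} = \eta_{\mathrm{int}} + \eta_{\mathrm{root}}$ together with $|\eta_{\mathrm{Bre}}| \leqslant \varepsilon_{\mathrm{Bre}}$ and Assumption \ref{as:eta_bounded_epsilon}, the triangle inequality gives $|\eta_{\mathrm{root}}| \leqslant \varepsilon_{\mathrm{Bre}} + \varepsilon_{\mathrm{int}}$. The key step is then to rewrite $\eta_{\mathrm{root}}$ as a single integral over the interval between $\tau$ and $\widetilde{\tau}$, namely
\begin{equation*}
\eta_{\mathrm{root}} \;=\; \int_{\tau}^{\widetilde{\tau}} \Lambda^{\mathbf{z}}(s) \, \dint{s},
\end{equation*}
(with the convention that the integral picks up a sign if $\widetilde{\tau} < \tau$).

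Second, I would apply a uniform lower bound on $\Lambda^{\mathbf{z}}$. By construction (see Equation \eqref{lambda_gamma} and the surrounding discussion), one has $\Lambda^{\mathbf{z}}(s) = \sum_i \lambda_i^{\mathbf{z}}(s) \geqslant \Gamma > 0$, so the natural candidate for $\Lambda_{\mathrm{min}}$ is the refreshment rate $\Gamma$ (or more generally, any deterministic lower bound the author wishes to use). Inserting this bound gives
\begin{equation*}
|\eta_{\mathrm{root}}| \;=\; \left| \int_{\tau}^{\widetilde{\tau}} \Lambda^{\mathbf{z}}(s) \, \dint{s} \right| \;\geqslant\; \Lambda_{\mathrm{min}} \, |\widetilde{\tau} - \tau|,
\end{equation*}
and rearranging then substituting the bound on $|\eta_{\mathrm{root}}|$ yields exactly \eqref{eq:tau_tau_tilde}.

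There isn't really a hard step here — this is essentially a one-line inverse function theorem argument for a monotone function, where the slope is bounded below by $\Lambda_{\mathrm{min}}$. The only point worth flagging is that the bound degrades as $\Lambda_{\mathrm{min}} \to 0$, which is precisely why the corollary in Section \ref{sec:err} assumes $\Gamma$ strictly positive and the remark following Corollary \ref{cor:final_results2} notes that this bound might not be tight in the degenerate regime $\Gamma = 0$. One should also be slightly careful to state that the bound is pathwise/deterministic given the realised $R$ and the numerical trajectory, not merely in expectation — but this follows automatically since every inequality used is pointwise.
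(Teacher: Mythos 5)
Your proof is correct and takes essentially the same route as the paper's: you obtain $|\eta_{\mathrm{root}}| \leqslant \varepsilon_{\mathrm{Bre}} + \varepsilon_{\mathrm{int}}$, express $\eta_{\mathrm{root}}$ as the integral of $\Lambda^{\mathbf{z}}$ over $[\tau, \widetilde{\tau}]$, and bound this below by $\Lambda_{\mathrm{min}}|\tau - \widetilde{\tau}|$. The paper phrases the last step via the Mean Value Theorem applied to $t \mapsto \int_0^t \Lambda^{\mathbf{z}}(s)\,\dint s$ rather than bounding the integrand directly, but the two are equivalent.
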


\begin{proof}
See Appendix \ref{app:perturbation_bounds}.
\end{proof}

Note that when implemented without a refreshment procedure, Equation \eqref{eq:tau_tau_tilde} can diverge as $\Lambda_\mathrm{min}$ may degenerate to $0$.
The degradation of this bound corresponds to the possibility of serious discrepancies between the inter-jump times for the exact and numerical chain, which is a concrete possibility for PDMPs which have vanishing event rates over nontrivial parts of the state space. This problem can be circumvented by taking a baseline position-independent refreshment rate of $\gamma_i(\mathbf{z}) \equiv \gamma_i>0$, with $\Lambda_{\mathrm{min}} = \Gamma = \sum \gamma_i = d \gamma$. 

One practical interpretation of this bound is that if the numerical tolerances are high, then a large value of $\Gamma$ is necessary in order to avoid large discrepancies in computation of the inter-jump times. Likewise, as the numerical tolerances become smaller, the baseline refreshment rate $\Gamma$ can also be taken smaller.

Let us now prove the second component needed for Proposition \ref{prop:wass_dist_P_Ptilde}. 
Intuitively, if $\tau$ and $\widetilde{\tau}$ are close to each other, then the Zig-Zag process and NuZZ both starting from $(\mathbf{x}, \mathbf{v})$ will remain close at $(\mathbf{x} + \tau \mathbf{v}, \mathbf{v})$ and $(\mathbf{x} + \widetilde{\tau} \mathbf{v}, \mathbf{v})$ respectively. With the following lemma, we ensure that when the two processes reach their respective new switching points, the new velocities which are sampled according to the transition kernel $Q(\mathbf{v}'|\mathbf{x},\mathbf{v})$ are likely to remain the same.

\begin{lemma}
\label{lem:lip_jump}
Assume that the entries of the Hessian of $\log \pi(\mathbf{x})$ are uniformly bounded in absolute value by a constant $M$. Assume also that the process is implemented with coordinate-wise refreshment rates of $\gamma_i > 0$, with sum $\Gamma = \sum_i \gamma_i = d\gamma$. Then, for all $(\mathbf{x}, \mathbf{v}, \mathbf{v}')$, it holds that
\begin{align*}
    \lvert Q(\mathbf{v}'|\mathbf{x},\mathbf{v}) - Q(\mathbf{v}'|\mathbf{x}',\mathbf{v}) \rvert \leqslant 2 \Gamma^{-1} M d^{3/2} \|\mathbf{x} - \mathbf{x}'\|_2.
\end{align*}
\end{lemma}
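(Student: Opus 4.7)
The plan is to reduce the bound on $|Q(\mathbf{v}'\mid\mathbf{x},\mathbf{v}) - Q(\mathbf{v}'\mid\mathbf{x}',\mathbf{v})|$ to Lipschitz continuity of the individual switching rates $\lambda_i$ with respect to position, and then apply a standard quotient estimate. First, I would note that since $\mathbf{v}$ and $\mathbf{v}'$ differ by a single coordinate flip (otherwise $Q \equiv 0$ at both arguments and the bound is trivial), the jump kernel is explicitly $Q(\mathbf{v}'\mid\mathbf{x},\mathbf{v}) = \lambda_{i^\ast}(\mathbf{x},\mathbf{v})/\Lambda(\mathbf{x},\mathbf{v})$ where $i^\ast$ is the flipped index. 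So the task reduces to bounding the difference of two ratios of the form $a/A$ and $a'/A'$ with $A,A' \geqslant \Gamma$.

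Next, I would establish that each $\lambda_i(\cdot,\mathbf{v})$ is Lipschitz in $\mathbf{x}$. Under the Hessian bound, for any $i$ one has $|\partial_i \log \pi(\mathbf{x}) - \partial_i \log \pi(\mathbf{x}')| \leqslant M \|\mathbf{x}-\mathbf{x}'\|_1 \leqslant M\sqrt{d}\,\|\mathbf{x}-\mathbf{x}'\|_2$ by the mean value theorem applied row-by-row. Since $x \mapsto 0 \vee x$ is $1$-Lipschitz and $|v_i|=1$, the refreshment term $\gamma_i$ cancels in differences, which gives
\begin{equation*}
|\lambda_i(\mathbf{x},\mathbf{v}) - \lambda_i(\mathbf{x}',\mathbf{v})| \leqslant M\sqrt{d}\,\|\mathbf{x}-\mathbf{x}'\|_2,
\end{equation*}
and summing over $i$ yields $|\Lambda(\mathbf{x},\mathbf{v}) - \Lambda(\mathbf{x}',\mathbf{v})| \leqslant M d^{3/2}\,\|\mathbf{x}-\mathbf{x}'\|_2$.

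Finally, I would apply the algebraic identity
\begin{equation*}
\frac{a}{A} - \frac{a'}{A'} = \frac{a-a'}{A'} + \frac{a}{A}\cdot\frac{A'-A}{A'}
\end{equation*}
with $a = \lambda_{i^\ast}(\mathbf{x},\mathbf{v})$, $A = \Lambda(\mathbf{x},\mathbf{v})$, and primed versions evaluated at $\mathbf{x}'$. Using $a/A \leqslant 1$ and the lower bound $A' \geqslant \Gamma$ guaranteed by the strictly positive refreshment, the two Lipschitz estimates from the previous step combine to give a bound of $(M\sqrt{d} + Md^{3/2})\,\Gamma^{-1}\|\mathbf{x}-\mathbf{x}'\|_2$, and subsuming the $M\sqrt{d}$ term into $Md^{3/2}$ (valid for $d\geqslant 1$) delivers the stated constant $2\Gamma^{-1}Md^{3/2}$.

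I do not anticipate a serious obstacle here; the main subtlety is simply to carry the absolute value of the positive-part map through without losing the factor $|v_i|=1$, and to remember that the $\gamma_i$ contributions cancel in the numerator differences but are retained in the denominator to supply the $\Gamma^{-1}$ factor that prevents the bound from degenerating on flat regions of $\log\pi$.
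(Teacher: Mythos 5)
Your proof is correct and arrives at exactly the stated constant, but by a genuinely different route from the paper's. The paper writes $h(\mathbf{x}) = Q(\mathbf{v}'\mid\mathbf{x},\mathbf{v}) = \lambda_i(\mathbf{x},\mathbf{v})/\Lambda(\mathbf{x},\mathbf{v})$, applies the quotient rule to bound each $|\partial h/\partial x_j| \leqslant 2\Gamma^{-1}Md$, converts to a $2$-norm bound $\|\nabla h\|_2 \leqslant 2\Gamma^{-1}Md^{3/2}$, and integrates $\nabla h$ along the line segment from $\mathbf{x}'$ to $\mathbf{x}$. You instead establish finite-difference Lipschitz estimates for $\lambda_i$ and $\Lambda$ directly from the Hessian bound and the $1$-Lipschitz property of the positive-part map, and then feed them into the algebraic identity $\frac{a}{A}-\frac{a'}{A'} = \frac{a-a'}{A'} + \frac{a}{A}\cdot\frac{A'-A}{A'}$, using $a/A \leqslant 1$ and $A' \geqslant \Gamma$. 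The two arguments are two faces of the same quotient structure, but yours has a mild technical advantage: it never differentiates $\lambda_i$, and so sidesteps the non-differentiability of $t \mapsto 0 \vee t$ at the kink, which the paper's derivative-based calculation handles only implicitly (it holds a.e.\ and is then integrated, but this is not spelled out). The only care points you flagged — carrying $|v_i|=1$ through and noting the $\gamma_i$ cancel in the numerator but supply the $\Gamma^{-1}$ in the denominator — are handled correctly, and the final absorption of $M\sqrt{d}+Md^{3/2}$ into $2Md^{3/2}$ for $d\geqslant1$ is sound.
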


\begin{proof}
See Appendix \ref{app:perturbation_bounds}.
\end{proof}

We can now use Lemma \ref{lm:tau_tau_tilde} and \ref{lem:lip_jump} to show that the the transition kernels $P$ and $\widetilde{P}$ are close to each other.

\begin{proposition}[Wasserstein closeness of $P$ and $\widetilde{P}$]
\label{prop:wass_dist_P_Ptilde}
Let $\widetilde{P}$ be the Markov transition kernel for NuZZ, parametrised by the numerical tolerances $\varepsilon_\mathrm{Bre}$ and $\varepsilon_\mathrm{int}$. Let $\Gamma > 0$ be the baseline switching rate of both ZZS and NuZZ.
For any point $\mathbf{z} = (\mathbf{x},\mathbf{v}) \in E$, it holds that
\begin{align*}
    \label{wasserstein_distance}
    d_W \left( \delta_\mathbf{z} P, \delta_\mathbf{z} \widetilde{P} \right) \leqslant \epsilon,
\end{align*}
% $P \left( \mathbf{z} \to \cdot \right), \widetilde{P} \left( \mathbf{z} \to \cdot \right)$
where $d_W$ is the 2-Wasserstein distance as defined in Equation \eqref{eq:wasserstein_dist_def}, and

%\begin{equation}
%    \epsilon = \frac{\varepsilon_\mathrm{Bre} + \varepsilon_\mathrm{int}}{ \gamma } \sqrt{d} \, .
%\end{equation}

\begin{equation}
    \epsilon = \left ( 2 \Gamma^{-1} d (\varepsilon_\mathrm{Bre} + \varepsilon_\mathrm{int})^2
    + 32 \Gamma^{-1} M d^2 (\varepsilon_\mathrm{Bre} + \varepsilon_\mathrm{int}) \right )^{1/2}
\label{eq:epsdef}
\end{equation}

\end{proposition}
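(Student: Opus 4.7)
The plan is to exhibit an explicit coupling of $\delta_\mathbf{z} P$ and $\delta_\mathbf{z} \widetilde{P}$, since by the infimum in~\eqref{eq:wasserstein_dist_def} any such coupling produces an upper bound on $d_W$. I would use synchronous randomness for the switching time: draw a single $R \sim \mathrm{Exp}(1)$ and feed it to both the exact Sellke equation (yielding $\tau$) and Brent's method applied to the numerical rate (yielding $\widetilde{\tau}$), so that the two switching times are tied deterministically via Lemma~\ref{lm:tau_tau_tilde}. For the flipped index I would instead employ the \emph{maximal} coupling between the categorical laws $Q(\cdot\mid\mathbf{x}+\tau\mathbf{v},\mathbf{v})$ and $Q(\cdot\mid\mathbf{x}+\widetilde{\tau}\mathbf{v},\mathbf{v})$, which realises the disagreement probability as the total variation distance between them. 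Both marginals remain the correct one-step laws under $P$ and $\widetilde{P}$ respectively, so this is a valid joint law on $E \times E$.

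Writing the coupled outputs as $(\mathbf{x}',\mathbf{v}')$ and $(\widetilde{\mathbf{x}}',\widetilde{\mathbf{v}}')$, I would then split
\[
\mathbb{E}\|(\mathbf{x}',\mathbf{v}') - (\widetilde{\mathbf{x}}',\widetilde{\mathbf{v}}')\|_2^2 = \mathbb{E}\|\mathbf{x}'-\widetilde{\mathbf{x}}'\|_2^2 + \mathbb{E}\|\mathbf{v}'-\widetilde{\mathbf{v}}'\|_2^2.
\]
For the position term, $\mathbf{x}' - \widetilde{\mathbf{x}}' = (\tau - \widetilde{\tau})\mathbf{v}$ with $\|\mathbf{v}\|_2^2 = d$, so Lemma~\ref{lm:tau_tau_tilde} gives a deterministic bound in terms of $d$, $(\varepsilon_\mathrm{Bre}+\varepsilon_\mathrm{int})^2$ and $\Gamma^{-1}$, which should match the first summand of $\epsilon^2$ in~\eqref{eq:epsdef}. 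For the velocity term, $\mathbf{v}'$ and $\widetilde{\mathbf{v}}'$ each differ from $\mathbf{v}$ in exactly one coordinate, so $\|\mathbf{v}'-\widetilde{\mathbf{v}}'\|_2^2 \in \{0, 8\}$, and under the maximal coupling the expectation collapses to $8 \cdot d_\mathrm{TV}\!\left(Q(\cdot|\mathbf{x}+\tau\mathbf{v},\mathbf{v}), Q(\cdot|\mathbf{x}+\widetilde{\tau}\mathbf{v},\mathbf{v})\right)$.

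The key step, which brings in the $Md^2/\Gamma$ factor, is the total variation estimate for the two index laws. I would obtain it by summing the pointwise Lipschitz estimate of Lemma~\ref{lem:lip_jump} over the $d$ velocities in the support of $Q$, and then substituting the deterministic position bound from Lemma~\ref{lm:tau_tau_tilde} for $\|\mathbf{x}'-\widetilde{\mathbf{x}}'\|_2$. This yields a TV bound linear in $M$, a power of $d$, $(\varepsilon_\mathrm{Bre}+\varepsilon_\mathrm{int})$ and a power of $\Gamma^{-1}$; multiplying by $8$ produces the second summand of $\epsilon^2$. Adding the two contributions and taking a square root gives the claimed bound.

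I expect the main obstacle to be purely bookkeeping --- keeping the exponents of $d$, $\Gamma$ and the tolerances exactly aligned with~\eqref{eq:epsdef} --- together with a careful check that the synchronous-$R$ construction for the times and the maximal coupling for the indices can be realised as a single joint distribution on $E \times E$ (this requires only that the coupling of indices is conditional on the pair $(\tau,\widetilde{\tau})$, which is automatic). The substantive mathematical content is fully supplied by Lemmas~\ref{lm:tau_tau_tilde} and~\ref{lem:lip_jump}, which isolate the position and velocity effects of the numerical perturbation respectively; the present proposition merely assembles them through the coupling argument into a single Wasserstein bound.
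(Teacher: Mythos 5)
Your plan matches the paper's proof essentially step for step: the paper also couples through a shared $R \sim \mathrm{Exp}(1)$ (so that $\tau$ and $\widetilde{\tau}$ are tied through Lemma~\ref{lm:tau_tau_tilde}), uses a maximal coupling for the post-jump velocities, splits the coupled cost into position and velocity contributions, bounds $\|\mathbf{v}'-\widetilde{\mathbf{v}}'\|_2^2$ by the constant $8$, and invokes Lemma~\ref{lem:lip_jump} to control the probability of an index mismatch. The substance is the same.

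There are two bookkeeping points where your plan diverges from the paper, and they are exactly the sort of thing you flagged as a concern. First, the Wasserstein distance here is taken with respect to the norm $\|\mathbf{z}\|_E = \|\mathbf{x}\|_2 + \|\mathbf{v}\|_2$ of Equation~\eqref{eq:2-norm-E}, not the Pythagorean product norm, so the paper splits via $(a+b)^2 \leq 2a^2 + 2b^2$; this is where the factors of $2$ in \eqref{eq:epsdef} come from, and your additive decomposition $\mathbb{E}\|\mathbf{z}'-\widetilde{\mathbf{z}}'\|^2 = \mathbb{E}\|\mathbf{x}'-\widetilde{\mathbf{x}}'\|_2^2 + \mathbb{E}\|\mathbf{v}'-\widetilde{\mathbf{v}}'\|_2^2$ would not produce them. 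Second, and more substantively, the paper's proof inserts the \emph{pointwise} estimate of Lemma~\ref{lem:lip_jump} directly as a bound on $\mathbb{P}(\mathbf{v}'\neq\widetilde{\mathbf{v}}')$, i.e.\ on the total variation between the two jump-index laws, without the sum over indices. Your plan of summing the pointwise Lipschitz estimate over the $d$ velocities in the support of $Q$ is the rigorous passage from a pointwise bound to a TV bound, and it picks up an additional factor of $d/2$, yielding roughly $\Gamma^{-1}Md^{5/2}$ where the paper places $2\Gamma^{-1}Md^{3/2}$. Carrying your plan to completion therefore gives a valid Wasserstein bound, but one carrying an extra power of $d$ relative to the stated form of $\epsilon$ in~\eqref{eq:epsdef}; reconciling this with the proposition's exact constants is the remaining discrepancy rather than any missing idea.
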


\begin{proof}
See Appendix \ref{app:perturbation_bounds}.
\end{proof}

\subsection{Invariant Measures and Convergence}

It is challenging to make statements on the asymptotic law $\widetilde{\mu}$ of $\widetilde{\mathbf{Z}}_t$, as NuZZ is not a Markov process. While there are ways of augmenting the state space such that $\widetilde{\mathbf{Z}}_t$ is a sub-component of a Markov process, we have found that in our case these approaches are somewhat opaque and algebraically complicated. Moreover, establishing the a priori existence of a stationary distribution for the numerical process is in fact not crucial for Monte Carlo purposes. With this in mind, we chose to pursue results directly in terms of ergodic averages along the trajectories of our processes. A byproduct of our approach is that we will be able to describe these ergodic averages as spatial averages against a probability measure $\widetilde{\mu}$, which encapsulates the long-run behaviour of the process $\widetilde{\mathbf{Z}}_t$, without requiring the strict property of being an invariant measure for the process.

In this Section we will use Proposition \ref{prop:wass_dist_P_Ptilde} to derive results about the stationary measures of $\mathbf{Y}_k, \widetilde{\mathbf{Y}}_k, \mathbf{U}_k$ and $\widetilde{\mathbf{U}}_k$. We will then define the interpolation operator $\mathcal{J}f$ and use it, together with the results on the stationary measures, to produce Theorem \ref{th:mu_ergodic_measure}, the main result of this section.

Recall that $P$ is defined as the Markov kernel corresponding to the skeleton chain of the exact Zig-Zag Process.
If the Zig-Zag process $\mathbf{Z}_t$ converges geometrically to the stationary distribution, then one generally expects that this assumption should also hold for the skeleton chain $\mathbf{Y}_k$. However, we do not know of existing results which guarantee this implication.

\begin{assumption}[Wasserstein geometric ergodicity for $\mathbf{Y}_k$]
\label{as:Wergodic_Y_and_P}
The transition kernel $P$ is Wasserstein geometrically ergodic with constants $\rho \in [0,1)$ and $C\in(0,\infty)$.
\end{assumption}

Moreover, let us make the following additional assumption.

\begin{assumption}[Wasserstein geometric ergodicity for $\widetilde{\mathbf{Y}}_k$]
\label{as:Wergodic_Y_and_P_tilde}
%The invariant measure $\widetilde{\nu}$ of the Markov transition kernel $\widetilde{P}$ is unique.
The transition kernel $\widetilde{P}$ is Wasserstein geometrically ergodic.
\end{assumption}

Once again, as $\widetilde{P}$ relates to the skeleton chain rather than the continuous time process, this assumption is not particularly strict. Note that Assumptions \ref{as:Wergodic_Y_and_P} and \ref{as:Wergodic_Y_and_P_tilde} imply that $\nu$ and $\widetilde{\nu}$ are unique.

%As the choices of $\Gamma, \varepsilon_{\mathrm{Int}}$ and $\varepsilon_{\mathrm{Brent}}$ determine how noisy the numerically approximated process is with respect to the exact process, it is natural to assume that they also influence the value of the constants $C$ and $\rho$. However, we expect this influence to be only weak, as the numerical tolerances are small, and the measures $\nu$ and $\widetilde{\nu}$ remain close to each other.

Under these assumptions, the following bound on the invariant measures of the skeleton chains holds.

\begin{proposition}[Wasserstein distance between skeleton chains]
\label{prop:WdistYchain}
Let $\nu$ and $\widetilde{\nu}$ be the unique invariant distributions of $\mathbf{Y}_k$ and $\widetilde{\mathbf{Y}}_k$.
If $P$ satisfies Assumption \ref{as:Wergodic_Y_and_P}, then:
\begin{equation}
    d_W(\nu,\widetilde{\nu}) \leqslant \frac{C \epsilon}{1-\rho} \, ,
\end{equation}
where $C$ and $\rho$ are the constants defined in Assumption \ref{as:Wergodic_Y_and_P}.
\end{proposition}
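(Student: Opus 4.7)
The plan is to combine the contraction property of $P$ with the one-step closeness bound by applying Proposition \ref{pr:close_2_measures_n_step} to the invariant measures themselves. The key observation is that if we initialise two chains at the invariant measures $\pi_0 = \nu$ and $\widetilde{\pi}_0 = \widetilde{\nu}$, then invariance gives $\nu P^n = \nu$ and $\widetilde{\nu} \widetilde{P}^n = \widetilde{\nu}$ for every $n$, so the left-hand side of the bound in Proposition \ref{pr:close_2_measures_n_step} stays fixed at $d_W(\nu,\widetilde{\nu})$ while the right-hand side gets more and more room to exploit the geometric decay of $P$.

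Concretely, first I would invoke Proposition \ref{pr:P_unique_pi}, applied to $P$ under Assumption \ref{as:Wergodic_Y_and_P} and to $\widetilde{P}$ under Assumption \ref{as:Wergodic_Y_and_P_tilde}, to conclude that $\nu$ and $\widetilde{\nu}$ both exist, are unique, and have finite second moment. This ensures that $d_W(\nu,\widetilde{\nu})$ is finite, which is what allows the rearrangement step at the end. Second, I would apply Proposition \ref{pr:close_2_measures_n_step}, whose hypotheses are supplied by Assumption \ref{as:Wergodic_Y_and_P} (geometric ergodicity of $P$ with constants $C,\rho$) and Proposition \ref{prop:wass_dist_P_Ptilde} (one-step closeness $d_W(\delta_\mathbf{z}P,\delta_\mathbf{z}\widetilde{P}) \leqslant \epsilon$), yielding
\begin{equation}
d_W(\nu,\widetilde{\nu}) \;\leqslant\; d_W(\nu,\widetilde{\nu})\,C\rho^n + \epsilon C\,\frac{1-\rho^n}{1-\rho}.
\end{equation}

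Third, I would pick $n$ large enough that $C\rho^n < 1$ (possible since $\rho < 1$), rearrange to obtain
\begin{equation}
d_W(\nu,\widetilde{\nu})\bigl(1 - C\rho^n\bigr) \;\leqslant\; \epsilon C\,\frac{1-\rho^n}{1-\rho},
\end{equation}
and then let $n\to\infty$ so that $\rho^n\to 0$, delivering $d_W(\nu,\widetilde{\nu}) \leqslant C\epsilon/(1-\rho)$ as required.

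The only genuinely delicate point is the finiteness of $d_W(\nu,\widetilde{\nu})$ before the rearrangement; this is why I would be careful to first isolate Proposition \ref{pr:P_unique_pi} to guarantee finite second moments of both invariant measures. Beyond that, the argument is a clean instance of the standard ``perturbation of invariant measures under geometric ergodicity'' template, and the remaining steps are routine algebraic manipulations.
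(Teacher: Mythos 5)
Your proof is correct and uses the same key ingredients as the paper (Propositions \ref{pr:close_2_measures_n_step}, \ref{prop:wass_dist_P_Ptilde}, and \ref{pr:P_unique_pi}), but takes a slightly different route in the final step. The paper initialises both chains at $\widetilde{\nu}$, so that the initial discrepancy $d_W(\pi_0,\widetilde{\pi}_0)$ vanishes, and then passes to the limit using $\widetilde{\nu}\,\widetilde{P}^n = \widetilde{\nu}$ and $\widetilde{\nu}\,P^n \to \nu$ (the latter supplied by Proposition \ref{pr:P_unique_pi}). You instead initialise at $\pi_0 = \nu$, $\widetilde{\pi}_0 = \widetilde{\nu}$, so both chains are fixed at every step, obtain the self-referential inequality $d_W(\nu,\widetilde{\nu}) \leqslant d_W(\nu,\widetilde{\nu})\,C\rho^n + \epsilon C(1-\rho^n)/(1-\rho)$, rearrange, and let $n\to\infty$. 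Your version requires an a priori finiteness argument for $d_W(\nu,\widetilde{\nu})$ before rearranging, which you correctly anticipate and supply via the finite-second-moment conclusion of Proposition \ref{pr:P_unique_pi}; the paper's version sidesteps this by making the initial discrepancy zero, but in exchange must invoke the Wasserstein convergence $\widetilde{\nu}\,P^n \to \nu$ and a continuity-of-$d_W$ argument at the end. The two routes are essentially of equal weight, and yours is a clean and complete proof.
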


\begin{proof}
    See Appendix \ref{app:perturbation_bounds}.
\end{proof}

Proposition \ref{prop:WdistYchain} above makes use of the $\epsilon$ described in Proposition \ref{prop:wass_dist_P_Ptilde}, and thus already provides a bound on the Wasserstein distance. However, the computed bound applies only to the stationary distribution of the skeleton chain. In what follows we try to extend these results on the skeleton chain $\widetilde{\mathbf{Y}}_k$ to the continuous time Zig-Zag process $\mathbf{Z}_t$ by making use of the extended skeleton chain $\mathbf{U}_k$ and the interpolation operator $\mathcal{J}$.

Let us now state a couple of technical results on the extended skeleton chains $\mathbf{U}_k$ and $\widetilde{\mathbf{U}}_k$.

\begin{lemma}
\label{lm:U_Utilde_unique_Pi}
The Markov chains $\mathbf{U}_k$ and $\widetilde{\mathbf{U}}_k$ each have a unique invariant measure.
\end{lemma}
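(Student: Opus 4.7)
The plan is to leverage the uniqueness of the invariant measures $\nu$ and $\widetilde{\nu}$ for the skeleton chains $\mathbf{Y}_k$ and $\widetilde{\mathbf{Y}}_k$ (which follow from Assumptions \ref{as:Wergodic_Y_and_P} and \ref{as:Wergodic_Y_and_P_tilde} combined with Proposition \ref{pr:P_unique_pi}) and transfer uniqueness to the extended chain via its very specific transition structure. I will carry the argument out for $\mathbf{U}_k$; the argument for $\widetilde{\mathbf{U}}_k$ is identical with tildes inserted throughout.

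First, I would identify the Markov transition kernel of $\mathbf{U}_k$ on $E^2$. Since $\mathbf{U}_{k+1} = (\mathbf{Y}_k, \mathbf{Y}_{k+1})$ while $\mathbf{U}_k = (\mathbf{Y}_{k-1}, \mathbf{Y}_k)$, the first coordinate of $\mathbf{U}_{k+1}$ is deterministically the second coordinate of $\mathbf{U}_k$, and the second coordinate is drawn from $P(\mathbf{Y}_k, \cdot)$. Hence the kernel is
\begin{equation*}
\mathcal{P}\bigl((\mathbf{y}_1,\mathbf{y}_2), \mathrm{d}\mathbf{y}_1'\,\mathrm{d}\mathbf{y}_2'\bigr) = \delta_{\mathbf{y}_2}(\mathrm{d}\mathbf{y}_1')\,P(\mathbf{y}_2, \mathrm{d}\mathbf{y}_2').
\end{equation*}

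Next, I would take an arbitrary invariant measure $\zeta$ of $\mathcal{P}$ and study its two one-dimensional marginals $\zeta_1, \zeta_2$. Writing out the invariance condition $\zeta \mathcal{P} = \zeta$ against rectangles $A \times B$, the deterministic shift shows $\zeta_1(A) = \zeta_2(A)$ for every $A \in \mathcal{B}(E)$, so the two marginals coincide; call the common marginal $\nu'$. Integrating over $A = E$ in the invariance relation yields $\nu'(B) = \int_E P(\mathbf{y}, B)\,\nu'(\mathrm{d}\mathbf{y}) = (\nu' P)(B)$, so $\nu'$ is $P$-invariant, and by Proposition \ref{pr:P_unique_pi} we obtain $\nu' = \nu$.

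Substituting this back into the invariance relation gives $\zeta(A \times B) = \int_A P(\mathbf{y}, B)\,\nu(\mathrm{d}\mathbf{y})$ for every measurable rectangle, which determines $\zeta$ uniquely on the $\pi$-system of rectangles, and hence on all of $\mathcal{B}(E^2)$ by a standard monotone class argument. Finally, I would verify directly that the measure $\zeta(\mathrm{d}\mathbf{y}_1, \mathrm{d}\mathbf{y}_2) := \nu(\mathrm{d}\mathbf{y}_1)\,P(\mathbf{y}_1, \mathrm{d}\mathbf{y}_2)$ is indeed invariant for $\mathcal{P}$, which is an immediate calculation using $\nu P = \nu$. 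I expect no real obstacle here; the only subtlety is being careful that the marginals in the invariance equation line up correctly (i.e.\ that the first marginal of $\zeta$ coincides with its second via the deterministic shift), after which uniqueness of $\nu$ does all the work.
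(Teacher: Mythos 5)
Your proof is correct and follows essentially the same route as the paper's: identify that any invariant measure of $\mathbf{U}_k$ must have $\nu$ as both marginals (via the deterministic shift and uniqueness of $\nu$ under Assumptions \ref{as:Wergodic_Y_and_P}--\ref{as:Wergodic_Y_and_P_tilde} and Proposition \ref{pr:P_unique_pi}), then conclude that the measure is pinned down as $\nu(\mathrm{d}\mathbf{y}_1)\,P(\mathbf{y}_1,\mathrm{d}\mathbf{y}_2)$, and verify this candidate is indeed invariant. You spell out a couple of steps the paper leaves implicit (the equality of the two marginals, and the monotone-class extension from rectangles), but there is no difference in the underlying argument.
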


\begin{proof}%[Proof of Lemma \ref{lm:U_Utilde_unique_Pi}]
See Appendix \ref{app:invariant_measure}.
\end{proof}

We can now bound the distance between $\mathbf{U}_k$ and $\widetilde{\mathbf{U}}_k$ in terms of the chains $\mathbf{Y}_k$ and $\widetilde{\mathbf{Y}}_k$ as follows.

\begin{proposition}[Wasserstein distance between $\mathbf{U}_k$ and $\widetilde{\mathbf{U}}_k$]
\label{pr:W_U_Utilde}
Let $\zeta$ and $\widetilde{\zeta}$ be the invariant distributions of the augmented chains $\mathbf{U}_k$ and $\widetilde{\mathbf{U}}_k$. Then:
\begin{equation}
    d_W^2(\zeta, \widetilde{\zeta}) \le \left(1 + 2 C^2 \rho^2 \right) d_W^2(\nu, \widetilde{\nu}) + 2 \epsilon^2 .
\end{equation}
\end{proposition}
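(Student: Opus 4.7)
The strategy is to leverage the factorisation of the augmented invariant measures. Since $\nu$ is invariant for $P$ (and $\widetilde{\nu}$ for $\widetilde{P}$), the joint laws of two consecutive states at stationarity satisfy $\zeta(d\mathbf{y}, d\mathbf{y}') = \nu(d\mathbf{y}) \, P(\mathbf{y}, d\mathbf{y}')$ and $\widetilde{\zeta}(d\mathbf{y}, d\mathbf{y}') = \widetilde{\nu}(d\mathbf{y}) \, \widetilde{P}(\mathbf{y}, d\mathbf{y}')$. This structure suggests building a coupling of $\zeta$ and $\widetilde{\zeta}$ in two stages: first, pair the ``present'' states using an optimal coupling of $\nu$ and $\widetilde{\nu}$; then, conditionally on those, pair the ``next'' states using an optimal coupling of $P(\mathbf{y}, \cdot)$ and $\widetilde{P}(\widetilde{\mathbf{y}}, \cdot)$.

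Concretely, I would fix an optimal coupling $\pi^\star(d\mathbf{y}, d\widetilde{\mathbf{y}})$ realising $d_W^2(\nu, \widetilde{\nu})$, and for each pair $(\mathbf{y}, \widetilde{\mathbf{y}})$ select a measurable family of optimal couplings $\kappa_{\mathbf{y}, \widetilde{\mathbf{y}}}(d\mathbf{y}', d\widetilde{\mathbf{y}}')$ between $P(\mathbf{y}, \cdot)$ and $\widetilde{P}(\widetilde{\mathbf{y}}, \cdot)$. The resulting product $\Xi := \pi^\star(d\mathbf{y}, d\widetilde{\mathbf{y}}) \, \kappa_{\mathbf{y}, \widetilde{\mathbf{y}}}(d\mathbf{y}', d\widetilde{\mathbf{y}}')$ is a valid coupling of $\zeta$ and $\widetilde{\zeta}$, so $d_W^2(\zeta, \widetilde{\zeta})$ is bounded above by the $\Xi$-expectation of the squared $E^2$-distance, which decomposes additively into a contribution from the present states and a contribution from the next states.

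The first contribution equals $d_W^2(\nu, \widetilde{\nu})$ by optimality of $\pi^\star$. For the second, I would use the triangle inequality for the Wasserstein distance, Assumption \ref{as:Wergodic_Y_and_P} at $n=1$, and Proposition \ref{prop:wass_dist_P_Ptilde} to obtain $d_W(\delta_{\mathbf{y}} P, \delta_{\widetilde{\mathbf{y}}} \widetilde{P}) \leq C \rho \, \|\mathbf{y} - \widetilde{\mathbf{y}}\| + \epsilon$. Squaring and applying the elementary inequality $(a+b)^2 \leq 2 a^2 + 2 b^2$ yields $\int \|\mathbf{y}' - \widetilde{\mathbf{y}}'\|^2 \, d\kappa_{\mathbf{y}, \widetilde{\mathbf{y}}} \leq 2 C^2 \rho^2 \, \|\mathbf{y} - \widetilde{\mathbf{y}}\|^2 + 2 \epsilon^2$, and integrating against $\pi^\star$ gives $2 C^2 \rho^2 \, d_W^2(\nu, \widetilde{\nu}) + 2\epsilon^2$. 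Summing the two contributions produces exactly the claimed bound.

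There is no deep obstacle; the proof is mechanical once the two-stage coupling is in place. The one technical nuisance is the measurable selection of $(\mathbf{y}, \widetilde{\mathbf{y}}) \mapsto \kappa_{\mathbf{y}, \widetilde{\mathbf{y}}}$, which is standard on Polish spaces via measurable selection theorems and can be invoked with little fanfare. The only other point worth verifying is that the squared distance on $E^2$ decomposes additively over the two copies of $E$ under the norm conventions in equation (\ref{eq:2-norm-E}); this is immediate with the natural Pythagorean product structure, and the algebraic form of the stated bound confirms that this is the intended convention.
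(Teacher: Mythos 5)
Your proof is correct and follows essentially the same two-stage coupling argument as the paper: optimally couple $\nu$ and $\widetilde{\nu}$, then conditionally couple $\delta_{\mathbf{y}}P$ and $\delta_{\widetilde{\mathbf{y}}}\widetilde{P}$, and combine the triangle inequality $d_W(\delta_{\mathbf{y}}P,\delta_{\widetilde{\mathbf{y}}}\widetilde{P}) \le C\rho\|\mathbf{y}-\widetilde{\mathbf{y}}\|+\epsilon$ with $(a+b)^2\le 2a^2+2b^2$. The one caveat you flag about the squared $E^2$-distance decomposing additively is worth noting: the paper's proof makes the same implicit assumption, which sits slightly awkwardly with its $\ell^1$-type norm convention in Equation \eqref{eq:2-norm-E}, but your reading matches the paper's own usage.
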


\begin{proof}%[Proof of Proposition \ref{pr:W_U_Utilde}]
See Appendix \ref{app:invariant_measure}.
\end{proof}

The remainder of this section shows that if the numerical error is small, the ergodic averages obtained through the measure $\widetilde{\mu}$ will be close to those obtained through the exact stationary measure $\mu$. Moreover, the discrepancy will be bounded from above by a monotone function of the numerical tolerance.

Let us now define the interpolation operator $\mathcal{J}$, which will be of distinct importance in our derivation, and prove two technical lemmas that will be used in Theorem \ref{th:mu_ergodic_measure}.

\begin{definition}[Interpolation operator]
Let $(\mathbf{x},\mathbf{v})$ and $(\mathbf{x}',\mathbf{v}')$ be two consecutive points in the skeleton chain $\mathbf{Y}_k$. Define the interpolation operator $\mathcal{J} : C ( E ) \to C \left( E \times E \right)$ by its action on functions as
\begin{equation}
    ( \mathcal{J} f ) (\mathbf{x}, \mathbf{v}, \mathbf{x}', \mathbf{v}') = \frac{\|\mathbf{x}' - \mathbf{x} \|_2}{\| \mathbf{v} \|_2} \int_0^1 f(\mathbf{x} + (\mathbf{x}' - \mathbf{x}) t, \mathbf{v} ) \, \dint t.
\end{equation}
\end{definition}

The significance of this operator is that given a pair of consecutive points in the skeleton chain and a function $f$, the interpolated function $\mathcal{J}f$ returns the integral of the function $f$ over the period of time during which the continuous-time process moves from the first point to the second point. As such, it is a natural tool for expressing ergodic averages of the continuous-time processes in terms of their skeleton 
chains.

\begin{lemma}
\label{lm:ergodic_avg_helper}
Let $f$ be such that $\mathcal{J}f \in L^1 (\zeta) \cap L^1 (\widetilde{\zeta})$. Then, the ergodic averages for the continuous processes $\mathbf{Z}_t$ and $\widetilde{\mathbf{Z}}_t$ can be given in terms of the chains $\mathbf{U}_k$ and $\widetilde{\mathbf{U}}_k$ and their stationary distributions as
\begin{align}
    \lim_{t \rightarrow \infty} \frac{1}{t} \int_0^t f(\mathbf{z}_s) \, \dint s 
    & = \frac{1}{H} \int_{E^2} ( \mathcal{J} f ) (\mathbf{u}) \, \zeta (\dint \mathbf{u}) \label{eq:erg_avg_u_exact} \\
    \lim_{t \rightarrow \infty} \frac{1}{t} \int_0^t f(\widetilde{\mathbf{z}}_s) \, \dint s 
    & = \frac{1}{\widetilde{H}} \int_{E^2} ( \mathcal{J} f ) (\mathbf{u}) \, \widetilde{\zeta}( \dint \mathbf{u}) \, , \label{eq:erg_avg_u_approx}
\end{align}
where
\begin{equation}
    \lim_{n \to \infty} \frac{T_n}{n} = H  \qquad \lim_{n \to \infty} \frac{\widetilde{T}_n}{n} = \widetilde{H} \, ,
\end{equation}
i.e.\ the average switching time for both the exact and numerical process along a trajectory is finite and constant.
\end{lemma}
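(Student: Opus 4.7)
The plan is to convert the continuous-time ergodic integral into a discrete Birkhoff average over the extended skeleton chain $\mathbf{U}_k$, exploiting the fact that between consecutive switching times the dynamics are linear and the segment integrals collapse exactly into $\mathcal{J}f$ evaluated at a pair of consecutive skeleton points.

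First, I would partition by switches. Writing $N_t = \max\{k \geqslant 0 : T_k \leqslant t\}$, we have
\begin{equation*}
\int_0^t f(\mathbf{z}_s) \, \dint s = \sum_{k=0}^{N_t-1} \int_{T_k}^{T_{k+1}} f(\mathbf{z}_s) \, \dint s + \int_{T_{N_t}}^t f(\mathbf{z}_s) \, \dint s \, .
\end{equation*}
On $[T_k,T_{k+1}]$ the velocity is constant at $\mathbf{V}_{T_k}$ and \eqref{interp} gives $\mathbf{X}_{T_{k+1}} - \mathbf{X}_{T_k} = (T_{k+1} - T_k)\mathbf{V}_{T_k}$, so $T_{k+1} - T_k = \|\mathbf{X}_{T_{k+1}} - \mathbf{X}_{T_k}\|_2/\|\mathbf{V}_{T_k}\|_2$. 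Substituting $u = (s - T_k)/(T_{k+1} - T_k)$ and matching the definition of $\mathcal{J}$ yields the key identity
\begin{equation*}
\int_{T_k}^{T_{k+1}} f(\mathbf{z}_s) \, \dint s = (\mathcal{J}f)(\mathbf{Y}_k, \mathbf{Y}_{k+1}) = (\mathcal{J}f)(\mathbf{U}_{k+1}) \, .
\end{equation*}

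Second, I would identify $H$. Taking $f \equiv 1$ in the above, $T_{k+1} - T_k = (\mathcal{J}1)(\mathbf{U}_{k+1})$, hence $T_n - T_0 = \sum_{k=1}^n (\mathcal{J}1)(\mathbf{U}_k)$. By Lemma~\ref{lm:U_Utilde_unique_Pi} the chain $\mathbf{U}_k$ has a unique invariant measure $\zeta$, and provided $\mathcal{J}1 \in L^1(\zeta)$ (which holds because the positive refreshment rate $\Gamma > 0$ forces inter-event times to have finite mean), Birkhoff's ergodic theorem gives $T_n/n \to H := \zeta(\mathcal{J}1) \in (0,\infty)$ almost surely, deterministic and finite. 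An identical argument applied to $\widetilde{\mathbf{U}}_k$ gives $\widetilde{T}_n/n \to \widetilde{H} = \widetilde{\zeta}(\mathcal{J}1)$. Standard renewal bookkeeping then upgrades this to $N_t/t \to 1/H$ almost surely.

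Third, I would assemble. Dividing the partition identity by $t$,
\begin{equation*}
\frac{1}{t}\int_0^t f(\mathbf{z}_s) \, \dint s = \frac{N_t}{t} \cdot \frac{1}{N_t}\sum_{k=1}^{N_t} (\mathcal{J}f)(\mathbf{U}_k) + \frac{1}{t}\int_{T_{N_t}}^t f(\mathbf{z}_s) \, \dint s \, .
\end{equation*}
The first factor tends to $1/H$; Birkhoff applied to $\mathcal{J}f \in L^1(\zeta)$ makes the ergodic average tend to $\zeta(\mathcal{J}f)$, giving \eqref{eq:erg_avg_u_exact}. The derivation of \eqref{eq:erg_avg_u_approx} is verbatim identical, the crucial point being that although $\widetilde{\mathbf{Z}}_t$ is not Markov in continuous time, the chain $\widetilde{\mathbf{U}}_k$ is, which is precisely why the proof was routed through the skeleton.

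The main obstacle, more subtle than technically hard, is controlling the boundary remainder $\frac{1}{t}\int_{T_{N_t}}^t f(\mathbf{z}_s) \, \dint s$: since $f$ is not assumed bounded, I would dominate it by $(\mathcal{J}|f|)(\mathbf{U}_{N_t+1})/t$ and use the fact that $\mathcal{J}|f| \in L^1(\zeta)$ forces $(\mathcal{J}|f|)(\mathbf{U}_k)/k \to 0$ along a $\zeta$-typical trajectory (a Borel--Cantelli argument from summability of the tail probabilities), or equivalently invoke standard renewal--reward machinery to absorb the last incomplete segment. Everything else reduces to standard ergodic bookkeeping once one has the Markov structure on $\mathbf{U}_k$ and $\widetilde{\mathbf{U}}_k$ and the integrability of $\mathcal{J}f$.
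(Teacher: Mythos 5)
Your proof takes essentially the same route as the paper's: decompose the time integral into inter-switch segments, collapse each segment exactly onto $(\mathcal{J}f)(\mathbf{U}_k)$ via the linear change of variables, and invoke Birkhoff's ergodic theorem for the Markov chain $\mathbf{U}_k$ together with the limit $T_n/n \to H$. The one place you go beyond the paper is worth flagging: the paper evaluates the limit only along the subsequence $t = T_n$ and so never sees a remainder, whereas you work with arbitrary $t$, introduce $N_t$, and explicitly control the incomplete last segment (via domination by $(\mathcal{J}|f|)(\mathbf{U}_{N_t+1})/t$ and a Borel--Cantelli argument); you also give the short justification that $\Gamma > 0$ makes $\mathcal{J}1 \in L^1(\zeta)$ and hence $H$ finite, which the paper asserts without comment. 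These are refinements in rigor rather than a different argument, and both are sound.
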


\begin{proof}
See Appendix \ref{app:invariant_measure}.
\end{proof}

\begin{remark}
\label{rm:mu_tilde}
Note now that the right-hand side of Equation \eqref{eq:erg_avg_u_exact} can be rewritten as 
\begin{align}
    \frac{1}{H} \int_{E^2} ( \mathcal{J} f ) (\mathbf{u}) \, \zeta (\dint \mathbf{u}) 
    &= \int_{E^2} \nu(\dint \mathbf{y}) P(\mathbf{y}, \dint \mathbf{y}') (\mathcal{J} f) (\mathbf{y}, \mathbf{y}') H^{-1} \\
    &= \int_{E^2} \nu(\dint \mathbf{y}) P(\mathbf{y}, \dint \mathbf{y}') \, \frac{\|\mathbf{x}' - \mathbf{x} \|_2}{\| \mathbf{v} \|_2} \int_0^1 f( (1-t) \cdot \mathbf{y} + t \cdot  \mathbf{y}') \, \dint t H^{-1}\\
    &= \int_{E^3} \varphi (\dint \mathbf{y}, \dint \mathbf{y}', \dint \mathbf{y}'') f(\mathbf{y}'') \\
    &= \int_E f(\mathbf{z}) \, \mu(\dint \mathbf{z})
\end{align}
where
\begin{equation}
\varphi (\dint \mathbf{y}, \dint \mathbf{y}', \dint \mathbf{y}'') = \nu(\dint \mathbf{y}) P(\mathbf{y}, \dint \mathbf{y}') \frac{\|\mathbf{x}' - \mathbf{x} \|_2}{\| \mathbf{v} \|_2} K (\mathbf{y}'' | \mathbf{y}, \mathbf{y}') H^{-1} ,
\end{equation}
where $K ( \cdot | \mathbf{y}, \mathbf{y}')$ is the Markov kernel which selects a point uniformly at random from the line segment joining $\mathbf{y}$ and $\mathbf{y}'$, and $\mu$ is obtained by integrating $\mathbf{y}, \mathbf{y}'$ out of $\varphi$.

In particular, it follows that $\mu$ is the occupation (and stationary, in this case) measure of the exact process $\mathbf{Z}_t$ on the original state-space $E$. Likewise, we can define

\begin{align}
    \frac{1}{H} \int_{E^2} ( \mathcal{J} f ) (\widetilde{\mathbf{u}}) \, \widetilde{\zeta} (\dint \widetilde{\mathbf{u}}) 
    & = \int_{E^2} \widetilde{\nu}(\dint \widetilde{\mathbf{y}}) \widetilde{P}(\widetilde{\mathbf{y}}, \dint \widetilde{\mathbf{y}}') (\mathcal{J} f) (\widetilde{\mathbf{y}}, \widetilde{\mathbf{y}}') \widetilde{H}^{-1} \\
    & = \int_{E} f(\widetilde{\mathbf{z}}) \, \widetilde{\mu}(\dint \mathbf{z})
\end{align}
with
\begin{equation}
\widetilde{\varphi} \left( \dint \widetilde{\mathbf{y}}, \dint \widetilde{\mathbf{y}}', \dint \widetilde{\mathbf{y}}'' \right) = \widetilde{\nu} (\dint \widetilde{\mathbf{y}}) \widetilde{P}(\widetilde{\mathbf{y}}, \dint \widetilde{\mathbf{y}}') \frac{\| \widetilde{\mathbf{x}}' - \widetilde{\mathbf{x}} \|_2}{\| \widetilde{\mathbf{v}} \|_2} K (\widetilde{\mathbf{y}}'' | \widetilde{\mathbf{y}}, \widetilde{\mathbf{y}}') \widetilde{H}^{-1} \,
\end{equation}
in order to characterise the occupation measure of the approximating process.
\end{remark}

We can now prove the main result of this section, which provides an error bound on the ergodic averages which are obtained from the paths of the continuous-time approximate process.
\begin{theorem}[Bound on discrepancy of ergodic averages]
\label{th:mu_ergodic_measure}
Let $f : E \to \mathbb{R}$ be a bounded and Lipschitz function, and let $\mathcal{J}f \in L^1 (\zeta) \cap L^1 (\widetilde{\zeta})$. Then, there exists a probability measure $\widetilde{\mu}$ such that %with probability 1, it holds that the limit
the limit
\begin{align}
\label{eq:mu_tilde}
    \widetilde{\mu} (f) := \lim_{t \rightarrow \infty} \frac{1}{t} \int_0^t f(\widetilde{\mathbf{z}}_s) \, \dint s
\end{align}
exists, and satisfies
\begin{equation}
\label{eq:bound_th1}
    \left| \widetilde{\mu} (f) - \mu(f) \right| 
    \leqslant \frac{ d_W(\zeta, \widetilde{\zeta}) }{\sqrt{d} \widetilde{H}} \cdot \left ( |f|_\mathrm{Lip} \cdot \mathbf{E}_\zeta \left[ \| \mathbf{x}' - \mathbf{x} \|^2 \right]^{\frac{1}{2}} + \| f \|_{\infty} +  | \mu(f) | \right)
\end{equation}
where $\mu$ is the invariant distribution of the exact Zig-Zag Process, and the expectation is the expected square jump distance.
\end{theorem}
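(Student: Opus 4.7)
My plan is to reduce the continuous-time discrepancy to a comparison of the extended skeleton chain invariant measures $\zeta$ and $\widetilde{\zeta}$, for which Proposition~\ref{pr:W_U_Utilde} already provides a Wasserstein bound. Concretely, I first apply Lemma~\ref{lm:ergodic_avg_helper} to both processes, so that the ergodic averages of $f$ exist and equal $H^{-1}\zeta(\mathcal{J}f)$ and $\widetilde{H}^{-1}\widetilde{\zeta}(\mathcal{J}f)$ respectively; Remark~\ref{rm:mu_tilde} then allows me to realise the right-hand side of \eqref{eq:erg_avg_u_approx} as $\widetilde{\mu}(f)$ for a genuine probability measure $\widetilde{\mu}$ on $E$, which yields existence of the limit in \eqref{eq:mu_tilde}.

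The core estimate is obtained from the algebraic identity
\begin{equation*}
\widetilde{\mu}(f) - \mu(f) = \frac{1}{\widetilde{H}}\bigl(\widetilde{\zeta}(\mathcal{J}f) - \zeta(\mathcal{J}f)\bigr) + \mu(f)\cdot \frac{H - \widetilde{H}}{\widetilde{H}} \, ,
\end{equation*}
so everything reduces to controlling the two bracketed terms in terms of $d_W(\zeta,\widetilde{\zeta})$. For the first term I would work on an optimal $2$-Wasserstein coupling $\xi^\star$ of $(\zeta,\widetilde{\zeta})$ and exploit the factorisation $\mathcal{J}f(\mathbf{u}) = d^{-1/2}\,\|\mathbf{x}'-\mathbf{x}\|\,\psi_f(\mathbf{u})$, where $\psi_f(\mathbf{u}) := \int_0^1 f(\mathbf{x}+t(\mathbf{x}'-\mathbf{x}),\mathbf{v})\,\dint t$ satisfies $\|\psi_f\|_\infty\le \|f\|_\infty$ and is $|f|_{\mathrm{Lip}}$-Lipschitz in $\mathbf{u}$. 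Writing $ab-cd = a(b-d) + (a-c)d$ splits $\mathcal{J}f(\widetilde{\mathbf{u}})-\mathcal{J}f(\mathbf{u})$ into a piece bounded pointwise by $d^{-1/2}\,|f|_{\mathrm{Lip}}\,\|\mathbf{x}'-\mathbf{x}\|\cdot\|\widetilde{\mathbf{u}}-\mathbf{u}\|$ and a piece bounded by $d^{-1/2}\,\|f\|_\infty\cdot\|\widetilde{\mathbf{u}}-\mathbf{u}\|$. Taking expectation under $\xi^\star$ and applying Cauchy--Schwarz to the first yields precisely the factor $d^{-1/2}\,|f|_{\mathrm{Lip}}\,\mathbf{E}_\zeta[\|\mathbf{x}'-\mathbf{x}\|^2]^{1/2}\,d_W(\zeta,\widetilde{\zeta})$, while the second gives $d^{-1/2}\,\|f\|_\infty\,d_W(\zeta,\widetilde{\zeta})$.

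For the $|H-\widetilde{H}|$ term I use that on the Zig-Zag velocity manifold $\|\mathbf{v}\|_2 = \sqrt{d}$, so that $H = d^{-1/2}\,\mathbf{E}_\zeta[\|\mathbf{x}'-\mathbf{x}\|]$ and analogously for $\widetilde{H}$. Evaluating on the same optimal coupling and applying the reverse triangle inequality followed by Jensen/Cauchy--Schwarz gives $|H-\widetilde{H}|\le d^{-1/2}\,d_W(\zeta,\widetilde{\zeta})$, which after multiplication by $|\mu(f)|/\widetilde{H}$ contributes the last summand in the bound. Combining these three estimates produces exactly \eqref{eq:bound_th1}.

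The main technical obstacle is that $\mathcal{J}f$ is \emph{not} globally Lipschitz on $E^2$: the factor $\|\mathbf{x}'-\mathbf{x}\|$ is unbounded, so a direct Kantorovich--Rubinstein argument with the $1$-Wasserstein distance is unavailable. The product decomposition combined with the $L^2$--$L^2$ Cauchy--Schwarz step is what rescues the argument, trading the missing Lipschitz constant of $\mathcal{J}f$ for the expected square jump distance $\mathbf{E}_\zeta[\|\mathbf{x}'-\mathbf{x}\|^2]^{1/2}$ that appears in the final bound, and it is also what forces the use of the $2$-Wasserstein distance throughout Section~\ref{sec:err}.
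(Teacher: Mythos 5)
Your proposal is correct and follows essentially the same route as the paper's proof: the same algebraic decomposition $\widetilde{\mu}(f)-\mu(f) = \widetilde{H}^{-1}\bigl(\widetilde{\zeta}(\mathcal{J}f)-\zeta(\mathcal{J}f)\bigr) + (H-\widetilde{H})\widetilde{H}^{-1}\mu(f)$, the same product-rule splitting of $\mathcal{J}f$ (which the paper isolates as Lemma~\ref{lem:bound_helper_function}) together with a Cauchy--Schwarz step under an optimal 2-Wasserstein coupling, and the same observation that $H=\zeta(g)$ with $g(\mathbf{u})=\|\mathbf{x}'-\mathbf{x}\|/\|\mathbf{v}\|$ a $d^{-1/2}$-Lipschitz function. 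The only cosmetic difference is in bounding $|H-\widetilde{H}|$: you phrase it via the optimal coupling and the reverse triangle inequality plus Cauchy--Schwarz, whereas the paper invokes Kantorovich--Rubinstein duality and the elementary $d_{W,1}\le d_W$ comparison, but these yield the identical estimate.
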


\begin{proof}
See Appendix \ref{app:invariant_measure}.
\end{proof}

Intuitively, Theorem \ref{th:mu_ergodic_measure} shows that despite $\widetilde{\mathbf{Z}}_t$ being non-Markovian, there exists a probability measure that describes its sojourn in the state-space, and that integration against this measure will reproduce the ergodic averages of the process. 

The second, and perhaps most important consequence of Theorem \ref{th:mu_ergodic_measure} is to establish a bound between the difference of ergodic averages between the exact process $\mathbf{Z}_t$ and ergodic averages of the approximate process $\widetilde{\mathbf{Z}}_t$, bound that depends on how close the skeleton chains $(\mathbf{Y}_k, \widetilde{\mathbf{Y}}_k)$ via the distance between the extended skeleton chains $(\mathbf{U}_k, \widetilde{\mathbf{U}}_k)$. Hence ergodic averages computed with the occupation measure $\widetilde{\mu}$ of the numerical process $\widetilde{\mathbf{Z}}_t$ will be close to those computed via the exact stationary measure $\mu$ of the exact process, and as the numerical tolerances $\varepsilon_\mathrm{Bre}$ and $\varepsilon_\mathrm{int}$ go to zero, the two ergodic averages coincide.

%We previously defined the class $\mathcal{F}$ to which $f$ belongs indirectly, as the class of bounded and Lipschitz functions s.t.\ $\mathcal{J}f \in L^1(\zeta) \cap L^1(\widetilde{\zeta})$. In order to give an explicit definition of $\mathcal{F}$, one can restrict the attention to the class of bounded and Lipschitz functions with compact support, where the requirement on $\mathcal{J}f$ is certainly satisfied, and is rich enough for the purposes of this work.

The following corollary is a somewhat restrictive but more explicit summary of the results presented in this section.

\begin{corollary}
\label{cor:final_results}
Let $f : E \to \mathbb{R}$ be a bounded and Lipschitz function with compact support. 
Let $\eta_\mathrm{int} \le \varepsilon_\mathrm{int}$, let every element of the Hessian of $\log \pi(\mathbf{x})$ be bounded by the constant $M$, let $\mathbf{Y}_k$ be Wasserstein ergodic with constants $\rho$ and $C$, and let $\widetilde{\mathbf{Y}}_k$ be Wasserstein ergodic.
Then by Theorem \ref{th:mu_ergodic_measure} and Propositions \ref{pr:W_U_Utilde}, \ref{prop:WdistYchain}, \ref{prop:wass_dist_P_Ptilde}, the discrepancy between ergodic averages from the Zig-Zag process, $\mathbf{Z}_t$, and the ergodic averages from NuZZ, $\widetilde{\mathbf{Z}}_t$, can be bounded as
\begin{align}
\label{eq:bound}
    & \left| \int_E f(\mathbf{z}) \, \mu (\dint \mathbf{z}) - \int_E f(\widetilde{\mathbf{z}}) \, \widetilde{\mu} (\dint \mathbf{z}) \right| \leqslant \frac{  M_3 \cdot M_4 }{\widetilde{H}} \left ( |f|_\mathrm{Lip} \cdot \mathbf{E}_\zeta \left[ \| \mathbf{x}' - \mathbf{x} \|^2 \right]^{1/2} + \| f \|_{\infty} +  | \mu(f) | \right)
\end{align}
with
\begin{align}
M_3 &= d^{-\frac{1}{2}} \epsilon = \sqrt{2} \, \Gamma^{-\frac{1}{2}} \left ( (\varepsilon_\mathrm{Bre} + \varepsilon_\mathrm{int})^2 + 16  M d (\varepsilon_\mathrm{Bre} + \varepsilon_\mathrm{int}) \right )^{1/2} \\
M_4 &= \left ( (1 + 2 C^2 \rho^2) \frac{C^2}{(1-\rho)^2} + 2 \right )^{1/2}
\, .
\end{align}
\end{corollary}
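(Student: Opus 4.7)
The plan is that this corollary is essentially an assembly of the preceding results into a single, self-contained statement; the work lies not in novel analysis but in carefully threading the inequalities together and in verifying that the listed hypotheses cover the antecedents of every lemma invoked.

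First I would check that the hypotheses of the corollary discharge the assumptions of each of the four prior results. The boundedness of $f$ together with Lipschitzness ensures $\mathcal{J} f \in L^1(\zeta) \cap L^1(\widetilde{\zeta})$, so Theorem \ref{th:mu_ergodic_measure} applies and yields the factorised bound
\begin{equation*}
  |\widetilde{\mu}(f) - \mu(f)| \leqslant \frac{d_W(\zeta, \widetilde{\zeta})}{\sqrt{d}\,\widetilde{H}}
  \left( |f|_{\mathrm{Lip}} \cdot \mathbf{E}_\zeta\!\left[\|\mathbf{x}' - \mathbf{x}\|^2\right]^{1/2} + \|f\|_\infty + |\mu(f)| \right).
\end{equation*}
Once this is in place, the whole task reduces to bounding $d_W(\zeta, \widetilde{\zeta})/\sqrt{d}$ by $M_3 M_4$.

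Second I would chain the three Wasserstein bounds. Proposition \ref{pr:W_U_Utilde} gives
$d_W^2(\zeta,\widetilde{\zeta}) \leqslant (1 + 2 C^2 \rho^2)\, d_W^2(\nu,\widetilde{\nu}) + 2 \epsilon^2$,
into which I substitute $d_W(\nu, \widetilde{\nu}) \leqslant C\epsilon/(1-\rho)$ from Proposition \ref{prop:WdistYchain} (whose Wasserstein-ergodicity hypothesis is Assumption \ref{as:Wergodic_Y_and_P}, assumed in the corollary). This yields
\begin{equation*}
  d_W^2(\zeta,\widetilde{\zeta}) \leqslant \left[ (1 + 2 C^2 \rho^2)\frac{C^2}{(1-\rho)^2} + 2 \right] \epsilon^2 = M_4^2 \,\epsilon^2.
\end{equation*}
Taking square roots and dividing by $\sqrt{d}$ gives $d_W(\zeta,\widetilde{\zeta})/\sqrt{d} \leqslant M_4 \cdot (\epsilon/\sqrt{d})$, and I would finally plug in Proposition \ref{prop:wass_dist_P_Ptilde} with $\eta_{\mathrm{int}} \leqslant \varepsilon_{\mathrm{int}}$ (which follows from the stated hypothesis and Assumption \ref{as:eta_bounded_epsilon}) and the Hessian bound $M$, to obtain
\begin{equation*}
  \frac{\epsilon}{\sqrt{d}} = \sqrt{2}\,\Gamma^{-1/2}\left( (\varepsilon_{\mathrm{Bre}} + \varepsilon_{\mathrm{int}})^2 + 16 M d (\varepsilon_{\mathrm{Bre}} + \varepsilon_{\mathrm{int}}) \right)^{1/2} = M_3,
\end{equation*}
which is exactly the definition of $M_3$ in the corollary. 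Combining with the bound from Theorem \ref{th:mu_ergodic_measure} produces inequality \eqref{eq:bound}.

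The only subtle point—and what I would flag as the single place where care is needed rather than a real obstacle—is bookkeeping the factor $\sqrt{d}$: the $1/\sqrt{d}$ in Theorem \ref{th:mu_ergodic_measure} must combine exactly with the $\sqrt{d}$ hidden inside $\epsilon$ to produce the dimension-free $M_3$ displayed in the statement. Everything else is a mechanical substitution, and no new analytical ideas are needed beyond those already developed in Propositions \ref{prop:wass_dist_P_Ptilde}, \ref{prop:WdistYchain}, \ref{pr:W_U_Utilde} and Theorem \ref{th:mu_ergodic_measure}.
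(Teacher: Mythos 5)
Your proposal is correct and matches the paper's intended assembly: the paper gives no separate proof for this corollary precisely because it is the mechanical chaining of Theorem \ref{th:mu_ergodic_measure}, Proposition \ref{pr:W_U_Utilde}, Proposition \ref{prop:WdistYchain} and Proposition \ref{prop:wass_dist_P_Ptilde} that you carry out, and your bookkeeping of $\epsilon/\sqrt{d}=M_3$ and of $M_4$ is arithmetically right. The only small slip is attributing $\mathcal{J}f\in L^1(\zeta)\cap L^1(\widetilde{\zeta})$ to boundedness plus Lipschitzness alone; it is the added compact-support hypothesis that actually guarantees $\mathcal{J}f$ is bounded (hence integrable), since $\mathcal{J}f$ carries the prefactor $\|\mathbf{x}'-\mathbf{x}\|_2/\|\mathbf{v}\|_2$, which is not a priori integrable against $\zeta$.
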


\begin{remark}
Note that while the left-hand side of Equation \eqref{eq:bound} is invariant under constant shifts of the form $f \mapsto f + c$, the right hand side is not. As such, one may sharpen the bound by taking the infimum of the existing bound over all such shifts.
\end{remark}

% Or, in simpler terms, see Corollary \ref{cor:final_results2}.

\section{Numerical experiments}
\label{sec:exp}

We now study how the Zig-Zag process compares to other popular MCMC algorithms on test problems with features that are often found in practice.  These features include high linear correlation, different length scales, high dimension, fat tails, and position-dependent correlation structure. 
%We also return to the real House Price dataset from Section \ref{sec:mot}.
Our results are shown in Figure \ref{fig:convergencePlots}.

%\subsection{Methodology}\label{sec:perf}

%\subsubsection{Algorithm choice}

In the following sections, we aim to compare the NuZZ method against three popular algorithms. In the figures that follow, the Random Walk Metropolis method is represented by the black $\bigcirc$ line labelled \emph{RWM} in plots, Hamiltonian Monte Carlo by the green $+$ line labelled \emph{HMC}, and the simplified Manifold MALA~\cite{girolami2011} with SoftAbs metric \cite{betancourt2013} by the red $\bigtriangleup$ line labelled \emph{sMMALA}. We also include a \emph{Direct} Monte Carlo sample (blue $\triangledown$ line) from the target as a term of comparison. Every example we examine has been chosen so that it is easy to obtain a direct Monte Carlo sample, and the marginal distributions are easily accessible.

For each example, we test two versions of NuZZ. 
The yellow $\Diamond$ line labelled \emph{NuZZroot}, represents the performance of NuZZ as if only one gradient evaluation was necessary to obtain each $\tau$, which would be the performance of the algorithm if an analytical solution to Equation \eqref{eq:sellketGeneral} were available. The yellow line therefore represents a lower bound to the performance of NuZZ, and indeed any numerical approximation of the Zig-Zag process.
The blue $\times$ line labelled \emph{NuZZepoch} represents the performance of the NuZZ algorithm when every gradient evaluation necessary to obtain $\tau$ is counted. 
In this context, the current implementation of NuZZ is expensive, as for each root, $\Lambda(\mathbf{x}(s),\mathbf{v})$ has to be evaluated for each root finding iteration, for each node of the numerical integral. 
%Therefore the \emph{NuZZepoch} line can be considered as a worst-case scenario approximation of the Zig-Zag dynamics. 

Lastly, the \emph{icdfZZ} algorithm mentioned in Section \ref{sec:stdNorm} is the exact process where $\tau$ is obtained via cdf inversion, while the \emph{NuZZtol} algorithm from Section \ref{sec:fatTails} is just the \emph{NuZZepoch} run with increased tolerances, as explained below.

\subsection{Tuning}
\label{sec:tuning}

The RWM, MALA and HMC algorithms used in this work are tuned to achieve the optimal acceptance rate of roughly 25\%, 50\% \parencite{roberts2001}, and 65\% \parencite{beskos2013}, with exceptions described in each relevant section.

%One of the many desirable features of Zig-Zag-like algorithms discussed here is that they only have a handful of parameters to tune. 

We run all the NuZZ-based algorithms with absolute tolerance of $\varepsilon_\mathrm{int} = 10^{-10}$ for the integration routine, and $\varepsilon_\mathrm{Bre} = 10^{-10}$ for Brent's method. 

The functions $\gamma_i(\mathbf{x},\mathbf{v})$ can be set to zero if the target distribution satisfies certain properties~\parencite{bierkens2019ergodicity}, or otherwise they can be made small enough such that they have negligible effects on the dynamics. 
Small values of $\gamma_i$ reduce diffusivity and guarantee better results in terms of asymptotic variance \parencite{andrieu2019peskuntierney}. 
As it is unknown what the effect of setting $\Gamma = 0$ for a numerical approximation to the Zig-Zag process, we set $\Gamma = 0.001$ in all of our examples, which is an empirical value small enough (in our case) that the impact on the switching rate is negligible\footnote{Practically speaking, one empirical way to select $\Gamma$ can consist in looking how the largest $\tau$ in a run evolves across runs with different values of $\Gamma$. 
The dynamics can reach furthest when decreasing the value of $\Gamma$ stops causing an increase in the value of the largest $\tau$ obtained from that run.}.

The mixing of Zig-Zag-based algorithms is affected by the velocity vector $\mathbf{v}$ in the same way the mixing of Metropolis-based MCMC algorithms is influenced by the covariance of the transition kernel (or the momentum variable).
Tuning $\mathbf{v}_{\mathrm{init}}$ is discussed in the relevant examples in Section \ref{sec:diffScales} and \ref{sec:linearCorr}.
Wherever possible, we used the values of the theoretical covariance matrix to adapt the proposal covariance, mass matrix, and velocity vectors of all the algorithms. We leave the study of how fast the methods adapt to each example using partial MCMC samples for future work.

\subsection{Performance comparison}

We chose as a measure of algorithm performance the largest Kolmogorov-Smirnov (KS) distance between the MCMC sample and true distribution amongst all the marginal distributions. 
The KS distance is particularly well suited to our situation as the marginal distributions of our test problems are known or easily obtainable. We chose not to utilise the Effective Sample Size (ESS) or other ESS-based metrics, as NuZZ is not an exact algorithm and a high ESS value does not necessarily reflect the quality of the sample (NuZZ with very high tolerances can have excellent ESS but be arbitrarily far from the target distribution). Moreover, the KS distance is still well defined even if the moments of the target are infinite as in Section \ref{sec:fatTails}, unlike the ESS.
Choosing the KS distance also reflects the requirements of practical Bayesian statistics, where each parameter will typically have a point estimate and credible region reported, meaning that a supremum distance such as the KS is appropriate in assessing the worst-case accuracy of such figures.

Formally, the Kolmogorov-Smirnov distance for the $i$th parameter is defined
as
\begin{equation*}
D_i = \sup_{x_i}|\widetilde{F}_i(x_i) - F_i(x_i)| \, , \qquad \qquad i\in\{1,\ldots,d\}
\, ,
\end{equation*}
i.e.\ the supremum of the absolute value of the difference between the exact cdf, $F_i$,  and the empirical cdf constructed from the MCMC sample, $\widetilde{F}_i$. The KS distance also has the advantage of being relatively easy to calculate, and it is an integral probability measure with good theoretical properties~\parencite{sriperumbudur2009}.  The largest KS distance on the marginal distributions, which we standardly refer to as the ``$D$-statistic'', is simply
\begin{equation*}
D = \max_{i=1,\ldots,d} D_i \, .
\end{equation*}
Since the empirical cdf is calculated via Monte Carlo simulation, we ran each algorithm 40 times and reported the mean and two-sided $95\%$ region of the KS distance in plots such as Figure \ref{fig:convergencePlots}.

In order to meaningfully compare the algorithms studied in this work, we gave ourselves a fixed computational budget. 
We defined one epoch as ``a complete passage of the algorithm through the whole dataset".
For each model, we gave ourselves a computational budget of $6\times 10^6$ epochs, and counted one epoch every time the full gradient or one likelihood evaluation is completed. For sMMALA, we ignored the computational cost of computing the Hessian matrix, since depending on the problem, the actual cost of this will be highly variable.

As each algorithm that we studied has a different computational cost, each of them must return a sample of different size. 
In order to meaningfully compare the $D$-statistic across samples of different size, we divided the sample from each algorithm in 40 equal batches, which is a large enough number to give a detailed account of convergence, but not too large for our computational resources. We then produced the lines in Figure \ref{fig:convergencePlots} by calculating the $D$-statistic on subsamples formed by an increasing number of batches, with the first value of each line calculated only on one batch, and the last value of each line calculated on the whole sample. 
The Zig-Zag-based algorithms that we study in this work return a trajectory which is represented by a series of switching points. These trajectories then have to be uniformly subsampled to obtain a sample from the distribution of interest $\pi(\mathbf{x})$. In our experiments we extracted $6 \times 10^6$ samples from each trajectory.

%\FloatBarrier

%\newgeometry{left=.8cm,right=.8cm, top=.1cm, bottom=.1cm} %affects whole page

\begin{figure}
	\begin{subfigure}[t]{.5\textwidth}
		\centering
		\includegraphics[width=.95\linewidth]{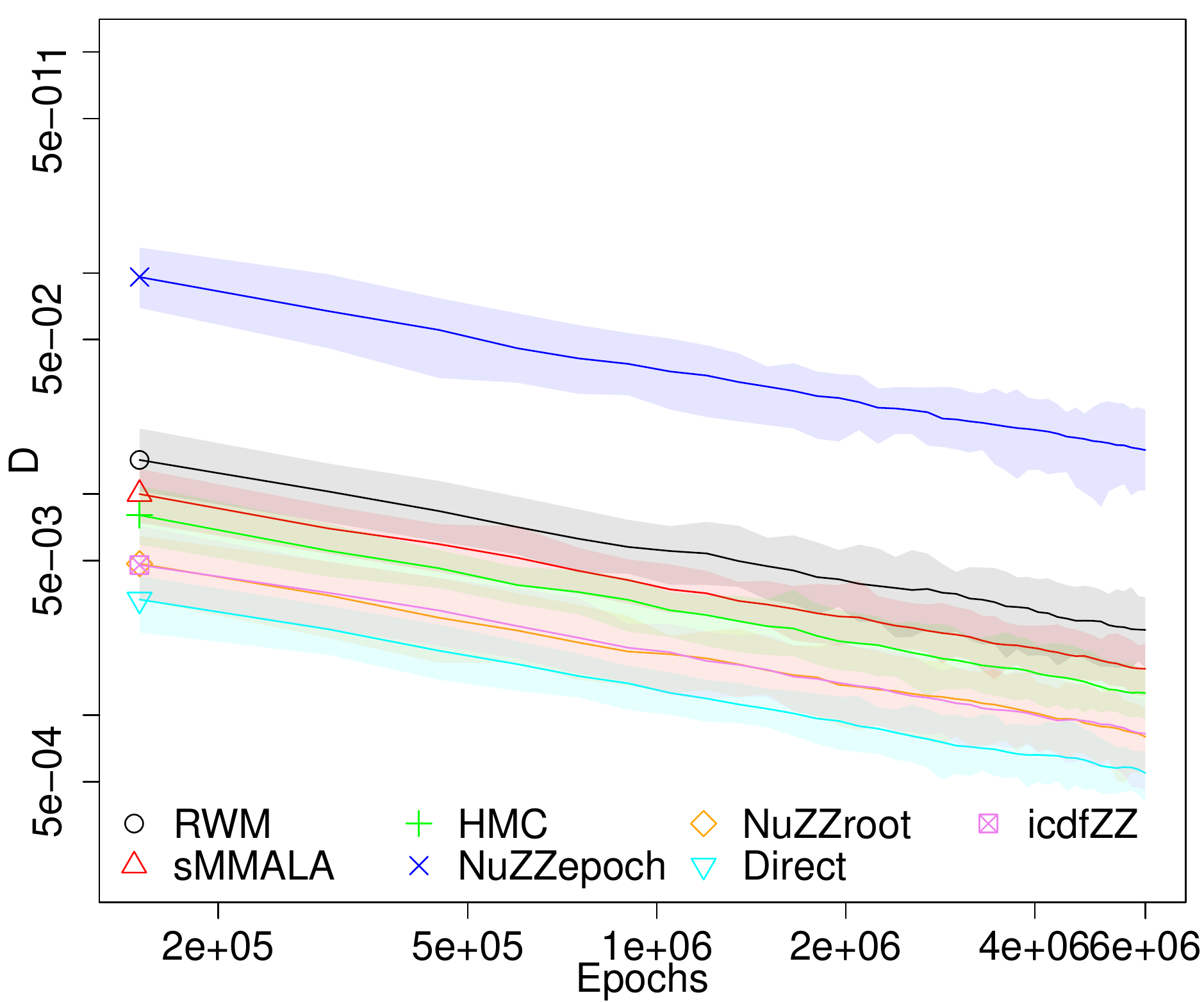}
		\caption{10-$d$ standard Normal.}
		\label{stdNorm10dconv}
	\end{subfigure}%
	\begin{subfigure}[t]{.5\textwidth}
		\centering
		\includegraphics[width=.95\linewidth]{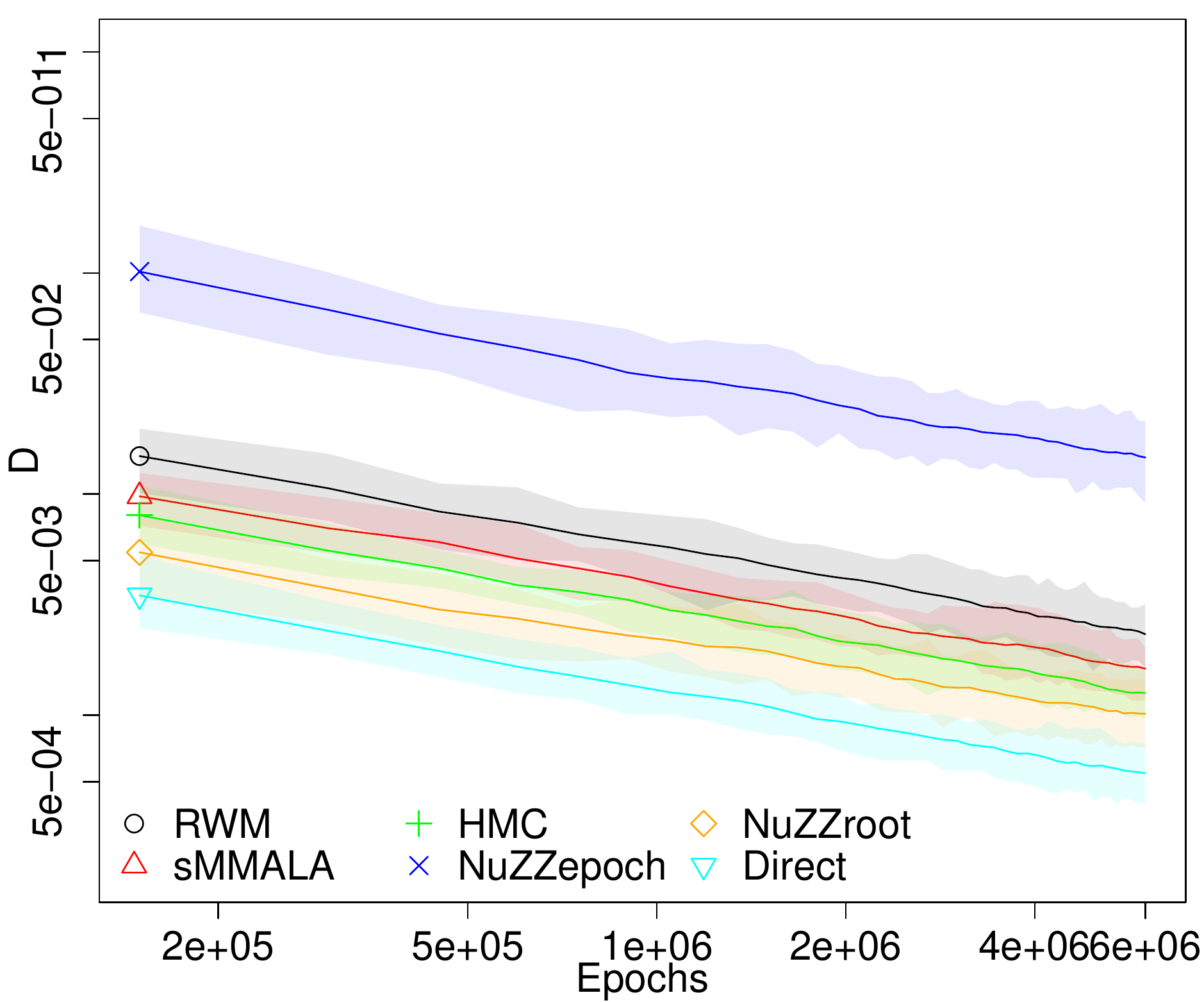}
		\caption{10-$d$ Normal, diagonal covariance with values $[1^2,2^2, \ldots,10^2]$.}
		\label{scaleNorm10dconv}
	\end{subfigure}
	
	\begin{subfigure}[t]{.5\textwidth}
		\centering
		\includegraphics[width=.95\linewidth]{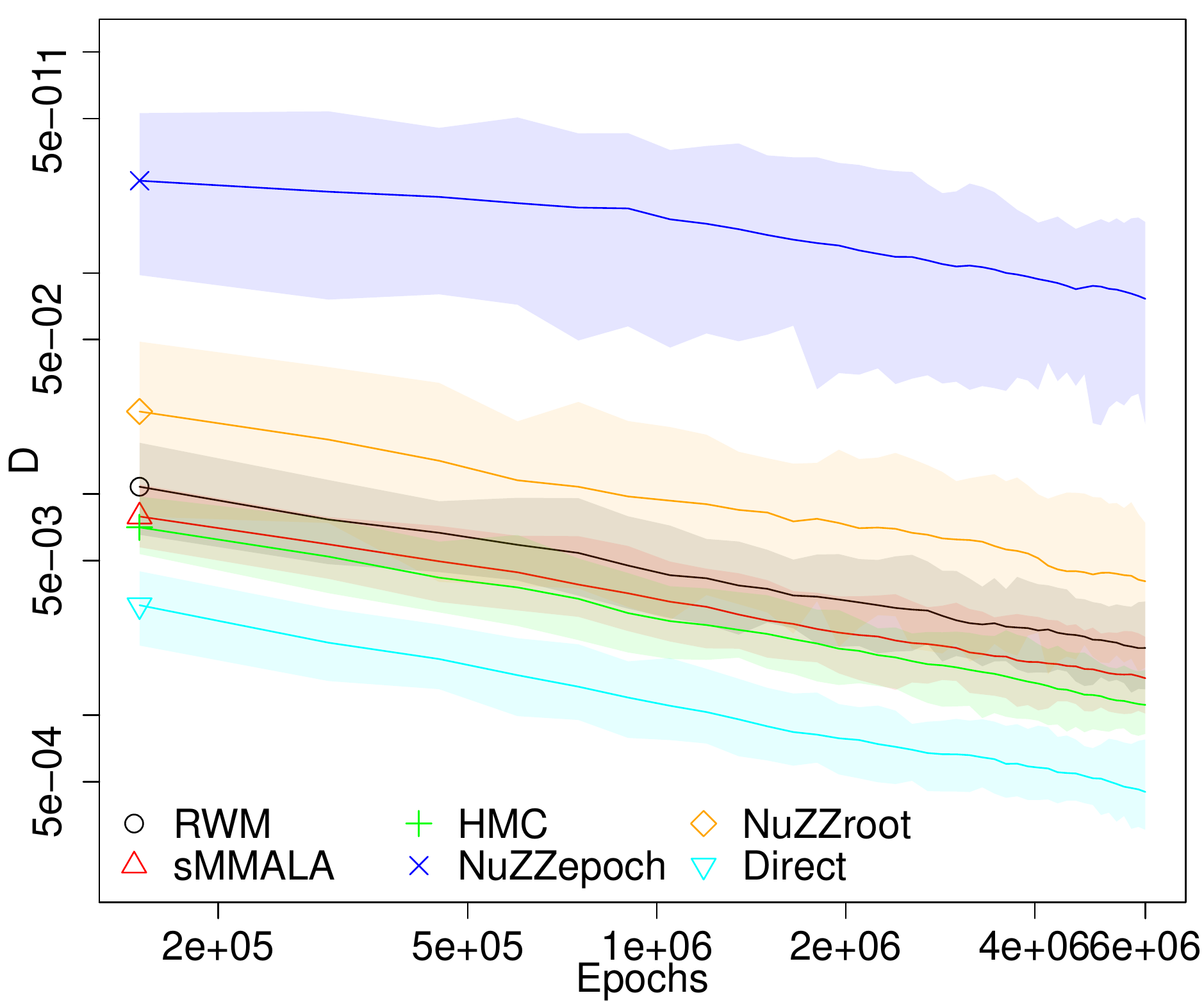}
		\caption{10-$d$ Normal with linear correlation.}
		\label{corrNorm10dconv}
	\end{subfigure}
	\begin{subfigure}[t]{.5\textwidth}
		\centering
		\includegraphics[width=.95\linewidth]{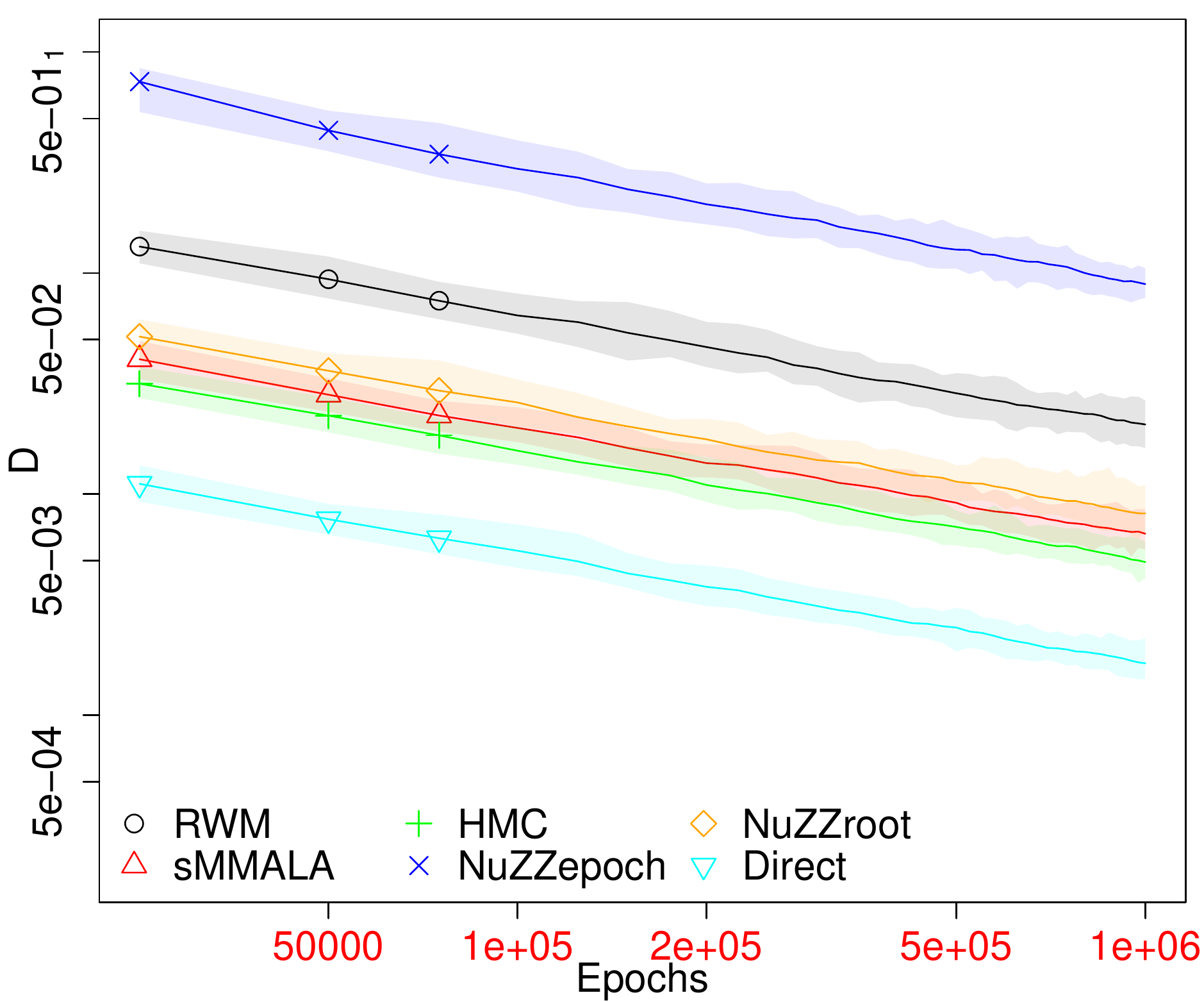}
		\caption{100-$d$ standard Normal.}
		\label{dimNorm100dconv} %8.3 x 6.81 inch
	\end{subfigure}
	
	\begin{subfigure}[t]{.5\textwidth}
		\centering
		\includegraphics[width=.95\linewidth]{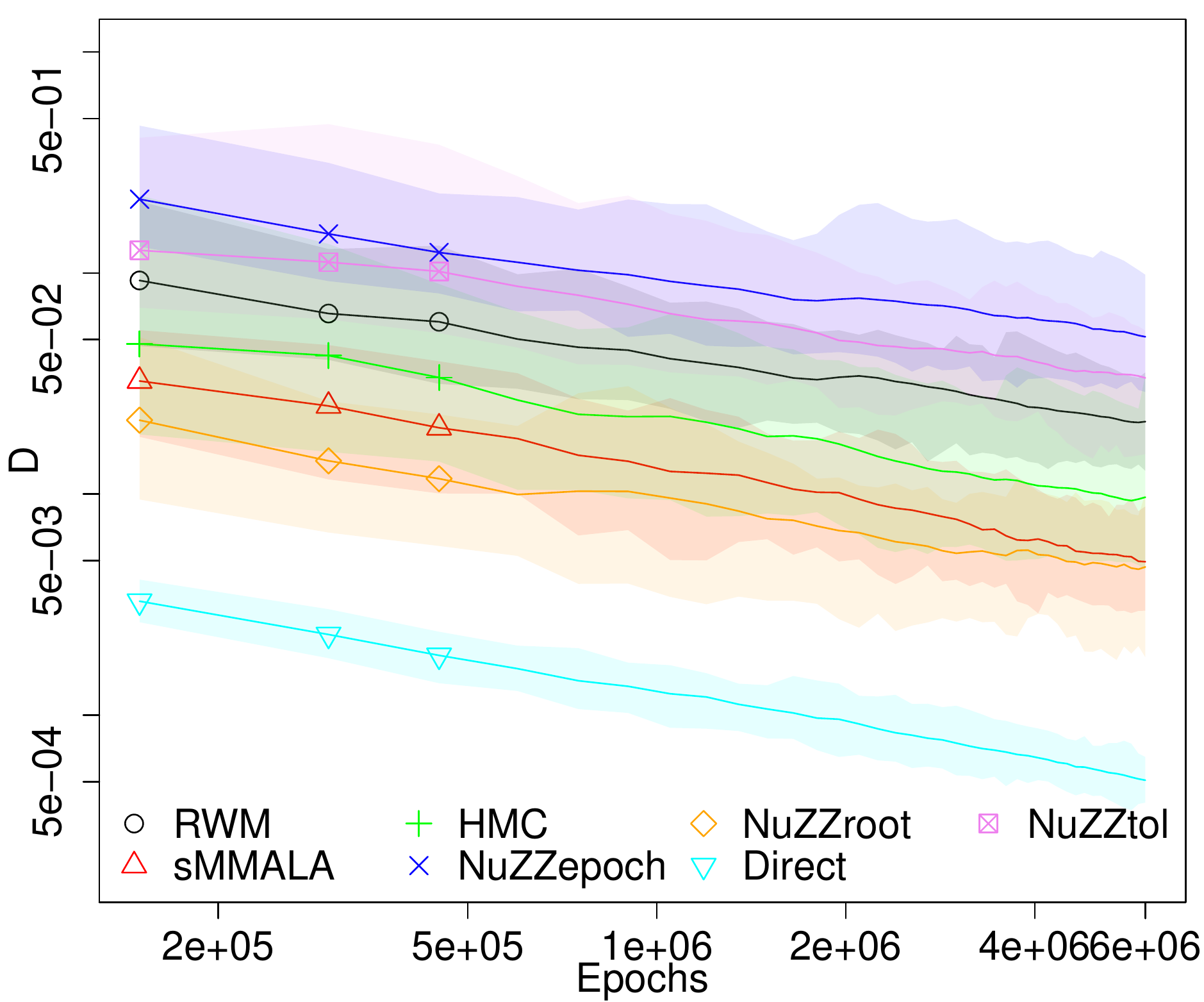}
		\caption{10-$d$ Student-\textit{t}, spherical covariance, one degree of freedom.}
		\label{tailsT10dconv}
	\end{subfigure}
	\begin{subfigure}[t]{.5\textwidth}
		\centering
		\includegraphics[width=.95\linewidth]{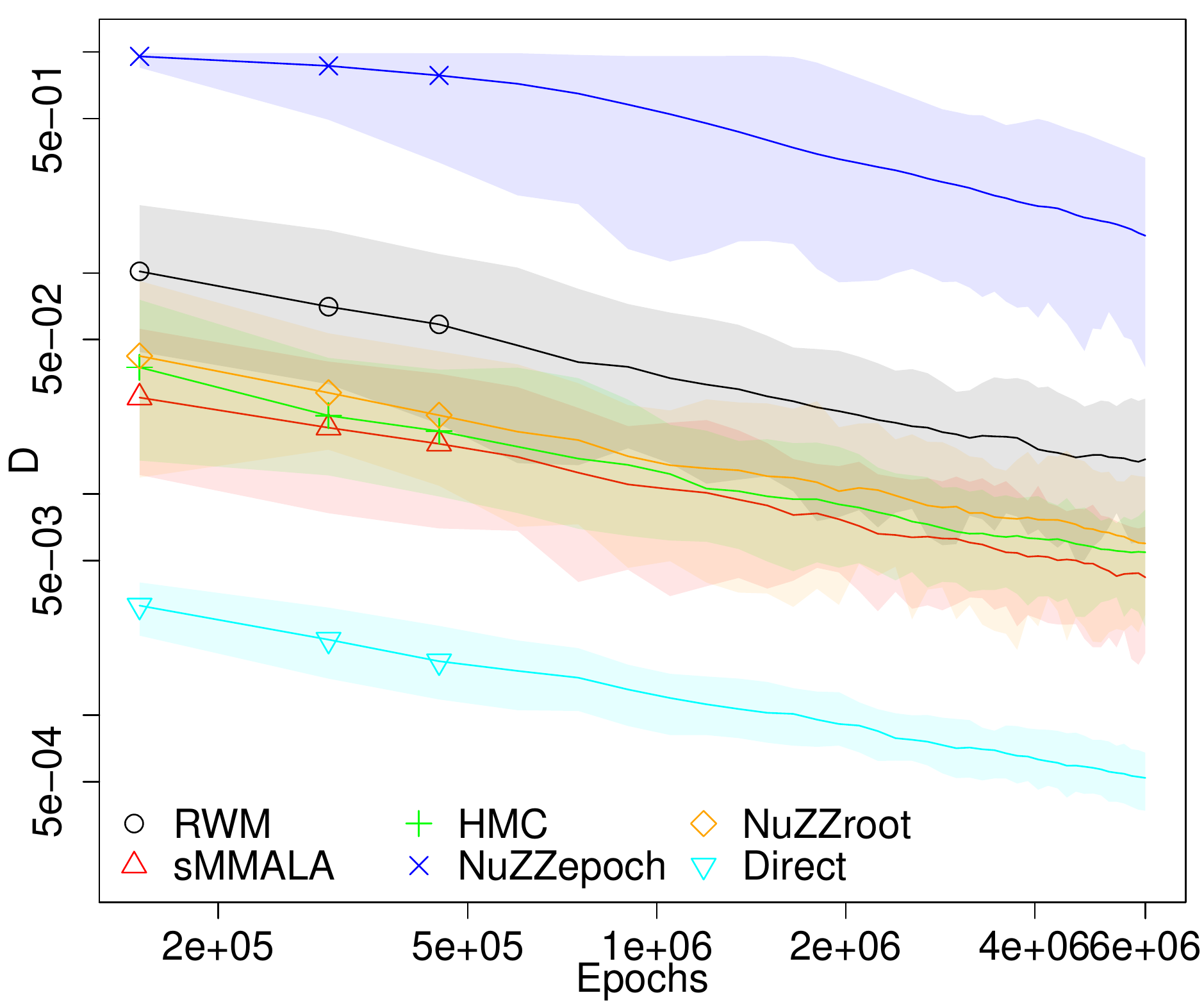}
		\caption{10-$d$ Hybrid Rosenbrock.}
		\label{curvRosen10dconvpch}
	\end{subfigure}

	\caption{Convergence plots. $D$ represents the largest Kolmogorov-Smirnov distance on the marginal distributions between the MCMC sample and the known target.} %plots of....
	\label{fig:convergencePlots}
\end{figure}

%\restoregeometry

%\FloatBarrier

%\subsection{Description of problems}

\subsection{Standard normal}
\label{sec:stdNorm}

The first model we study is a 10-dimensional standard normal distribution, which lacks any of the challenging features discussed later, and as such provides a baseline for algorithm performance. It is a model that has analytic solutions for the Zig-Zag Sampler, so it allows us to compare the performance of the Zig-Zag with cdf inversion (pink $\boxtimes$ line labelled \emph{icdfZZ} in Figure \ref{stdNorm10dconv}) with that of the NuZZroot algorithm. As one would expected, direct sampling from the target density shows the fastest convergence. 

The three popular algorithms RWM, sMMALA and HMC are all quite close to each other in terms of performance. HMC, which was tuned to take $3$ leapfrog steps with step size $0.6$ to achieve alternate autocorrelation (identity mass), performs slightly worse than sMMALA. Note that for normal distributions, the Hessian is constant, so there is no difference between sMMALA and standard MALA. The RWM converges quite well, but falls behind sMMALA and HMC in terms of performance.

The Zig-Zag-based algorithms are represented by the blue $\times$, yellow $\Diamond$, and pink $\boxtimes$ lines. As expected, NuZZepoch has the slowest convergence as root finding is expensive in terms of epochs. That leaves the algorithm with a budget of around $8\times 10^3$ switching points, which is not sufficient to explore the density as well as the other algorithms. If this algorithm represents a lower bound for the performance of numerically approximated Zig-Zag Samplers, the yellow $\Diamond$ line of NuZZroot represents the upper bound; its efficiency significantly outperforms every other MCMC algorithm tested.

Of particular interest is the convergence of the analytical version of the Zig-Zag Sampler (pink $\boxtimes$ line labelled icdfZZ) on this model. The algorithm's performance is practically indistinguishable from NuZZroot, providing further evidence supporting the discussion on numerical errors in this work.

\subsection{Different length scales}
\label{sec:diffScales}

Having measured the performance of the algorithms on a standard normal, we now proceed to introduce the features of interest which commonly appear in Bayesian inverse problems, starting with correlations.
Some algorithms, struggle on these targets, as the length of their trajectory is tuned on one particular scale. Therefore the trajectory will be too long for variables with smaller scales, doubling back and wasting computational resources, while it will be too short for variables with larger scales, giving a more correlated sample.

A common test problem with these feature is Neal's Normal~\parencite{neal2010}, an uncorrelated normal distribution (we take the dimension to be 10, to be consistent with the rest of the test problems) with variances $[1^2,2^2, \ldots,10^2]$. The results are shown in Figure \ref{scaleNorm10dconv}.

Tuning the components of the velocity vector $\mathbf{v}$ for Zig-Zag-based processes helps mixing for this problem.
The simplest choice for the initial velocities is $v_i=1 , i\in\{1,\ldots,d\}$, however it makes sense to have different velocities for components with different scales.
%The magnitudes of the elements in the vector $\mathbf{v}$, which do not change from their initial values during switches, specify the velocity of the process, and their signs correspond to the direction the process is following in each dimension. The sign of one component of the velocity is changed at every iteration, so that the process can properly explore the space.
When we tune the velocities of the NuZZ algorithm, we use the theoretical standard deviations $\sigma_i$ for each variable, and normalise them to have fixed speed $\Vert \mathbf{v} \Vert_2 = \sqrt{d}$, as the original process. Thus the individual values of the $v_i$ change, but the overall speed of the process remains the same. Explicitly, if we have a set of standard deviations, $\sigma_i$, then the velocity update we use is
\begin{equation}
v_i = \frac{\sigma_i}{\Vert \bm{\sigma} \Vert_2} \sqrt{d} \; .
\end{equation}
The velocities can also be tuned adaptively \parencite{roberts2007coupling} using the history of the process to calculate standard deviations $\sigma_i$ for each variable on the basis of observed sample properties during the adaptation phase. Our experiments suggest that the adaptively estimated velocities converge quickly to their final value, as they only need information about the relative scale of the components. 

As before, direct sampling is the most efficient method, while NuZZepoch is too expensive to provide a good sample. NuZZroot on the other hand, performs well, surpassing the Metropolis-based algorithms. RWM and sMMALA perform again on a similar level, as they apply the same kind of global conditioning to their proposal. HMC, outperforms both of them, with step size $0.6$, $3$ leapfrog steps, and mass matrix $M=\Sigma^{-1}$.

%\FloatBarrier

\subsection{Linear correlation}
\label{sec:linearCorr}

Another common issue that arises in practice is posteriors whose variables have strong linear correlation.
In this section we compare the performance of the algorithms on a 10-dimensional multivariate normal with covariance matrix
\begin{equation}
\label{Sigma}
\bf{\Sigma} = 
\begin{bmatrix} 
1 		& -\alpha 		& \cdots 	& \cdots	& -\alpha 		\\
-\alpha 	& 1		 	& \alpha		& \cdots	& \alpha 		\\
\vdots 	& \alpha		& \ddots	& 			& \vdots	\\ 
\vdots 	& \vdots	&			& \ddots	& \alpha		\\
-\alpha 	& \alpha		& \cdots	& \alpha		& 1
\end{bmatrix} \, ,
\end{equation}
where we pick $\alpha = 0.9$.
The results can be see in Figure \ref{corrNorm10dconv}.

Unsurprisingly, the performance of RWM and sMMALA is quite similar, as they both use rotation matrices to improve their proposals. 
%RWM was tuned to have acceptance rate $25\%$, while sMMALA was tuned to have $50\%$ acceptance. 
HMC was tuned with mass matrix $M=\Sigma^{-1}$, step size $0.6$ and $3$ Leapfrog steps, with a trajectory long enough to achieve alternate autocorrelation. HMC outperforms both the RW and MALA, even though the difference is not substantial. That is because as MALA is theoretically equivalent to HMC with one leapfrog step, naturally a HMC trajectory consisting of only three leapfrog steps will yield similar results.

The Zig-Zag algorithms do not perform well in this setting. The NuZZroot, performs worse than a well-tuned Random Walk. While a RW can take advantage of the whole matrix $\Sigma$, the strategy for tuning $\mathbf{v}$ applied in the previous section only utilises the marginal standard deviations, hence ignoring information on the correlation of the components.
%Hence, unless a more complete way to tune the velocities is found, Zig-Zag algorithms may not be a convenient choice if the target model has posteriors that are very strongly correlated.
This is largely due to the fact that the process can only travel in certain directions, and if these directions are not conducive with travelling along a ridge in the target density, then the process does not mix well. 
%Practically, if the posterior is very close to a multivariate normal then application of an appropriate global transform such as rotation might resolve the problem, but this is a matter for future work.
This highlights the importance of the work in \textcite{bertazzi2020}, where the authors obtain results on using the whole covariance matrix of the target to tune the dynamics of the process and achieve better mixing.
%Other results on linearly correlated targets and real data can be found in the Supplementary Material, Section \ref{app:HP}.
The theoretical linearly correlated target considered here is the closest example to the real-data example in the Supplementary Material, Section \ref{app:HP}, in terms of posterior properties and hence algorithm performance.

%\FloatBarrier

\subsection{High dimension}
\label{sec:largeDim}

The next example we study is a standard normal in 100 dimensions. Some algorithms are known to scale particularly poorly with dimension, e.g.\ RWM, while HMC is better adapted to this scenario. 
%We are not aware of any numerical experiments showing how well the Zig-Zag dynamics scales with dimension.
% That is a feature of interest, as the algorithm may need multiple switches to get from one side of the distribution to the other in one component, because the other components need change direction too. However, the more terms are summed into $\Lambda(\bm{x}(s),\bm{v})$, the smoother the function becomes, and the faster the integration routine will be.
For this particular experiment, we reduced the computational budget to $1 \times 10^6$ epochs, to be able to run the repeats in reasonable time. As a consequence, the KS distance is generally one order of magnitude larger than that for the other models. The result of our tests are shown in Figure \ref{dimNorm100dconv}.

The performance of HMC and sMMALA is quite similar, meaning that the use of gradient information has a positive effect on their performance on this problem. Conversely, the RWM performs significantly worse. 
%The acceptance rates were kept at 25\% for RW, 50\% for sMMALA and 65\% for HMC. 
No preconditioning was used for RWM, MALA  nor HMC, and HMC was tuned to have step size $.73$ and $3$ leapfrog steps ($\approx d^{1/4}$, \textcite{beskos2013}). The performance of HMC and sMMALA is still close as HMC only needs $3$ leapfrog steps to reach alternate autocorrelation. Moreover, it should be pointed out that even though theoretically MALA is equivalent to HMC with one leapfrog step, in practice due to the leapfrog scheme the cost of one HMC iteration with one leapfrog step is more expensive than one MALA iteration.

Interestingly, the yellow $\Diamond$ line corresponding to the NuZZroot is below the sMMALA and HMC line, suggesting that the Zig-Zag dynamics does not scale as well in high dimension, but it is still superior to RWM. 
%However, the yellow $\Diamond$ line is above both sMMALA and HMC. Hence gradient and Hessian information help the exploration more than having irreversible dynamics.

%\FloatBarrier

\subsection{Fat tails}
\label{sec:fatTails}

In this example we look at how the algorithms perform when the target has fat tails. This feature is particularly common in particle physics, where posteriors are often Cauchy-like, with power-law decay in their tails. The computational budget is set again to $6\times 10^6$ epochs, and the target distribution is a 10-dimensional student-$t$ with one degree of freedom, so that all moments are infinite.

In Figure \ref{tailsT10dconv}, the curves are tightly packed with overlapping confidence intervals. RWM, HMC and sMMALA display similar convergence speed, with sMMALA being the fastest. This may be due to the fact that the Hessian is not constant in this example, which may provide some helpful local information to the dynamics. HMC was tuned to take $20$ leapfrog steps, and a step size of $.3$. The mass matrix was taken to be the identity matrix, as there is no correlation in this example, and the marginals are all equal. As the moments of the distribution are infinite, we used a fixed identity matrix for the proposal.
%, and we tuned the step size to obtain an acceptance probability of about $25\%$ for RW, and $50\%$ for sMMALA.

The Zig-Zag-like algorithms perform quite well. The blue $\times$ line representing the NuZZepoch is near the others. The yellow $\Diamond$ line corresponding to the NuZZroot is the second lowest on the chart, outperformed only by direct sampling of the density. The good performance of the Zig-Zag-based algorithms is due to the fact that the process is more frequently able to make longer excursions into the tails than the other methods, due to the low gradient of the potential away from the mode.
%  That way the algorithm can effectively explore the tails without getting trapped in regions of low density, and speedily go back to the mode.

%In this example it was crucial to choose $\mathbf{\gamma}$ such that the exploration is dominated by the irreversible dynamics, as opposed to random switches. Using our pragmatic approach to this problem, as explained in section \ref{sec:tuning}, after choosing an initial value $\gamma = \gamma_i = 10^{-2}, \forall i=1,\ldots,d$, we run a few test simulations whereby we looked at the maximum switching time value $\tau_{\mathrm{max}}$ for each run. As $\gamma$ decreases, $\tau_{\mathrm{max}}$ tends to increase, as the dynamics becomes less diffusive. We noticed that in our example $\tau_{\mathrm{max}}$ stopped increasing at around $\gamma=10^{-4}$, suggesting there would not be any advantage in decreasing $\gamma$ (and the diffusivity of the process) further.

%This procedure for choosing $\mathbf{\gamma}$ is but a rule of thumb, and it
%can be thought of as similar to the procedure used when tuning the proposal
%distribution of a RWM using the sample covariance matrix obtained from a
%previous run.

%we choose a constant value for the sum of the elements of $\bm{\gamma}$, split equally among all components, so that reducing the sum of $\bm{\gamma}$ left the maximum switching time $\tau$ roughly constant across runs. 

In the Supplementary Material, Section \ref{app:tol}, we show numerical experiments to find the highest level of the numerical tolerances such that the the difference in the posterior sample in not detectable in this example. While one iteration of the adaptive integration routine is usually enough to fall within the tolerance $\varepsilon_\mathrm{int}$, preventing us from detecting its effect, we found that the numerical error overtakes the Monte Carlo error at a value of about $\varepsilon_\mathrm{Bre} = 10^{-2}$, in this example.  
The NuZZtol process simply corresponds to the NuZZepoch process run with an increased tolerance of $\varepsilon_\mathrm{Bre} = 10^{-2}$.

%\FloatBarrier

\subsection{Non-linear correlation}
\label{sec:curvedCorr}

We move on to study how the algorithms behave in the regime of curved (or non-constant, non-linear) correlation structure. A distribution belonging to this class cannot be indexed with a global covariance matrix, as global conditioning does not capture the local characteristics of the target. They commonly look like curved ridges, or `bananas', in the state space. These shapes are common in many fields of science, often in hierarchical models or models affected by some sort of degeneracy in their parameter space (e.g.\ \cite{house2016,des2017} ).

In this section we will use the Hybrid Rosenbrock distribution from \cite{pagani2021}, and more specifically the distribution
\begin{equation}
\label{ndrosen-hybrid2}
%\pi(\mathbf{x}) = \pi^{-\frac{d}{2}} \sqrt{a \cdot b_2 \cdot \ldots \cdot b_d } \; \exp \left \{ - a \, x_{1}^2 - \sum_{i=2}^{d} b_i (x_i - x_1^2)^2 \right \} \, ,
\pi(\mathbf{x}) = \sqrt{\frac{a \, b^{d-1}}{\pi^d}} \; \exp \left \{ - a \, x_{1}^2 - b \sum_{i=2}^{d} (x_i - x_1^2)^2 \right \} \, ,
\end{equation}
where $\mathbf{x} \in \mathbb{R}^d$, $a=2.5$, $b=50$. The parameters $a,b$ have been chosen so that the shape of the distribution does not pose an extreme challenge to the algorithms we are studying. The contours of this distribution generally look like those shown in Figure \ref{ZZrosenTraj}. The results of our experiments are shown in Figure \ref{curvRosen10dconvpch}.

It should be noted that the marginals of the Hybrid Rosenbrock are not known analytically, but it is possible to sample from them directly. To produce convergence plots based on the D statistics, we built empirical cumulative distribution functions for each marginal of our target, based on $2\times 10^{10}$ samples. The Monte Carlo error introduced in our analysis by this step is therefore negligible.

%besides the very large size of the sample for the distributions considered, all the algorithms involved in this test reach values of the Kolmogorov-Smirnov distance comparable to those achieved by the same algorithms in the other tests, which were performed on analytic marginal cdf's.

Even though the curvature is not extreme, the RWM performs significantly worse than the other standard MCMC algorithms. The sMMALA algorithm performs well, as it is designed to deal with this class of densities, and the gap with the other algorithms would be larger if the shape of the target was narrower and with longer tails. HMC works quite well, even though it is difficult to tune for this target. HMC was run with step size $0.03$, $20$ leapfrog steps, identity mass, aiming for an acceptance of 80\% \parencite{betancourt2015}. The HMC tuning was obtained by repeating the analysis with several different combinations of values for the step size and number of leapfrog steps, and it corresponds to the best ESS we were able to obtain, with speed of convergence close to sMMALA.

The first part of the NuZZepoch has KS distance close to one, which means that the process has not had enough time to escape the local region where it started. However, after $\approx 10^6$ epochs it starts converging like the other algorithms. The yellow $\Diamond$ line corresponding to the NuZZroot, is above sMMALA and HMC but close to them. This is quite a promising result, as HMC is notoriously difficult to tune, especially on this class of problems, while NuZZ requires little to no tuning at all. The velocity vector was not adapted from the initial values of $\mathbf{v} = (1,\ldots,1)^{\top}$ as done in Section \ref{sec:diffScales}, as the scales of the components are roughly the same.

\FloatBarrier

\section{Conclusions}\label{sec:con}

In this work we discussed how the standard Zig-Zag Sampler from \cite{bierkens2019bigdata} is limited in its applicability to models where an analytical solution is available, or when an efficient bound based on the Jacobian or Hessian is available. We proceeded to present the NuZZ algorithm, 
%which is based on a more efficient implementation of the Zig-Zag Sampler, and 
which carries out numerically the steps that would otherwise need analytical solutions or efficient bounds. As opposed to the standard Zig-Zag sampler, NuZZ can now be applied to general models, allowing us to study the properties of the Zig-Zag dynamics from the practitioner's point of view without having to compute potentially complicated thinning bounds beforehand.

Numerical errors in the sampling of the switching times in the NuZZ lead to a perturbation in the occupation measure of the process. We showed how a bound on the Wasserstein distance between the target and the invariant distribution is linearly dependent on the tolerances of both the numerical integration and root finding routines. This demonstrates that NuZZ can be used to sample from any given target distribution up to a prescribed degree of accuracy.

We tested Zig-Zag-based algorithms against some popular MCMC algorithms, i.e.\ Random Walk Metropolis, Hamiltonian Monte Carlo and Simplified Manifold MALA, on benchmark problems that display important features that often occur in practice. 
%Notably, NuZZ has significantly fewer parameters to tune than algorithms like HMC, and those that exist can be tuned quite simply, which makes it more robust to misspecification. 
NuZZ is an expensive algorithm to run, as the rate functions $\lambda_i$ have to be evaluated at each node of the numerical integrator, for each iteration of the root finding method. As such, NuZZ per se is not competitive for practical use with respect to other MCMC algorithms. However, NuZZ allowed us to study the Zig-Zag dynamics on important test problems. Our conclusion is that overall, the Zig-Zag dynamics are competitive with other popular MCMC dynamics, and in some cases outperforms them even in the absence of super-efficiency through sub-sampling, under the assumption that at most one epoch is used per switching event.
Importantly, the novel approach to quantify numerical errors in NuZZ is quite general, and can be applied to other PDMP-based MCMC algorithms.

While there are many questions remaining about the Zig-Zag process and PDMPs in MCMC in general, there are three main directions that naturally lead on from our work. A natural question is whether the numerical efficiency of NuZZ can be improved, for example by using more tightly integrated, bespoke integration and root-finding routines. Finally, the technique used to obtain bounds on the ergodic averages between the exact and numerically approximated can be extended to other PDMPs.

\section*{Acknowledgements}
SC and TH are supported by the Alan Turing Institute for Data Science and Artificial Intelligence. TH is additionally supported by the Royal Society. FP was supported with PhD funding by the Department of Mathematics at the University of Manchester. Research of FP and SP supported by EPSRC grant Bayes4Health, ‘New Approaches to Bayesian Data Science: Tackling Challenges from the Health Sciences’ (EP/R018561/1). Research of AC supported by EPSRC grant ‘CoSInES (COmputational Statistical INference for Engineering and Security)’ (EP/R034710/1).

\printbibliography

\newpage

\appendix

\section{The Sellke ZigZag process}
\label{app:SeZZ}

In this section we describe a method for augmenting the state space to relate the NuZZ to the construction~\parencite{sellke1983} discussed in Section
\ref{sec:SeZZ}.
This can be done by changing the structure of the Zig-Zag sampler, from a PDMP that jumps in conjunction with an arrival from a Poisson process, to a PDMP that jumps when one of the components touches an active boundary \parencite{davis1993}.  We call this Markov process the Sellke Zig-Zag (SeZZ), and this can be shown using methods from \textcite{davis1993} to target the correct invariant measure. All the processes considered in this section are one-dimensional
($d=1$).

%In the second augmentation, we can modify the generator of the Zig-Zag sampler %to accommodate the inclusion of the quantity $\eta_{\mathrm{Bre}}$ as defined %in the main paper. Despite the fact that SeZZ is an exact process and does not %include numerical errors, we can explicitly account for these in the 
%discrepancy variable $q(t)$.  In order to define the generator of NuZZ as a
%Markov process, further variables need to be added. In this framework, in order
%to bound the Wasserstein distance between the exact and approximate posterior
%with the Stein discrepancy, we formulated the generators for SeZZ and NuZZ as
%Stein operators, partly following \cite{huggins2017}. We abandoned this
%approach in favour of the indirect approach described in Section \ref{sec:err},
%although present it here since the reformulations may be of interest in other
%work.  

Let us define a discrepancy variable $q(t)$ as
\begin{equation}
\label{eq:qfunction}
q(t) = R - \int_0^t \Lambda(x(s),v) \dint{s} \, , \qquad t \in
[0,\tau] \, .
\end{equation}
The evolution of $q(t)$ in time can be seen as a jump process where the value of $q(t)$ decreases deterministically according to 
\begin{equation}
	\label{qDrift}
	\ddt{q} = - \Lambda(x(t),v)
\end{equation}
until it reaches $q(\tau^-)=0$. At that point it jumps to $q(\tau^+)=R \sim \mathrm{Exp}(1)$, and continues decreasing according to Equation \eqref{qDrift}, analogously to the way in which the differential equation \eqref{linearDyn} determines the dynamics of the state $x$. By adding the discrepancy variable $q(t)$ to the existing position $x(t)$ and velocity $v(t)$ variables, we can now define the SeZZ process on $(x,v,q)$ as a PDMP where a velocity flip is triggered by the variable $q(t)$ hitting the active boundary at 0.

In order to prove that this process targets the correct stationary distribution, we will prove that the forward Kolmogorov equation is equal to zero. However, as this process has an active boundary in the state space, the generator has to differ from Equation \eqref{eq:generatorZZ} using methods from \cite[p.118]{davis1993}.

Let the variable $q$ be defined on the space $\mathcal{Q} = [0,\infty)$, which can be partitioned into the subsets $\mathcal{Q}_0 = (0,\infty)$ and $\{0\}$. Let $E=\mathcal{X} \times \mathcal{V} \times \mathcal{Q}_0$ be the interior of the state space, with $\mathcal{X} = \mathbb{R}$ and $\mathcal{V}=\{-1,+1\}$, and $\mathcal{B} = \mathcal{X} \times \mathcal{V} \times \{0\}$ be an active boundary of the state space. 
While travelling through $E$, the process is described by the drift operator
\begin{equation}
\mathcal{D} f(x,v,q) = v \ppx{f}{x} - \Lambda(x,v)\ppx{f}{q} \; ,
\label{eg1}
\end{equation}
meaning that motion is completely deterministic and there are no jumps. When the variable $q(t)$ reaches zero, the process touches the active boundary $\mathcal{B}$ causing the velocity to switch, and $q$ is refreshed with a new exponential random variable. This change is expressed through the boundary operator
\begin{equation}
\label{eg2}
\mathcal{C} f(x,v,0) = \int_{\mathcal{Q}_0} {\rm e}^{-q} f(x,-v,q) \dint{q}
- f(x,v,0) \; . 
\end{equation}

The process SeZZ targets the stationary measure
\begin{equation}
\label{eq:stationary_measure_sezz}
	\mu( \dint{x} \times \dint{v} \times \dint{q}) = \pi(x) \, \psi(v) \, \varphi(q) \, \dint{x} \, \dint{v} \, \dint{q}
\end{equation}
if 
\begin{equation}
	\label{generatorBoundary}
	\int_E \mathcal{D} f(x,v,q) \, \mu( \dint{x} \times \dint{v} \times \dint{q})
	+ \int_\mathcal{B} \mathcal{C} f(x,v,0) \, \rho ( \dint{x} \times \dint{v}) = 0 \, ,
\end{equation}
where the measure $\rho$ is induced on the boundary by $\mu$ and the process.

\begin{theorem}
	\label{th:SeZZ}
	The one-dimensional SeZZ process with extended generator given by the operators 
	\begin{equation}
		\mathcal{D} f(x,v,q) = v \ppx{f}{x} - \Lambda(x,v)\ppx{f}{q}
	\end{equation}
	and
	\begin{equation}
		\mathcal{C} f(x,v,0) = \int_{\mathcal{Q}_0} {\rm e}^{-q} f(x,-v,q) \dint{q}
		- f(x,v,0)
	\end{equation}
	targets the marginal stationary distribution $\pi(x)$.
\end{theorem}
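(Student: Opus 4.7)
The plan is to verify the stationarity condition \eqref{generatorBoundary} directly with the candidate measure
\[
\mu(\dint x \times \dint v \times \dint q) = \pi(x)\,\psi(v)\,\varphi(q)\,\dint x\,\dint v\,\dint q,
\]
where $\psi$ is uniform on $\{-1,+1\}$ and $\varphi(q)=e^{-q}$ on $[0,\infty)$ (the Exp$(1)$ density with which $q$ is refreshed). The argument breaks into (i) identifying the right boundary measure $\rho$, (ii) performing two integrations by parts in the bulk integral, and (iii) recognising that the surviving terms cancel via the defining identity of the Zig-Zag switching rate.

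First I would compute $\rho$ from the flux of the deterministic drift through $\mathcal{B}=\mathcal{X}\times\mathcal{V}\times\{0\}$. Since $\dot q = -\Lambda(x,v)<0$, mass enters $\mathcal{B}$ at rate $\Lambda(x,v)\,\varphi(0)=\Lambda(x,v)$, so Davis's theory of PDMPs with active boundaries gives $\rho(\dint x\times \dint v) = \Lambda(x,v)\,\pi(x)\,\psi(v)\,\dint x\,\dint v$. Getting this measure correct is the one conceptual (as opposed to computational) step of the proof; I will verify it by checking that the overall identity closes, which it must if $\rho$ is chosen self-consistently.

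Next I would expand
\[
\int_E \mathcal{D}f\,\dint \mu = \int_E v\,\partial_x f\,\pi(x)\psi(v)\varphi(q)\,\dint x\,\dint v\,\dint q \;-\; \int_E \Lambda(x,v)\,\partial_q f\,\pi(x)\psi(v)\varphi(q)\,\dint x\,\dint v\,\dint q
\]
and integrate each term by parts. In the first term, integration by parts in $x$ (with standard decay assumptions on $f\cdot\pi$ at $\pm\infty$) produces $-\int_E v\,f\,\partial_x\pi\cdot\psi(v)\varphi(q)\,\dint x\,\dint v\,\dint q$. In the second term, integration by parts in $q$ on $[0,\infty)$ produces a boundary contribution $+\int_{\mathcal{X}\times\mathcal{V}} \Lambda(x,v)f(x,v,0)\pi(x)\psi(v)\,\dint x\,\dint v$ at $q=0$ (the $q\to\infty$ boundary vanishes since $\varphi\to 0$), together with $-\int_E \Lambda(x,v)f(x,v,q)\pi(x)\psi(v)\varphi(q)\,\dint x\,\dint v\,\dint q$, using $\partial_q\varphi=-\varphi$.

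Finally I would add $\int_{\mathcal{B}} \mathcal{C}f\,\dint \rho$. The $-f(x,v,0)$ piece of $\mathcal{C}$ cancels exactly the $q=0$ boundary term just produced. In the remaining refreshment piece, the change of variables $v\mapsto -v$ (under which $\psi$ is invariant) rewrites $\int_E f(x,-v,q)\Lambda(x,v)\pi(x)\psi(v)\varphi(q)\,\dint x\,\dint v\,\dint q$ as $\int_E f(x,v,q)\Lambda(x,-v)\pi(x)\psi(v)\varphi(q)\,\dint x\,\dint v\,\dint q$. Collecting what survives, the identity reduces to
\[
-\int_E v\,f\,\partial_x\pi\,\psi\varphi\,\dint x\,\dint v\,\dint q \;+\; \int_E f\,\bigl(\Lambda(x,-v)-\Lambda(x,v)\bigr)\pi\,\psi\varphi\,\dint x\,\dint v\,\dint q \;=\; 0,
\]
which follows from the Zig-Zag rate identity $\Lambda(x,v)-\Lambda(x,-v) = -v\,\partial_x\log\pi(x)$ since $(\Lambda(x,-v)-\Lambda(x,v))\pi(x) = v\,\partial_x\pi(x)$. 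The main obstacle is bookkeeping rather than deep: making sure the signs in the two integrations by parts and the direction of the boundary flux (drift into $\{q=0\}$) are all consistent. Marginalising out $v$ and $q$ from $\mu$ then gives $\pi(x)$ as required.
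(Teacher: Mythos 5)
Your proof is correct and follows the same route as the paper's: integrate the drift term by parts in $x$ and $q$, cancel the resulting $q=0$ boundary term against the $-f(x,v,0)$ piece of $\mathcal{C}$, change variables $v\mapsto -v$ in the refreshment integral, and conclude from the Zig-Zag rate identity $\Lambda(x,v)-\Lambda(x,-v)=-v\,\partial_x\log\pi(x)$. The only cosmetic difference is that the paper takes the form of the boundary measure $\rho$ from an external reference rather than deriving it from the drift flux as you do, but the measure and all subsequent calculations coincide exactly with yours.
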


\begin{proof}

Let the function $f: E \to \mathbb{R}$ be absolutely continuous and measurable \cite[p.118, 82]{davis1993}, and let $\int_E | f(x,v,q)-f(x_{t^-},v_{t^-},q_{t^-}) | \dint{\varsigma} < \infty$, where $\varsigma$ is the measure $\varsigma = \sum_{T_k \le t} \delta ( (x_{T_k},v_{T_k},q_{T_k}) ,T_k)$, $\forall t \ge 0$, with $T_k$ representing the jumping times and $\delta_r$ being the Dirac measure at $r \in E \times \mathbb{R}$ \cite[p.367]{davis1984}. 
Let the stationary measure be as in Equation \eqref{eq:stationary_measure_sezz},
and following \cite{lopker2013}, let the boundary measure be
\begin{equation}
\rho\, ( \dint{x} \times \dint{v}) = \Lambda(x,v) \pi(x) \, \psi(v) \, \varphi(0) \, \dint{x} \, \dint{v}
\, .
\end{equation}
Following~\cite[p.118]{davis1993}, we will show that SeZZ has $\pi(x)$ as marginal stationary distribution by proving that Equation \eqref{generatorBoundary} is satisfied for
\begin{equation}
\pi(x) \, \psi(v) \, \varphi(q) = \frac{1}{Z} {\rm e}^{-U(x)} \times \frac{1}{2} \times {\rm e}^{-q}
\, .
\end{equation}

Substituting the quantities above into Equation \eqref{generatorBoundary} above, and assuming that  $\pi(x)=0$ at $x=\pm \infty$, we obtain
\begin{multline}
\label{generatorBoundary2}
\int_\mathcal{X} \sum_{v\in\mathcal{V}} \int_{\mathcal{Q}_0} 
\left( v \left. \ppx{f}{x}\right|_{(x,v,q)} -
\Lambda(x,v) \left. \ppx{f}{q}\right|_{(x,v,q)} \right)
{\rm e}^{-U(x)-q} \, \dint{q} \, \dint{x}  \\
+ \int_\mathcal{X} \sum_{v\in\mathcal{V}}
\left(  \int_{\mathcal{Q}_0}
f(x,-v,q){\rm e}^{-q} \dint{q}-f(x,v,0)\right)
\Lambda(x,v) \, {\rm e}^{-U(x)} \, \dint{x} = 0 \, ,
\end{multline}
where the common constant $1/(2Z)$ has been cancelled.  The equality follows by integration by parts and rearranging terms.
	
Our starting point is Equation \eqref{generatorBoundary2} above, which we will re-write as:
\begin{align}
\label{generatorBoundary3}
& \overbrace{\int_\mathcal{X} \sum_{v\in\mathcal{V}} \int_{\mathcal{Q}_0} 
	v \left. \ppx{f}{x}\right|_{(x,v,q)} 
	{\rm e}^{-U(x)-q} \, \dint{q} \, \dint{x}}^{\text{First term, }I_1} + \\
& -
\overbrace{
	\int_\mathcal{X} \sum_{v\in\mathcal{V}} \int_{\mathcal{Q}_0} 
	\Lambda(x,v) \left. \ppx{f}{q}\right|_{(x,v,q)} 
	{\rm e}^{-U(x)-q} \, \dint{q} \, \dint{x}}^{\text{Second term, }I_2} \nonumber \\
& +
\underbrace{\int_\mathcal{X} \sum_{v\in\mathcal{V}}
	\left(  \int_{\mathcal{Q}_0}
	f(x,-v,q){\rm e}^{-q} \dint{q}-f(x,v,0)\right)
	\Lambda(x,v) \, {\rm e}^{-U(x)} \, \dint{x}}_{\text{Third term, }I_3}
= 0 \, . \nonumber
\end{align}
To verify this equation, we proceed by analysing each of the three terms individually.  We will often exchange the order of integration in integrals, making use of the Fubini-Tonelli theorem.  Rearranging the factors and applying integration by parts, the first term becomes
\begin{equation}
\begin{aligned}
I_1 & =
\int_\mathcal{X} \sum_{v\in\mathcal{V}} \int_{\mathcal{Q}_0} 
v \left. \ppx{f}{x}\right|_{(x,v,q)} 
{\rm e}^{-U(x)-q} \, \dint{q} \, \dint{x} \\
%\theta
& = \sum_{v\in\mathcal{V}} \int_{\mathcal{Q}_0} v{\rm e}^{-q}\int_\mathcal{X}  
\left. \ppx{f}{x}\right|_{(x,v,q)} {\rm e}^{-U(x)}  \, \dint{x} \, \dint{q} \\
& = \sum_{v\in\mathcal{V}} \int_{\mathcal{Q}_0} v{\rm e}^{-q}
\left([f(x,v,q) {\rm e}^{-U(x)}]_{x=-\infty}^{\infty} + 
\int_{\mathcal{X}} f(x,v,q) \ddx{U} {\rm e}^{-U(x)}  \, \dint{x}\right)  \, \dint{q} \\
& = \int_{\mathcal{X}} \sum_{v\in\mathcal{V}} \int_{\mathcal{Q}_0}
v f(x,v,q) \ddx{U} {\rm e}^{-U(x)-q}  \, \dint{q} \, \dint{x} \, ,
\end{aligned}
\label{I1eq}
\end{equation}
where the term in square brackets on the third line is zero as we work with target densities
$\pi(x) \propto {\rm e}^{-U(x)}$ that tend to zero as $x \rightarrow \pm
\infty$.  Again, we rearrange the factors and apply integration by parts to the
second term:
\begin{equation}
\begin{aligned}
I_2 & =
\int_\mathcal{X} \sum_{v\in\mathcal{V}} \int_{\mathcal{Q}_0} 
\Lambda(x,v) \left. \ppx{f}{q}\right|_{(x,v,q)} 
{\rm e}^{-U(x)-q} \, \dint{q} \, \dint{x} \\
& = \int_\mathcal{X} \sum_{v\in\mathcal{V}} \Lambda(x,v) {\rm e}^{-U(x)}
\int_{\mathcal{Q}_0} \left. \ppx{f}{q}\right|_{(x,v,q)} {\rm e}^{-q}
 \, \dint{q} \, \dint{x} \\
& = \int_\mathcal{X} \sum_{v\in\mathcal{V}} \Lambda(x,v) {\rm e}^{-U(x)}
\left( [f(x,v,q) {\rm e}^{-q}]_{q=0}^{\infty} -
\int_{\mathcal{Q}_0} f(x,v,q) \ddxx{{\rm e}^{-q}}{q}
 \, \dint{q} \right) \, \dint{x} \\
& = \int_\mathcal{X} \sum_{v\in\mathcal{V}} \int_{\mathcal{Q}_0}
\Lambda(x,v) f(x,v,q) {\rm e}^{-U(x)-q} \, \dint{q} \, \dint{x}
- \int_\mathcal{X} \sum_{v\in\mathcal{V}} 
\Lambda(x,v) f(x,v,0) {\rm e}^{-U(x)} \, \dint{x} \, .
\end{aligned}
\label{I2eq}
\end{equation}
Finally, the third term reduces to
\begin{equation}
\begin{aligned}
I_3 & =
\int_\mathcal{X} \sum_{v\in\mathcal{V}} \left(  \int_{\mathcal{Q}_0} f(x,-v,q){\rm e}^{-q}  \, \dint{q} -f (x,v,0)\right) \Lambda(x,v) {\rm e}^{-U(x)} \, \dint{x} \\ 
& = \int_\mathcal{X} \sum_{v\in\mathcal{V}}
\int_{\mathcal{Q}_0} f(x,-v,q){\rm e}^{-q} \Lambda(x,v){\rm e}^{-U(x)} \, \dint{q}  \, \dint{x} - \int_\mathcal{X} \sum_{v\in\mathcal{V}} f(x,v,0) \Lambda(x,v) {\rm e}^{-U(x)} \, \dint{x} \\ 
& = \int_\mathcal{X} \sum_{v\in\mathcal{V}} \int_{\mathcal{Q}_0} \Lambda(x,-v)f(x,v,q){\rm e}^{-U(x)-q} \, \dint{q} \, \dint{x} - \int_\mathcal{X} \sum_{v\in\mathcal{V}} \Lambda(x,v) f(x,v,0) {\rm e}^{-U(x)} \, \dint{x} \, .
\end{aligned}
\label{I3eq}
\end{equation}
Substituting \eqref{I1eq}, \eqref{I2eq} and \eqref{I3eq} into
\eqref{generatorBoundary3}, we obtain
\begin{align}
\label{I123eq}
I_1 - I_2 + I_3 & = 
\int_\mathcal{X} \sum_{v\in\mathcal{V}}
\int_{\mathcal{Q}_0} \left(  
v \ddx{U} - (\Lambda(x,v) - \Lambda(x,-v)) \right) f(x,v,0) \, 
{\rm e}^{-U(x)-q}  \, \dint{q} \, \dint{x} \\
& = 0 \, ,
\end{align}
which will hold if
\begin{equation}
v \ddx{U} = \Lambda(x,v) - \Lambda(x,-v) \, .
\label{vU}
\end{equation}
Then from \eqref{lambda}, we have
\begin{equation}
\Lambda(x, v) = \left ( 0 \vee v \ddx{U} \right ) + \Gamma(x, v) \; .
\label{univlambda}
\end{equation}
Substituting \eqref{univlambda} into \eqref{vU}, we see that \eqref{vU} is satisfied and hence \eqref{I123eq} is satisfied, meaning that we have demonstrated Equation \eqref{generatorBoundary2} as required.

\end{proof}

The SeZZ process forms the basis for NuZZ, which uses numerical approximations to find the roots of \eqref{eq:qfunction}. Since these approximations are dependent on the last switching location through the generation of quadrature rules, NuZZ is not Markov. However expansion of the state space to include the last switching point does induce a Markov process, but the calculations that this gives rise to have proven unwieldy, hence the approach taken in this work.

\section{Proof of Proposition \ref{prop:gillespie}}
\label{gillespieProof}

Following \parencite{davis1984,davis1993}, the generator of the process
described in the statement of the result is
\begin{equation}
\label{generatorGill}
\mathcal{U}_{\Lambda}f(\mathbf{x},\mathbf{v}) = \sum_{i=1}^d
v_i \frac{\partial f}{\partial x_i} +
\Lambda(\mathbf{x},\mathbf{v}) \left( \sum_{r=1}^d
\frac{\lambda_r(\mathbf{x},\mathbf{v})}{\Lambda(\mathbf{x},\mathbf{v})}
( f(\mathbf{x},\mathbf{F}_r(\mathbf{v}))-f(\mathbf{x},\mathbf{v}) ) \right)
\, ,
\end{equation}
where the first term represents the drift and the second the velocity switch. The process switches velocities with overall rate $\Lambda$, whereby the index of the switching component $r$ is sampled according to the multinomial transition kernel in \eqref{multinomialindex}. By carrying out some standard manipulations, Equation \eqref{generatorGill} reduces to
\begin{equation*}
\mathcal{U}_{\Lambda}f(\mathbf{x},\mathbf{v}) = \sum_{i=1}^d \left( v_i
\frac{\partial f(\mathbf{x},\mathbf{v})}{\partial x_i} +
\lambda_i(\mathbf{x},\mathbf{v})
\left(f(\mathbf{x},\mathbf{F}_i(\mathbf{v}))-f(\mathbf{x},\mathbf{v})\right) \right)
\, ,
\end{equation*}
which is identical to the generator of the original Zig-Zag process, and therefore the two processes are equivalent.

\section{Linear bound for the ZZCV algorithm}
\label{proofZZCVbound}

In the case of a ZZCV algorithm on a Logit model, the bound $\bar{\lambda}_i$ is linear, i.e.  $\bar{\lambda}_i (\mathbf{x}(s),\mathbf{v}) = a_i(\mathbf{x},\mathbf{v}) + s \, b_i(\mathbf{x},\mathbf{v})$, and can be obtained as
\begin{align}
\lambda_i^j(t) & = (v_i \partial_i U^j(\mathbf{x} + \mathbf{v} t))^+ \\
%& = \left ( v_i \partial_i U(\mathbf{x}) + v_i \partial_i U(\mathbf{x}+\mathbf{v} t) -  v_i \frac{1}{n} \sum_j \partial_i U^j(\mathbf{x}) \right )^+ \\
& = \left ( v_i \partial_i U(\mathbf{x}^*) + v_i \partial_i U(\mathbf{x}+\mathbf{v} t) -  v_i \partial_i U^j(\mathbf{x}^*) \right )^+ \\
& = \left ( v_i \partial_i U(\mathbf{x}^*) + v_i \partial_i U^j(\mathbf{x}) - v_i \partial_i U^j(\mathbf{x}^*) + v_i \partial_i U(\mathbf{x}+\mathbf{v} t) - v_i \partial_i U^j(\mathbf{x}) \right )^+ \\
& \leqslant (v_i \partial_i U(\mathbf{x}^*) + C_i \Vert \mathbf{x} - \mathbf{x}^* \Vert_2 + t \, C_i \sqrt{d})^+ \\
& \leqslant C_i \Vert \mathbf{x} - \mathbf{x}^* \Vert_2 + t \, C_i \sqrt{d}
=: a_i + t \, b_i =: \bar{\lambda}_i(t) \, . \label{zzcv} %\\
%\bar{\Lambda}(t) & = \sum_{i=1}^d \bar{\lambda}_i(t) = \sum_i a_i + t \sum_i b_i = A + t \, B
\end{align}
In Equation \eqref{zzcv} above, we have applied control variates, added and subtracted the term $v_i \partial_i U^j(\mathbf{x})$, and applied the Lipschitz condition. We moved the terms $a_i := C_i \Vert \mathbf{x} - \mathbf{x}^* \Vert_2$ and $b_i := C_i \sqrt{d}$ out of the maximum $()^+$ function as they are both always positive, the Lipschitz constant $C_i$ being calculated as a $2-$norm. We then cancelled the term $(v_i \partial_i U(\mathbf{x}^*) )^+$
remaining in the bracket as the derivative at the MLE vanishes.

Now Equation \eqref{eq:sellketGeneral} can be rewritten as
\begin{equation}
\int_0^\tau \sum_{i=1}^d \left ( a_i + s \, b_i \right ) \dint{s}
=\int_0^\tau (A + s B) \dint{s}
= R  \label{eq:abR}
\, ,
\end{equation}
where $A = \sum_{i=1}^{d}a_i$ and $B= \sum_{i=1}^{d}b_i$.  Equation~\eqref{eq:abR}
can then be solved analytically, and $i^*$ can be sampled from a multinomial
with cell probability
$\bar{\lambda}_i(\mathbf{x}(\tau),\mathbf{v})/\Lambda(\mathbf{x}(\tau),\mathbf{v})=(a_i
+ \tau b_i)/(A + \tau B)$. Finally, the proposed $\tau$ is accepted as a
switching time with probability
$\lambda_i(\mathbf{x}(\tau),\mathbf{v})/\bar{\lambda}_i(\mathbf{x}(\tau),\mathbf{v})$,
otherwise the dynamics moves on to $\mathbf{x}+\tau \mathbf{v}$, $\mathbf{v}$
is unchanged, and a new potential switching time is sampled.

\section{Wasserstein distance, preliminary results}
\label{app:wasserstein}

\begin{definition}
\label{def:wass-erg-coef}
The generalised Wasserstein ergodicity coefficient in terms of the 2-Wasserstein distance is defined as
\begin{equation}
    \kappa(K) = \sup_{\mathbf{x}_1,\mathbf{x}_2} \frac{d_{W}(\delta_{\mathbf{x}_1} P, \delta_{\mathbf{x}_2} P)}{\| \mathbf{x}_1 - \mathbf{x}_2 \|_2} \, .
\end{equation}
\end{definition}
In this work we assume that both $\kappa(P) < +\infty$ and $\kappa(\widetilde{P}) < +\infty$. %Definition \ref{def:wass-erg-coef} holds.

\begin{proposition} 
\label{pr:W_contract_submult}
The 2-Wasserstein distance is contractive and submultiplicative, i.e. for any measures $\mu_1,\mu_2$ with second moments and any Markov kernels $P$ and $\widetilde{P}$ such that $\kappa(P) < +\infty$, $\kappa(\widetilde{P}) < +\infty$:
\begin{enumerate}
    \item $d_{W}(\mu_1 P,\mu_2 P) \leqslant \kappa(P) d_{W}(\mu_1,\mu_2)$,
    \item $\kappa(P \widetilde{P}) \leqslant \kappa(P) \kappa(\widetilde{P})$ \, .
\end{enumerate}
\end{proposition}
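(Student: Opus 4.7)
The plan is to prove the two claims in sequence, with the first being the substantive one and the second being a direct consequence.

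For the contractivity statement, I would work directly from the coupling definition \eqref{eq:wasserstein_dist_def}. Fix an optimal coupling $\xi^{*} \in \Xi(\mu_1, \mu_2)$ attaining the infimum (or an $\varepsilon$-optimal one, if the infimum is not attained) so that $\int_{E \times E} \| \mathbf{x}_1 - \mathbf{x}_2 \|_2^2 \, \xi^{*}(\dint \mathbf{x}_1, \dint \mathbf{x}_2) = d_W^2(\mu_1, \mu_2)$. For each pair $(\mathbf{x}_1, \mathbf{x}_2)$, let $\pi^{*}_{\mathbf{x}_1, \mathbf{x}_2}$ denote an optimal coupling of $\delta_{\mathbf{x}_1} P$ and $\delta_{\mathbf{x}_2} P$ on $E \times E$. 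Then define a coupling of $\mu_1 P$ and $\mu_2 P$ by
\begin{equation*}
\Pi(\dint \mathbf{y}_1, \dint \mathbf{y}_2) = \int_{E \times E} \pi^{*}_{\mathbf{x}_1, \mathbf{x}_2}(\dint \mathbf{y}_1, \dint \mathbf{y}_2) \, \xi^{*}(\dint \mathbf{x}_1, \dint \mathbf{x}_2),
\end{equation*}
which has the correct marginals by Fubini. Bounding the 2-Wasserstein distance by this particular coupling and applying the definition of the ergodicity coefficient $\kappa(P)$ yields
\begin{equation*}
d_W^2(\mu_1 P, \mu_2 P) \leqslant \int d_W^2(\delta_{\mathbf{x}_1} P, \delta_{\mathbf{x}_2} P) \, \xi^{*}(\dint \mathbf{x}_1, \dint \mathbf{x}_2) \leqslant \kappa(P)^2 \int \| \mathbf{x}_1 - \mathbf{x}_2 \|_2^2 \, \xi^{*}(\dint \mathbf{x}_1, \dint \mathbf{x}_2),
\end{equation*}
and taking square roots gives claim (1).

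For submultiplicativity, apply the first statement with $\mu_1 = \delta_{\mathbf{x}_1}$, $\mu_2 = \delta_{\mathbf{x}_2}$, and kernel $\widetilde{P}$ to obtain $d_W(\delta_{\mathbf{x}_1} P \widetilde{P}, \delta_{\mathbf{x}_2} P \widetilde{P}) \leqslant \kappa(\widetilde{P}) \, d_W(\delta_{\mathbf{x}_1} P, \delta_{\mathbf{x}_2} P) \leqslant \kappa(\widetilde{P}) \kappa(P) \| \mathbf{x}_1 - \mathbf{x}_2 \|_2$, where the second bound uses the definition of $\kappa(P)$. Dividing by $\| \mathbf{x}_1 - \mathbf{x}_2 \|_2$ and taking the supremum over $\mathbf{x}_1 \neq \mathbf{x}_2$ yields $\kappa(P \widetilde{P}) \leqslant \kappa(P) \kappa(\widetilde{P})$.

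The main obstacle is a technical measurability issue: the family $(\mathbf{x}_1, \mathbf{x}_2) \mapsto \pi^{*}_{\mathbf{x}_1, \mathbf{x}_2}$ must be a measurable selection of optimal couplings in order for the mixed coupling $\Pi$ to be well-defined as a probability measure on a Polish space. This is standard and follows from a measurable selection theorem (e.g.\ Kuratowski--Ryll-Nardzewski), exploiting the fact that the optimal transport cost is lower semicontinuous in the marginals on a Polish space. If a fully rigorous treatment feels heavy, one can instead appeal to the Kantorovich duality representation of $d_W^2$ to bypass the explicit construction of $\Pi$, obtaining the same bound by integrating the dual inequality against $\xi^{*}$. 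Either route is routine, and the bulk of the proof is thereby reduced to the coupling inequality above.
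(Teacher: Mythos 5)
Your proof is correct and takes essentially the same approach as the paper: a glued coupling (optimal between $\mu_1,\mu_2$, then conditionally optimal between $\delta_{\mathbf{x}_1}P,\delta_{\mathbf{x}_2}P$) for contractivity, and an application of contractivity plus the definition of $\kappa$ for submultiplicativity. The only difference is cosmetic — you phrase the coupling as an explicit kernel disintegration and flag the measurable-selection point, which the paper handles implicitly by working with random variables and the tower property.
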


\begin{proof}[Proof of Proposition \ref{pr:W_contract_submult}]
Consider the two statements separately.

\begin{enumerate}

\item Let $(\mathbf{X}_{\mu_1},\mathbf{X}_{\mu_2})$ be a realisation of the optimal coupling between $\mu_1$ and $\mu_2$ for $d_{W}$. Let $\mathbf{Y},\mathbf{Y}'$ be such that $(\mathbf{Y},\mathbf{Y}') \, | \, (\mathbf{X}_{\mu_1},\mathbf{X}_{\mu_2})$ is a realisation of the optimal coupling between $\delta_{\mathbf{X}_{\mu_1}}P$ and $\delta_{\mathbf{X}_{\mu_2}}P$. Since $\mathbf{Y} \sim \mu_1 P$ and $\mathbf{Y'} \sim \mu_2 P$. Then
\begin{align}
    d_{W}^2(\mu_1 P,\mu_2 P) &\leqslant \mathbf{E}[\|\mathbf{Y}-\mathbf{Y}'\|_2^2] \\
    &\leqslant \mathbf{E}[\mathbf{E}[\|\mathbf{Y}-\mathbf{Y}'\|_2^2 \, | \, \mathbf{X}_{\mu_1},\mathbf{X}_{\mu_2}]] \nonumber \\
    &\leqslant \mathbf{E}[\|\mathbf{X}_{\mu_1}-\mathbf{X}_{\mu_2}\|_2^2 \kappa^2(P)] \nonumber \\
    &\leqslant d_{W}^2(\mu_1,\mu_2) \kappa^2(P), \nonumber
\end{align}
where we used the fact that $d_W$ is an infimum, the tower property of the expected value, and Definition \ref{def:wass-erg-coef}. On the last line, we used the fact that $\mathbf{X}_{\mu_1}$ and $\mathbf{X}_{\mu_2}$ are optimally coupled.

\item Using the contraction property on $\widetilde{P}$:
\begin{align}
    \kappa(P \widetilde{P}) 
    & = \sup_{\mathbf{x}_1,\mathbf{x}_2} \frac{d_{W}(\delta_{\mathbf{x}_1} P\widetilde{P},\delta_{\mathbf{x}_2} P\widetilde{P})}{\|\mathbf{x}_1-\mathbf{x}_2\|_2} \\
    &\leqslant \sup_{\mathbf{x}_1,\mathbf{x}_2} \frac{\kappa(\widetilde{P}) \, d_{W}(\delta_{\mathbf{x}_1} P,\delta_{\mathbf{x}_2} P)}{\|\mathbf{x}_1-\mathbf{x}_2\|_2} \nonumber \\
    &\leqslant \kappa(P) \kappa(\widetilde{P}) \, , \nonumber 
\end{align}
obtained using the repeated application of Definition \ref{def:wass-erg-coef}.

\end{enumerate}

\end{proof}

\subsection{Proof of Proposition \ref{pr:close_2_measures_n_step}}

\begin{proof}[Proof of Proposition \ref{pr:close_2_measures_n_step}]
Following the proof from \cite{rudolf2018}.
Recall that as $P$ is Wasserstein geometrically ergodic, $\kappa(P^n) \leqslant C \rho^n$, with $\rho < 1$.
By induction:
\begin{equation}
    \widetilde{\pi}_n - \pi_n = (\widetilde{\pi}_0 - \pi_0) P^n + \sum_{k=0}^{n-1}\widetilde{\pi}_k (\widetilde{P}-P) P^{n-k-1} \, .
\end{equation}
Using a coupling argument:
\begin{equation}
    d_{W}^2(\widetilde{\pi}_k P,\widetilde{\pi}_k \widetilde{P}) \leqslant \int d_{W}^2(\delta_\mathbf{x} P, \delta_\mathbf{x} \widetilde{P}) \, \dint \widetilde{\pi}_k \leqslant \epsilon^2 \, .
\end{equation}
Furthermore, using the contraction property from Proposition \ref{pr:W_contract_submult}:
\begin{align*}
    d_{W}(\widetilde{\pi}_k \widetilde{P} P^{n-k-1}, \widetilde{\pi}_k P P^{n-k-1}) 
    & \leqslant d_{W}(\widetilde{\pi}_k \widetilde{P},\widetilde{\pi}_k P) \kappa(P^{n-k-1}) \\
    & \leqslant \epsilon \, C \rho^{n-k-1} .
\end{align*}

Finally:
\begin{align*}
    d_{W}(\pi_n,\widetilde{\pi}_n) &\leqslant d_{W}(\pi_0 P^n,\widetilde{\pi}_0 P^n) + \sum_{k=0}^{n-1} d_{W}(\widetilde{\pi}_k \widetilde{P} P^{n-k-1}, \widetilde{\pi}_k P P^{n-k-1}) \\
    & \leqslant d_{W}(\pi_0,\widetilde{\pi}_0) C \rho^n + \epsilon C \sum_{k=0}^{n-1} \rho^k \\
    & = d_{W}(\pi_0,\widetilde{\pi}_0) C \rho^n + \epsilon C \frac{1-\rho^n}{1-\rho}
\end{align*}

\end{proof}

\subsection{Proof of Proposition \ref{pr:P_unique_pi}}

\begin{proof}[Proof of Proposition \ref{pr:P_unique_pi}]
Let $E$ be a complete metric space.
Using the Banach fixed point theorem (and Definition \ref{def:Wass_geom_ergo}), we have that there exists a measure $\pi$ with finite second moment such that $P^n(\mathbf{x},\cdot) \rightarrow \pi$ in Wasserstein distance. By using a coupling argument similar to the one found in the proof of Proposition \ref{pr:W_contract_submult}, we find:
\begin{align*}
    d_W(\pi P, \delta_\mathbf{x} P^{n+1}) \leqslant C \rho \cdot d_W(\pi, \delta_\mathbf{x} P^{n}) \rightarrow 0.
\end{align*}
It thus follows that $\pi P = \pi$ and hence that $\pi$ is invariant.

We will now prove that $\pi$ is ergodic. Let $A$ be an invariant set, that is for all $\mathbf{x} \in A$, $P(\mathbf{x},A) = 1$. Clearly, for any $\mathbf{x} \in A$ and for all $n > 0$, we have $P^n(\mathbf{x},A) = 1$. Since $P^n(\mathbf{x},\cdot) \rightarrow \pi$, this implies that $\pi(A) = 1$, which implies that $\pi$ is an ergodic measure.

Finally, by using again a coupling argument similar to the one found in the proof of Proposition \ref{pr:W_contract_submult}, we find that $\nu P^n \rightarrow \pi$ in Wasserstein distance for $\nu$ with finite second order moments.
\end{proof}

\section{Perturbation bound results}
\label{app:perturbation_bounds}

\subsection{Proof of Lemma \ref{lm:tau_tau_tilde}}

\begin{proof}[Proof of Lemma \ref{lm:tau_tau_tilde}]
Let the error in the integral of the switching rate due to having a numerically approximate switching time $\widetilde{\tau}$ be defined as
\begin{equation}
\label{eq:numerical_error}
    \eta_\mathrm{Root} = \int_0^\tau \Lambda^\mathbf{z}(s) \, \dint s - \int_0^{\tilde{\tau}} \Lambda^\mathbf{z}(s) \, \dint s \, ,
\end{equation}
with 
\begin{equation}
\label{eq:bound_eta_root}
    |\eta_\mathrm{Root}| \le \varepsilon_\mathrm{Bre} + \varepsilon_\mathrm{int} \, .
\end{equation}

Applying the Mean Value Theorem to $f(t) = \int_0^t \Lambda^\mathbf{z} (s) \, \dint s$, we obtain that for some $c \in \left[ \tau, \widetilde{\tau} \right]$, it holds that
\begin{align}
\label{eq:mean_val_th}
    \Lambda^\mathbf{z}(c) = \frac{\int_0^\tau \Lambda^\mathbf{z}(s) \, \dint s - \int_0^{\tilde{\tau}} \Lambda^\mathbf{z}(s) \, \dint s}{\tau - \widetilde{\tau} },
\end{align}
and hence that this expression is bounded from below as
\begin{align*}
    \Lambda^\mathbf{z}(c) &\geqslant \inf_{t \geqslant 0} \Lambda^\mathbf{z} (t) \\
    &= \inf_{t \geqslant 0} \Lambda^\mathbf{z} (\mathbf{x} + t\mathbf{v}, \mathbf{v}) \\
    &\geqslant \inf_{(\mathbf{x},\mathbf{v})\in E} \Lambda^\mathbf{z} (\mathbf{x}, \mathbf{v}) \\
    &=: \Lambda_\mathrm{min}.
\end{align*}
Combining Equations \eqref{eq:numerical_error}, \eqref{eq:bound_eta_root} and \eqref{eq:mean_val_th}, we obtain
\begin{equation}
    |\tau - \widetilde{\tau}| \leqslant \frac{|\eta_\mathrm{Root}|}{\Lambda_\mathrm{min}} \le \frac{\varepsilon_\mathrm{Bre} + \varepsilon_\mathrm{int}}{\Lambda_\mathrm{min}} \, .
\end{equation}
This result can be combined with Proposition \ref{prop:wass_dist_P_Ptilde} to show the dependence of the error bound $\epsilon$ on the numerical tolerances of the inner solvers. 
\end{proof}

\subsection{Proof of Lemma \ref{lem:lip_jump}}

\begin{proof}[Proof of Lemma \ref{lem:lip_jump}]
Let $\lambda_i(\mathbf{x},\mathbf{v}) = \max \left( 0, - v_i \partial_i \log \pi(\mathbf{x}) \right) + \gamma_i$ be the unnormalised probability of switching from velocity $\mathbf{v}$ to velocity $\mathbf{v}' = \mathbf{F}_i(\mathbf{v})$ at the time of an event at position $\mathbf{x}$. 
Write $\Lambda(\mathbf{x},\mathbf{v}) = \sum_{i=1}^d \lambda_i(\mathbf{x},\mathbf{v})$. Then, the normalised transition kernel at $\mathbf{x}$ from $\mathbf{v}$ to $\mathbf{v}'=\mathbf{F}_i(\mathbf{v})$ is
\begin{align*}
    Q(\mathbf{v}' | \mathbf{x},\mathbf{v}) = Q(\mathbf{v} \to \mathbf{v}'=\mathbf{F}_i(\mathbf{v})|\mathbf{x},\mathbf{v}) = \frac{\lambda_i(\mathbf{x},\mathbf{v})}{\Lambda(\mathbf{x},\mathbf{v})}.
\end{align*}
Now, fix $(\mathbf{v}, \mathbf{v}')$ and consider the function $h : \mathbf{x} \mapsto Q(\mathbf{v}'|\mathbf{x},\mathbf{v})$. One can write
\begin{equation}
    \label{eq:h_function}
    \left | \frac{\partial h(\mathbf{x})}{\partial x_j}  \right | 
    = \left | \frac{\partial \lambda_i (\mathbf{x},\mathbf{v})}{\partial x_j}  \frac{1}{\Lambda(\mathbf{x},\mathbf{v})} - \frac{\lambda_i(\mathbf{x},\mathbf{v})}{\Lambda(\mathbf{x},\mathbf{v})} \frac{1}{\Lambda(\mathbf{x},\mathbf{v})} \frac{\partial \Lambda(\mathbf{x},\mathbf{v})}{\partial x_j} \right | \, ,
\end{equation}
whose terms can be bounded as follows:
\begin{align}
    \frac{\partial \lambda_i (\mathbf{x},\mathbf{v})}{\partial x_j} & \leqslant \left | \frac{\partial \lambda_i (\mathbf{x},\mathbf{v})}{\partial x_j} \right | \leqslant | v_i | \left |\frac{\partial^2}{\partial x_i \partial x_j} \log \pi(\mathbf{x}) \right | \leqslant M \, , \\
    \frac{\lambda_i(\mathbf{x},\mathbf{v})}{\Lambda(\mathbf{x},\mathbf{v})} & \leqslant 1 \, , \\
    \Lambda(\mathbf{x},\mathbf{v})^{-1} & \leqslant \Gamma^{-1} \, , \\
    \frac{\partial \Lambda(\mathbf{x},\mathbf{v})}{\partial x_j} & = \sum_i \frac{ \partial \lambda_i(\mathbf{x},\mathbf{v})}{\partial x_j} \leqslant d M \, .
\end{align}
Combining these equations one can bound the term on the left-hand side of Equation \eqref{eq:h_function} as
\begin{equation}
    \left | \frac{\partial h(\mathbf{x})}{\partial x_j} \right | \leqslant \left| \Gamma^{-1} M \right| + \left| \Gamma^{-1} M d \right| \leqslant 2 \Gamma^{-1} M d \, .
\end{equation}
It follows that, along the same trajectory,
\begin{align*}
    \lvert Q(\mathbf{v}'|\mathbf{x},\mathbf{v}) - Q(\mathbf{v}'|\mathbf{x}',\mathbf{v}) \rvert &= \lvert h(\mathbf{x}) - h(\mathbf{x}') \rvert \\
    & = \left\lvert \int_0^1  \nabla_\mathbf{x} h ( t \mathbf{x} + (1 - t) \mathbf{x}' )^\top \; (\mathbf{x} - \mathbf{x}') \, \dint t \right\rvert \\
    & \leqslant \int_0^1 \left  \| \nabla_\mathbf{x} h ( t \mathbf{x} + (1 - t) \mathbf{x}') \right \|_2 \cdot \| \mathbf{x} - \mathbf{x}' \|_2 \, \dint t \\
    & \leqslant \int_0^1 \sqrt{d \cdot (2 \Gamma^{-2} M^2 d)^2}  \cdot \| \mathbf{x} - \mathbf{x}' \|_2 \, \dint t \\
    & \leqslant \int_0^1 2 \Gamma^{-1} M d^{3/2} \cdot \| \mathbf{x} - \mathbf{x}' \|_2 \, \dint t \\
    &= 2 \Gamma^{-1} M d^{3/2} \| \mathbf{x} - \mathbf{x}'\|_2 \, ,
\end{align*}
which concludes the proof.
\end{proof}

\subsection{Proof of Proposition \ref{prop:wass_dist_P_Ptilde}}

\begin{proof}[Proof of Proposition \ref{prop:wass_dist_P_Ptilde}]

Let us couple draws from $\delta_\mathbf{z} P, \delta_\mathbf{z} \widetilde{P}$ as follows: in each case, the stopping time is obtained by calculating the first time at which a (deterministic) integral exceeds a unit-mean exponential random variable $R$. We couple the stopping times $\tau, \widetilde{\tau}$ by sharing this variable $R$ across processes, so that
\begin{align*}
    R & = \int_0^\tau \Lambda^\mathbf{z} (s) \, \dint s \\
    &= \int_0^{\tilde{\tau}} \widetilde{\Lambda}^\mathbf{z} (s) \, \dint s,
\end{align*}
where $\widetilde{\Lambda}^\mathbf{z}$ is the polynomial approximation to the event rate which is constructed implicitly by the numerical integration routine. %in Brent's method.
Upon moving the $\mathbf{x}$ coordinates forward to $\mathbf{x}' = \mathbf{x} + \tau \cdot \mathbf{v}$ and $\widetilde{\mathbf{x}}' = \mathbf{x} + \widetilde{\tau} \cdot \mathbf{v}$, the new velocities must then be drawn from distributions with density functions $Q(\mathbf{v}'|\mathbf{x}',\mathbf{v})$ and $Q(\widetilde{\mathbf{v}}'|\widetilde{\mathbf{x}}', \mathbf{v})$ respectively. We couple these draws according to a maximal coupling, i.e.\ so as to maximise $\mathbb{P} \left( \mathbf{v}' = \widetilde{\mathbf{v}}' \, | \, \mathbf{x}', \widetilde{\mathbf{x}}', \mathbf{v} \right)$.

Under this coupling, it holds that the square 2-Wasserstein distance can be bounded as follows.
\begin{align*}
    d_W^2(\delta_\mathbf{z} P, \delta_\mathbf{z} \widetilde{P})
    & \leqslant \mathbf{E}[ \| \mathbf{z}' - \widetilde{\mathbf{z}}' \|_E^2 \, | \, \mathbf{x}, \mathbf{v}] \\
    & = \mathbf{E} [ ( \| \mathbf{x}' - \widetilde{\mathbf{x}}' \|_2 + \| \mathbf{v}' - \widetilde{\mathbf{v}}' \|_2 )^2 \, | \, \mathbf{x}, \mathbf{v} ] \\
    & \leqslant 2 \mathbf{E} [ \| \mathbf{x}' - \widetilde{\mathbf{x}}' \|_2^2 \, | \, \mathbf{x}, \mathbf{v} ] + 2 \mathbf{E} [ \| \mathbf{v}' - \widetilde{\mathbf{v}}' \|_2^2 \, | \, \mathbf{x}, \mathbf{v} ] \, .
\end{align*}
Consider the difference of velocities next. For all $\mathbf{x}$, it holds that for all $\mathbf{w}_1, \mathbf{w}_2$ in the support of $Q(\cdot | \mathbf{x}, \mathbf{v})$, the square norm $\| \mathbf{w}_1 - \mathbf{w}_2 \|_2^2$ can be bounded by a constant $8$, as each vector differs from the original $\mathbf{v}$ in at most one coordinate.
%That can in turn be bounded using Lemma \ref{lem:lip_jump}. 

Hence, multiplying by the probability of sampling different velocities, 
\begin{align}
    \mathbf{E}[ \| \mathbf{v}' - \widetilde{\mathbf{v}}' \|_2^2 \, | \, \mathbf{x}, \mathbf{v} ] 
    & = 8 \, \mathbf{E}[ \mathbb{P} (\mathbf{v}' \neq \widetilde{\mathbf{v}}' \, | \, \mathbf{x}, \mathbf{v}) \, | \, \mathbf{x}, \mathbf{v} ] \\
    & \leqslant 8 \, ( 2 \Gamma^{-1} M d^{3/2} ) \mathbf{E} [ \| \mathbf{x}' - \widetilde{\mathbf{x}}' \|_2 \, | \, \mathbf{x}, \mathbf{v} ] \, ,
\end{align}
where the inequality follows from Lemma \ref{lem:lip_jump}. Substituting this into the square Wasserstein distance equation and rearranging terms we obtain
\begin{equation}
    d_W^2(\delta_\mathbf{z} P, \delta_\mathbf{z} \widetilde{P}) \leqslant 
    2 \, \mathbf{E}[ \| \mathbf{x}' - \widetilde{\mathbf{x}}' \|_2^2 \, | \, \mathbf{x}, \mathbf{v} ] + 32 \Gamma^{-1} M d^{3/2}
    \mathbf{E}[ \| \mathbf{x}' - \widetilde{\mathbf{x}}' \|_2 \, | \, \mathbf{x}, \mathbf{v} ] \, ,
\end{equation}
where note that the first norm is squared while the second is not.
Now, the term in $\mathbf{x}$ can be bounded as follows
\begin{align}
    \mathbf{E}[ \| \mathbf{x}' - \widetilde{\mathbf{x}}' \|_2 \, | \, \mathbf{x}, \mathbf{v} ] 
    & = \mathbf{E}[ \| (\tau - \widetilde{\tau}) \mathbf{v} \|_2 \, | \, \mathbf{x}, \mathbf{v} ] \\
    & = \mathbf{E}[ |\tau - \widetilde{\tau}| \, | \, \mathbf{x}, \mathbf{v} ] \; \| \mathbf{v} \|_2 \\
    & \leqslant \Gamma^{-1} (\varepsilon_\mathrm{Bre} + \varepsilon_\mathrm{int}) \sqrt{d} \, ,
\end{align}
where we first used the relationship between velocity, space and time ($\mathbf{x}' - \widetilde{\mathbf{x}}' = \left( \tau - \widetilde{\tau} \right) \mathbf{v}$), then invoked Lemma \ref{lm:tau_tau_tilde}, and used the fact that the 2-norm of a $d$-vector with elements $\{\pm 1\}$ is $\sqrt{d}$, and the introduction of non-unit velocities for computational efficiency as we do in Section \ref{sec:exp} can be done in a length-preserving manner. . Substituting into the original equation we obtain the final bound
\begin{align}
    d_W^2(\delta_\mathbf{z} P, \delta_\mathbf{z} \widetilde{P}) 
    & \leqslant 2 \Gamma^{-1} d (\varepsilon_\mathrm{Bre} + \varepsilon_\mathrm{int})^2
    + 32 \Gamma^{-1} M d^2 (\varepsilon_\mathrm{Bre} + \varepsilon_\mathrm{int}) \, .
\end{align}
Taking the square root completes the proof.

\end{proof}

\subsection{Proof of Proposition \ref{prop:WdistYchain}}

\begin{proof}[Proof of Proposition \ref{prop:WdistYchain}]
Using the bound between $P$ and $\widetilde{P}$ given by Proposition \ref{pr:close_2_measures_n_step}, and Proposition \ref{prop:wass_dist_P_Ptilde}, we have:
\begin{equation}
    d_W(\widetilde{\nu} P^n, \widetilde{\nu} \widetilde{P}^n) \leq \epsilon C \frac{1-\rho^n}{1-\rho} \, ,
\end{equation}
for $C$ and $\rho$ as defined in Assumption \ref{as:Wergodic_Y_and_P} and $\epsilon$ as in Equation \eqref{eq:epsdef}.
Letting $n\rightarrow +\infty$ yields the result, since $\widetilde{\nu} \widetilde{P}^n = \widetilde{\nu}$ and $\widetilde{\nu} P^n \rightarrow \nu$ in Wasserstein distance by Proposition \ref{pr:P_unique_pi}. 
\end{proof}

\section{Invariant measure results}
\label{app:invariant_measure}

\subsection{Proof of Lemma \ref{lm:U_Utilde_unique_Pi}}
\begin{proof}[Proof of Lemma \ref{lm:U_Utilde_unique_Pi}]
For any invariant measure of $\mathbf{U}_k = ( \mathbf{Y}_{k-1}, \mathbf{Y}_k)$, the projection on the first coordinate is an invariant measure of $\mathbf{Y}_k$, which is therefore $\nu$ since it is unique. In addition, $\mathbb{P}(\mathbf{Y}_{k+1}|\mathbf{Y}_k) = P (\mathbf{Y}_k, \mathbf{Y}_{k+1})$, hence an invariant measure of $\mathbf{U}_k$ must be of the form $\zeta(A) = \int \mathbb{P} (\mathbf{y}_1, \{ \mathbf{y}_2 \, | \, (\mathbf{y}_1, \mathbf{y}_2) \in A \}) \dint \nu( \mathbf{y}_1 )$ for $A$ measurable. Since $\zeta$ is clearly invariant for $\mathbf{U}_k$, $\zeta$ is the unique invariant measure of $\mathbf{U}_k$. The same can be said for $\widetilde{\mathbf{U}}_k$ and $\widetilde{\zeta}$.

\end{proof}

\subsection{Proof of Proposition \ref{pr:W_U_Utilde}}
\begin{proof}[Proof of Proposition \ref{pr:W_U_Utilde}]

Let $\mathbf{Y}_1$ and $\widetilde{\mathbf{Y}}_1$ be random variables associated via the optimal coupling of $\nu$ and $\widetilde{\nu}$ such that $\mathbf{E} [\| \mathbf{Y}_1 - \widetilde{\mathbf{Y}}_1\|_2^2] = d_W(\nu,\widetilde{\nu})$. Let $\mathbf{Y}_2,\widetilde{\mathbf{Y}}_2$ be random variables such that for fixed $\mathbf{Y}_1,\widetilde{\mathbf{Y}}_1$, they follow the optimal coupling of $P(\mathbf{Y}_1,\cdot)$ and $P(\widetilde{\mathbf{Y}}_1,\cdot)$, i.e.\ $d_W(\delta_{\mathbf{Y}_1}P,\delta_{\tilde{\mathbf{Y}}_1}P) = \mathbf{E} [ \| \mathbf{Y}_2 - \widetilde{\mathbf{Y}}_2 \|_2^2 \; | \; \mathbf{Y}_1, \widetilde{\mathbf{Y}}_1]$.

Since
\begin{align*}
    d_W(\delta_\mathbf{y} P, \delta_{\tilde{\mathbf{y}}} \widetilde{P}) 
    & \le d_W(\delta_\mathbf{y} P, \delta_{\tilde{\mathbf{y}}} P) + d_W(\delta_{\tilde{\mathbf{y}}} P, \delta_{\tilde{\mathbf{y}}} \widetilde{P}) \\
    & \le C \rho \, \|\mathbf{y} - \widetilde{\mathbf{y}}\| + \epsilon
\end{align*}
for $C$ and $\rho$ as defined in Assumption \ref{as:Wergodic_Y_and_P} and $\epsilon$ as in Equation \eqref{eq:epsdef},
by applying the triangular inequality, Definition \ref{def:Wass_geom_ergo} and Proposition \ref{prop:wass_dist_P_Ptilde},
we obtain
\begin{align*}
    \mathbf{E} [\| \mathbf{Y}_2 - \widetilde{\mathbf{Y}}_2 \|^2] 
    & \le \mathbf{E} \bigg [ \mathbf{E} \Big [ \| \mathbf{Y}_2 - \widetilde{\mathbf{Y}}_2 \|^2 \; | \; \mathbf{Y}_1,\widetilde{\mathbf{Y}}_1 \Big ] \bigg ] \\
    & \le \mathbf{E} [ (C \rho \, \| \mathbf{Y}_1 - \widetilde{\mathbf{Y}}_1 \| + \epsilon )^2] \\
    & \le 2 C^2 \rho^2 \, d_W^2(\nu,\widetilde{\nu}) + 2 \epsilon^2
\end{align*}
The laws of $(\mathbf{Y}_1, \mathbf{Y}_2)$ and $(\widetilde{\mathbf{Y}}_1, \widetilde{\mathbf{Y}}_2)$ are $\zeta$ and $\widetilde{\zeta}$ respectively, which gives us a coupling between $\zeta$ and $\widetilde{\zeta}$. We can therefore deduce that
\begin{equation*}
    d_W^2(\zeta, \widetilde{\zeta}) \leqslant \mathbf{E} [\| (\mathbf{Y}_1, \mathbf{Y}_2) - (\widetilde{\mathbf{Y}}_1, \widetilde{\mathbf{Y}}_2) \|^2] \le (1 + 2 C^2 \rho^2) \, d_W^2(\nu, \widetilde{\nu}) + 2 \epsilon^2.
\end{equation*}

\end{proof}

\subsection{Proof of Lemma \ref{lm:ergodic_avg_helper}}

\begin{proof}[Proof of Lemma \ref{lm:ergodic_avg_helper}]
The ergodic averages for a function $f$ of the process $\mathbf{Z}_t$ up to time $T_n$ can be written in terms of the interpolation operator $\mathcal{J}$, and a set of velocities $\mathbf{u}_k$, as
\begin{equation}
\label{eq:ergodic_avg_Tn}
        \frac{1}{T_n} \int_0^{T_n} f(\mathbf{z}_s) \, \dint s = \frac{n}{T_n} \cdot \frac{1}{n}\sum_{k=1}^n \left( \mathcal{J} f \right) (\mathbf{u}_k) \, .
\end{equation}
This result is obtained by starting with the integral on the left-hand side of this equation, then applying the transformation $t = (s-T_{k-1})/(T_k - T_{k-1})$, and defining each velocity as equal to displacement divided by time, and multiplying both sides by $T_n^{-1}$ %, and the RHS by $n/n$.
Since by Lemma \ref{lm:U_Utilde_unique_Pi} the chain $\mathbf{U}_k$ has a unique invariant measure, and $\mathcal{J}f \in L^p(\zeta)$ for $p \in [1,\infty)$, we use the Birkhoff theorem \cite{sandric2017} to take the limit and write
\begin{equation}
    \lim_{n\rightarrow \infty} \left( \frac{1}{T_n} \int_0^{T_n} f(\mathbf{z}_s) \, \dint s \right )
     = \left( \lim_{n \to \infty} \frac{n}{T_n} \right ) \left( \lim_{n \to \infty} \frac{1}{n} \sum_{k=1}^n \left ( \mathcal{J} f \right ) (\mathbf{u}_k) \right )  = \frac{1}{H} \int_{E^2} \left ( \mathcal{J} f \right ) (\mathbf{u}) \, \zeta ( \dint \mathbf{u}) \quad \mbox{a.s.} \, .
\end{equation}
%
%which holds with probability 1.
The same calculations apply mutatis mutandis to $\widetilde{\mathbf{Z}}_t$ and $\widetilde{\mathbf{U}_k}$. 
%(with appropriate modifications)
\end{proof}

\subsection{Lemma \ref{lem:bound_helper_function} with proof.}

\begin{lemma}
\label{lem:bound_helper_function}
For every $\mathbf{u}_1 = (\mathbf{x}_1, \mathbf{v}_1, \mathbf{x}_1', \mathbf{v}_1')$, $\mathbf{u}_2 = (\mathbf{x}_2, \mathbf{v}_2, \mathbf{x}_2', \mathbf{v}_2')$ in $E \times E$, the state space of $\mathbf{U}_k$, and every bounded, Lipschitz function $f : E \to \mathbb{R}$, it holds that
\begin{equation}
    \left| \left( \mathcal{J} f \right) (\mathbf{u}_1) - \left( \mathcal{J} f \right) (\mathbf{u}_2)  \right| \leqslant \frac{1}{\sqrt{d}}\left\{ |f|_\mathrm{Lip} \| \mathbf{x}_1' - \mathbf{x}_1 \|_2 + \| f \|_{\infty} \right\} \| \mathbf{u}_1 - \mathbf{u}_2 \|_{E^2} \, ,
\end{equation}
where $\| \cdot \|_{E^2}$ is defined in Equation \eqref{eq:2-norm-E}. %-square
\end{lemma}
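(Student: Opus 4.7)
The plan is to exploit the product structure of $\mathcal{J}f$ and reduce the problem to bounding two scalar quantities separately. First I would observe that since $\mathbf{v} \in \{-1,+1\}^d$ we have $\|\mathbf{v}\|_2 = \sqrt{d}$, so writing $A_i := \|\mathbf{x}_i' - \mathbf{x}_i\|_2$ and $I_i := \int_0^1 f(\mathbf{x}_i + t(\mathbf{x}_i' - \mathbf{x}_i), \mathbf{v}_i)\,\dint t$ we get $(\mathcal{J}f)(\mathbf{u}_i) = A_i I_i / \sqrt{d}$. This immediately produces the $1/\sqrt{d}$ prefactor appearing on the right-hand side of the desired bound.

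Next I would use the asymmetric ``add and subtract'' decomposition
\begin{equation*}
A_1 I_1 - A_2 I_2 = (A_1 - A_2)\, I_2 + A_1 \,(I_1 - I_2),
\end{equation*}
chosen deliberately so that the $A_1 = \|\mathbf{x}_1' - \mathbf{x}_1\|_2$ factor (and not $A_2$) multiplies the Lipschitz-controlled difference $I_1 - I_2$, matching the form of the target inequality. For the first summand, a reverse triangle inequality gives $|A_1 - A_2| \le \|\mathbf{x}_1 - \mathbf{x}_2\|_2 + \|\mathbf{x}_1' - \mathbf{x}_2'\|_2$, and $|I_2| \le \|f\|_\infty$. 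For the second summand, using that $f$ is Lipschitz with respect to $\|\cdot\|_E$, I would bound
\begin{equation*}
|I_1 - I_2| \le |f|_{\mathrm{Lip}} \int_0^1 \bigl((1-t)\|\mathbf{x}_1 - \mathbf{x}_2\|_2 + t\|\mathbf{x}_1' - \mathbf{x}_2'\|_2 + \|\mathbf{v}_1 - \mathbf{v}_2\|_2 \bigr)\,\dint t,
\end{equation*}
which evaluates to $|f|_{\mathrm{Lip}}\bigl(\tfrac12\|\mathbf{x}_1 - \mathbf{x}_2\|_2 + \tfrac12\|\mathbf{x}_1' - \mathbf{x}_2'\|_2 + \|\mathbf{v}_1 - \mathbf{v}_2\|_2\bigr)$.

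Finally, I would observe that each of the position-and-velocity combinations above is dominated by $\|\mathbf{u}_1 - \mathbf{u}_2\|_{E^2} = \|\mathbf{x}_1 - \mathbf{x}_2\|_2 + \|\mathbf{x}_1' - \mathbf{x}_2'\|_2 + \|\mathbf{v}_1 - \mathbf{v}_2\|_2 + \|\mathbf{v}_1' - \mathbf{v}_2'\|_2$ (for the Lipschitz term the $\tfrac12$ coefficients only help). Combining the two summand bounds through the $1/\sqrt{d}$ prefactor then yields the stated inequality.

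There is no real obstacle here beyond choosing the decomposition in the step above correctly: the ``symmetric'' split $(A_1-A_2)I_1 + A_2(I_1 - I_2)$ would produce $\|\mathbf{x}_2' - \mathbf{x}_2\|_2$ on the right-hand side instead of the required $\|\mathbf{x}_1' - \mathbf{x}_1\|_2$, which is why the asymmetric split above is used. Everything else is routine triangle-inequality bookkeeping under the norms already defined in \eqref{eq:2-norm-E}.
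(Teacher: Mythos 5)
Your proposal is correct and essentially identical to the paper's proof: both use the asymmetric decomposition $A_1 I_1 - A_2 I_2 = A_1(I_1 - I_2) + (A_1 - A_2)I_2$ (the paper writes it pointwise in $t$ before integrating, you integrate first, but these are the same), then bound the first term via the Lipschitz constant and the second via the reverse triangle inequality and $\|f\|_\infty$. Your remark that the symmetric split would produce $\|\mathbf{x}_2' - \mathbf{x}_2\|_2$ rather than the required $\|\mathbf{x}_1' - \mathbf{x}_1\|_2$ is a useful clarifying observation not made explicitly in the paper.
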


\begin{proof}[Proof of Lemma \ref{lem:bound_helper_function}]
Write directly that
\begin{equation*}
    \left( \mathcal{J} f \right) (\mathbf{u}_1) - \left( \mathcal{J} f \right) (\mathbf{u}_2) = 
\frac{1}{\sqrt{d}} \int_0^1 \| \mathbf{x}_1' - \mathbf{x}_1 \|_2 f( \mathbf{x}_{1, t}, \mathbf{v}_1) \, \dint t 
    - {\sqrt{d}} \int_0^1 \| \mathbf{x}_2' - \mathbf{x}_2 \|_2 f( \mathbf{x}_{2, t}, \mathbf{v}_2) \, \dint t\, ,
\end{equation*}
where $\mathbf{x}_{k, t} = (1 - t) \mathbf{x}_k + t \mathbf{x}_k'$, using that $\| \mathbf{v} \|_2 \equiv \sqrt{d}$. Now, define $f_k (t) = f(\mathbf{x}_{k, t}, \mathbf{v}_k)$, and decompose the difference of the integrands above as 
\begin{align*}
    \Delta(t)
    & = \| \mathbf{x}_1' - \mathbf{x}_1 \|_2  f_1 (t) - \| \mathbf{x}_2' - \mathbf{x}_2 \|_2  f_2 (t) \\
    & = \| \mathbf{x}_1' - \mathbf{x}_1 \|_2  ( f_1 (t) - f_2 (t) ) + \left( \| \mathbf{x}_1' - \mathbf{x}_1 \|_2 - \| \mathbf{x}_2' - \mathbf{x}_2 \|_2 \right)  f_2 (t)\, .
\end{align*}
One can then bound for $t\in [0,1]$
\begin{align*}
    | f_1 (t) - f_2 (t) | 
    & \leqslant | f |_\mathrm{Lip}  \left \| (\mathbf{x}_{1, t}, \mathbf{v}_1) - (\mathbf{x}_{2, t}, \mathbf{v}_2) \right \|_{E} \\
    & = | f |_\mathrm{Lip}  \left( \| \mathbf{x}_{1, t} - \mathbf{x}_{2,t} \|_2 + \| \mathbf{v}_1 - \mathbf{v}_2 \|_2 \right) \\
    & = | f |_\mathrm{Lip}  \left( \| (1-t) (\mathbf{x}_1 - \mathbf{x}_2) + t (\mathbf{x}_1' - \mathbf{x}_2') \|_2 + \| \mathbf{v}_1 - \mathbf{v}_2 \|_2 \right) \\
    & \leqslant | f |_\mathrm{Lip}  \left ( (1-t)  \|\mathbf{x}_1 - \mathbf{x}_2\|_2 + t \|\mathbf{x}_1' - \mathbf{x}_2'\|_2 + \| \mathbf{v}_1 - \mathbf{v}_2 \|_2 \right ) \\
    & \leqslant | f |_\mathrm{Lip}  \left ( \|\mathbf{x}_1 - \mathbf{x}_2\|_2 + \|\mathbf{x}_1' - \mathbf{x}_2'\|_2 + \| \mathbf{v}_1 - \mathbf{v}_2 \|_2 \right ) \\
    & \leqslant | f |_\mathrm{Lip}  \| \mathbf{u}_1 - \mathbf{u}_2 \|_{E^2}\, .
\end{align*} 
Additionally, the triangle inequality shows that
\begin{align*}
    \| \mathbf{x}_1' - \mathbf{x}_1 \|_2 - \| \mathbf{x}_2' - \mathbf{x}_2 \|_2 
    & \leqslant \| (\mathbf{x}_1' - \mathbf{x}_1) - (\mathbf{x}_2' - \mathbf{x}_2) \|_2  \\
    & \leqslant \| \mathbf{x}_1 - \mathbf{x}_2 \|_2 + \| \mathbf{x}_1' - \mathbf{x}_2' \|_2 \\
    & \leqslant \|\mathbf{u}_1 - \mathbf{u}_2\|_{E^2} \, .
\end{align*} 
Thus,
\begin{align*}
    | \Delta (t) | 
    & \leqslant \| \mathbf{x}_1' - \mathbf{x}_1 \|_2 \cdot | f |_\mathrm{Lip} \cdot \| \mathbf{u}_1 - \mathbf{u}_2 \|_{E^2} + \| \mathbf{u}_1 - \mathbf{u}_2 \|_{E^2} \cdot \| f \|_\infty \, ,
\end{align*}
from which the result follows by integration.
\end{proof}

\subsection{Proof of Theorem \ref{th:mu_ergodic_measure}}

\begin{proof}[Proof of Theorem \ref{th:mu_ergodic_measure}]
The measure $\widetilde{\mu}$ as defined in Equation \eqref{eq:mu_tilde} exists as shown in Lemma \ref{lm:ergodic_avg_helper} and Remark \ref{rm:mu_tilde}.

Our argument to derive the bound in Inequality \eqref{eq:bound_th1} consists in proving that the right-hand sides of Equations \eqref{eq:erg_avg_u_exact} and \eqref{eq:erg_avg_u_approx} from Lemma \ref{lm:ergodic_avg_helper} are close to each other. In order to do that we need to prove that $H$ and $\widetilde{H}$ are close, and that
\begin{equation}
    \int_{E^2} ( \mathcal{J} f ) (\mathbf{u}) \, \zeta (\dint \mathbf{u}) 
    \quad  \text{and} \quad
    \int_{E^2} ( \mathcal{J} f ) (\mathbf{u}) \, \widetilde{\zeta} (\dint \mathbf{u}) \
\end{equation}
are close. Let us start with the two integrals.

\begin{enumerate}
    \item 

Let $\phi$ be an optimal coupling of $( \zeta, \widetilde{\zeta})$ under the 2-Wasserstein distance. By the coupling property, it holds that
\begin{equation*}
    \zeta( \mathcal{J} f ) - \widetilde{\zeta} \left( \mathcal{J} f \right) = \int \left \{ \left( \mathcal{J} f \right) (\mathbf{u}_1)- \left( \mathcal{J} f \right) (\mathbf{u}_2) \right \} \phi(\dint \mathbf{u}_1, \dint \mathbf{u}_2).
\end{equation*}
From Lemma \ref{lem:bound_helper_function}, we recall that for all $\mathbf{u}_1, \mathbf{u}_2 \in E^2$ and $f$ bounded and Lipschitz,
\begin{equation}
    \left| \left( \mathcal{J} f \right) (\mathbf{u}_1) - \left ( \mathcal{J} f \right) ( \mathbf{u}_2)  \right| \leqslant \frac{1}{\sqrt{d}} \left ( |f|_\text{Lip} \| \mathbf{x}_1' - \mathbf{x}_1\| + \| f \|_{\infty} \right ) \cdot \| \mathbf{u}_1 - \mathbf{u}_2 \|_{E^2}.
\end{equation}
We can thus bound
\begin{align*}
    \left| \zeta( \mathcal{J} f ) - \widetilde{\zeta} (\mathcal{J} f) \right| &
 \leqslant \int \left( \frac{1}{\sqrt{d}} \left( |f|_\text{Lip} \|
\mathbf{x}_1' - \mathbf{x}_1\| + \| f \|_{\infty} \right) \|
\mathbf{u}_1 - \mathbf{u}_2 \|_{E^2} \right) \phi \left( \dint \mathbf{u}_1,
\dint \mathbf{u}_2 \right) \\
    & \leqslant \frac{1}{\sqrt{d}} |f|_\mathrm{Lip} \int \| \mathbf{x}_1' - \mathbf{x}_1\| \cdot \| \mathbf{u}_1 - \mathbf{u}_2 \|_{E^2} \phi \left( \dint \mathbf{u}_1, \dint \mathbf{u}_2 \right) \\
    & \qquad + \sqrt{\frac{1}{d}} \| f \|_{\infty} \int  \| \mathbf{u}_1 - \mathbf{u}_2 \|_{E^2}  \phi \left( \dint \mathbf{u}_1, \dint \mathbf{u}_2 \right) \\
    &\leqslant \frac{1}{\sqrt{d}} |f|_\mathrm{Lip} \left( \int \| \mathbf{x}_1' - \mathbf{x}_1\|^2 \phi \left( \dint \mathbf{u}_1, \dint \mathbf{u}_2 \right) \right)^{1/2} \left( \int  \| \mathbf{u}_1 - \mathbf{u}_2 \|_{E^2}^2  \phi \left( \dint \mathbf{u}_1, \dint \mathbf{u}_2 \right) \right)^{1/2} \\
    & \qquad + \sqrt{\frac{1}{d}}  \| f \|_{\infty} \left(\int  \| \mathbf{u}_1 - \mathbf{u}_2 \|_{E^2}^2  \phi \left( \dint \mathbf{u}_1, \dint \mathbf{u}_2 \right) \right)^{1/2} \\
    & = \frac{1}{\sqrt{d}}  |f|_\mathrm{Lip}  \mathbf{E}_\zeta \left[ \| \mathbf{x}_1' - \mathbf{x}_1 \|^2 \right]^{1/2}  \mathbf{E}_\phi \left[ \| \mathbf{u}_1 - \mathbf{u}_2 \|^2 \right]^{1/2} +  \sqrt{\frac{1}{d}}  \| f \|_{\infty}  \mathbf{E}_\phi \left[ \| \mathbf{u}_1 - \mathbf{u}_2 \|^2 \right]^{1/2} \\
    & = d_W ( \zeta, \widetilde{\zeta} )  \left( \frac{1}{\sqrt{d}}  |f|_\mathrm{Lip}  \mathbf{E}_\zeta \left[ \| \mathbf{x}_1' - \mathbf{x}_1 \|^2 \right]^{1/2} +  \sqrt{\frac{1}{d}}  \| f \|_{\infty} \right),
\end{align*}
where the final step uses that $\mathbf{E}_\phi \left[ \| \mathbf{u}_1 - \mathbf{u}_2 \|^2 \right]^{1/2} = d_W ( \zeta, \widetilde{\zeta} )$.

\item

Let us now prove that $H$ and $\widetilde{H}$ are close. Define the function $g : (\mathbf{x},\mathbf{v},\mathbf{x}',\mathbf{v}') \mapsto \| \mathbf{x} - \mathbf{x}' \| / \| \mathbf{v} \|$, and note that $H = \zeta(g)$, $\widetilde{H} = \widetilde{\zeta}(g)$. As $g$ is $d^{-\frac{1}{2}}$-Lipschitz (noting that $\| \mathbf{v} \|$ is constant on our domain of interest), we can apply Kantorovich duality to bound
\begin{align*}
    | H - \widetilde{H} | = | \zeta(g) - \widetilde{\zeta} (g) |
    & \leqslant | g |_{\mathrm{Lip}} \cdot d_{W,1} (\zeta, \widetilde{\zeta}) \\
    & = d^{-\frac{1}{2}} d_{W,1} (\zeta, \widetilde{\zeta}) \\
    & \leqslant d^{-\frac{1}{2}} d_{W} (\zeta, \widetilde{\zeta}),
\end{align*}
where the final step uses an elementary inequality comparing the 1-Wasserstein distance $d_{W,1}$ and the 2-Wasserstein distance $d_{W}$.

\end{enumerate}

Finally, combining the results,
\begin{align*}
    |\widetilde{\mu}(f) - \mu(f)| &= \left| \frac{1}{\widetilde{H}} \int_{E^2} ( \mathcal{J} f ) (\mathbf{u}) \, \widetilde{\zeta} (\dint \mathbf{u}) - \frac{1}{H} \int_{E^2} ( \mathcal{J} f ) (\mathbf{u}) \, \zeta (\dint \mathbf{u})   \right| \\
    &= \frac{1}{\widetilde{H}} \left| \int_{E^2} ( \mathcal{J} f ) (\mathbf{u}) \, \widetilde{\zeta} (\dint \mathbf{u}) - \left(1 - \frac{H - \widetilde{H}}{H}\right) \int_{E^2} ( \mathcal{J} f ) (\mathbf{u}) \, \zeta (\dint \mathbf{u})   \right| \\
    &= \frac{1}{\widetilde{H}} \left| \int_{E^2} ( \mathcal{J} f ) (\mathbf{u}) \, \widetilde{\zeta} (\dint \mathbf{u}) - \int_{E^2} ( \mathcal{J} f ) (\mathbf{u}) \, \zeta (\dint \mathbf{u})  \right| + \frac{| H - \widetilde{H}|}{\widetilde{H}} |\mu(f)|\\
    &\leqslant \frac{1}{\widetilde{H}} d_{W}(\zeta, \widetilde{\zeta}) \frac{1}{\sqrt{d}} \left(  |f|_\mathrm{Lip} \mathbf{E}_\zeta \left[ \| \mathbf{x}_1' - \mathbf{x}_1 \|^2 \right]^{1/2} +  \| f \|_{\infty} + |\mu(f)| \right).
\end{align*}
\end{proof}

%%%
%%%
%%%
%%% SUPPLEMENT
%%%
%%%
%%%

\newpage

\vspace*{1cm}
\begin{center}
\textsc{\LARGE Supplementary Material}
\end{center}
\vspace*{.5cm}

\setcounter{figure}{0}
\setcounter{section}{0}
\setcounter{equation}{0}

\renewcommand{\theequation}{S.\arabic{equation}}
\renewcommand{\thesection}{S.\arabic{section}}
\renewcommand{\thefigure}{S.\arabic{figure}}

\section{House prices data}
\label{app:HP}

In this section we compare our algorithms on real-world data, specifically on the House-Price dataset from \cite{housePriceKaggle}. 
%The target is a normal distribution, $\bm{\beta}|\{X, \mathbf{y}, \hat{\sigma}\} \sim \mathcal{N}(\bm{\hat{\beta}}, \hat{\sigma}^2 (\bm{X}^{\top} \bm{X})^{-1} )$. 

Consider a Bayesian linear regression model, $\mathbf{y} = \bm{X} \bm{\beta} + \bm{\varepsilon}$. We are interested in finding the posterior distribution over the length-$d$ parameter vector $\bm{\beta}$, given the $n \times d$ matrix of independent variables $\bm{X}$, and the length-$n$ vector of dependent variables, $\mathbf{y}$. The errors in the length-$n$ vector $\bm{\varepsilon}$ are taken to be normally distributed, $\bm{\varepsilon} \sim \mathcal{N}(\mathbf{0}, \sigma^2 \bm{I})$, where $\mathbf{0}$ is a length-$n$ vector with all zero elements and $\bm{I}$ is the $n\times n$ identity matrix. Taking the prior on $\bm{\beta}$ to be improper, i.e.\ $\pi_0(\bm{\beta}) \propto 1$, the marginal posterior on the parameter vector given the observed data and standard deviation of the error has the simple form $\bm{\beta}|\{X, \mathbf{y}, \sigma\} \sim \mathcal{N}(\bm{\hat{\beta}}, \sigma^2 (\bm{X}^{\top} \bm{X})^{-1} )$, where we use hats to denote maximum-likelihood estimates (MLEs). 

The House-Prices dataset is composed of $n=1459$ house sales prices, along with $d=80$ variables for each house, describing features such as position, type of heating, and type of bath.  We selected this dataset because it has a reasonably high number of variables, many of which are categorical, as is often the case in real datasets. 
Experiments performed on synthetic datasets, other simple targets and different performance metrics suggest that these conclusion hold in more general cases.
Missing values in the data were dealt with as suggested in~\cite{gaudreau2017}, and we coded categorical variables using dummy variables. We discarded the variables \textit{Utilities} (type of utilities available), \textit{TotalBsmtSF} (total square feet of basement area), and \textit{GrLivArea} (above grade (ground) living area square feet), to avoid singularities in the matrix $(\bm{X}^{\top} \bm{X})^{-1}$, and the variable \textit{Id} (identity number), which is not part of the analysis. Finally, we make the standard choice to use the logarithm of the sales price as the variable $\mathbf{y}$.  This processing step left us with $d=77$ variables. 

To simplify running the algorithms, we do not estimate the error variance parameter $\sigma^2$ as part of the MCMC algorithm. Rather, we set it exactly equal to its MLE, $\sigma = \hat{\sigma}$. This model has the advantage, therefore, of having known posterior, which allows us to use the same performance metric that we used in the previous tests. 

The House Prices dataset is quite challenging for all the algorithms involved in this test. Its posterior combines multiple features that we previously explored separately. The number of variables is large ($d=77$), their marginals have very different scales with a difference up to several orders of magnitude, and some of the variables are significantly correlated. All of these features commonly occur in real world datasets, which further increases the importance of this test. Due to the challenging nature of this model, the computational budget for this test was increased to $2.4 \times 10^7$ epochs, thinned down to $6 \times 10^6$ points, and the algorithms were started at the Maximum Likelihood Estimate. The results can be seen in Figure \ref{HousePricesconvpch}.

\begin{figure}[!ht] %[!h] meaning here if possible, but not absolutely
	\includegraphics[scale=0.6]{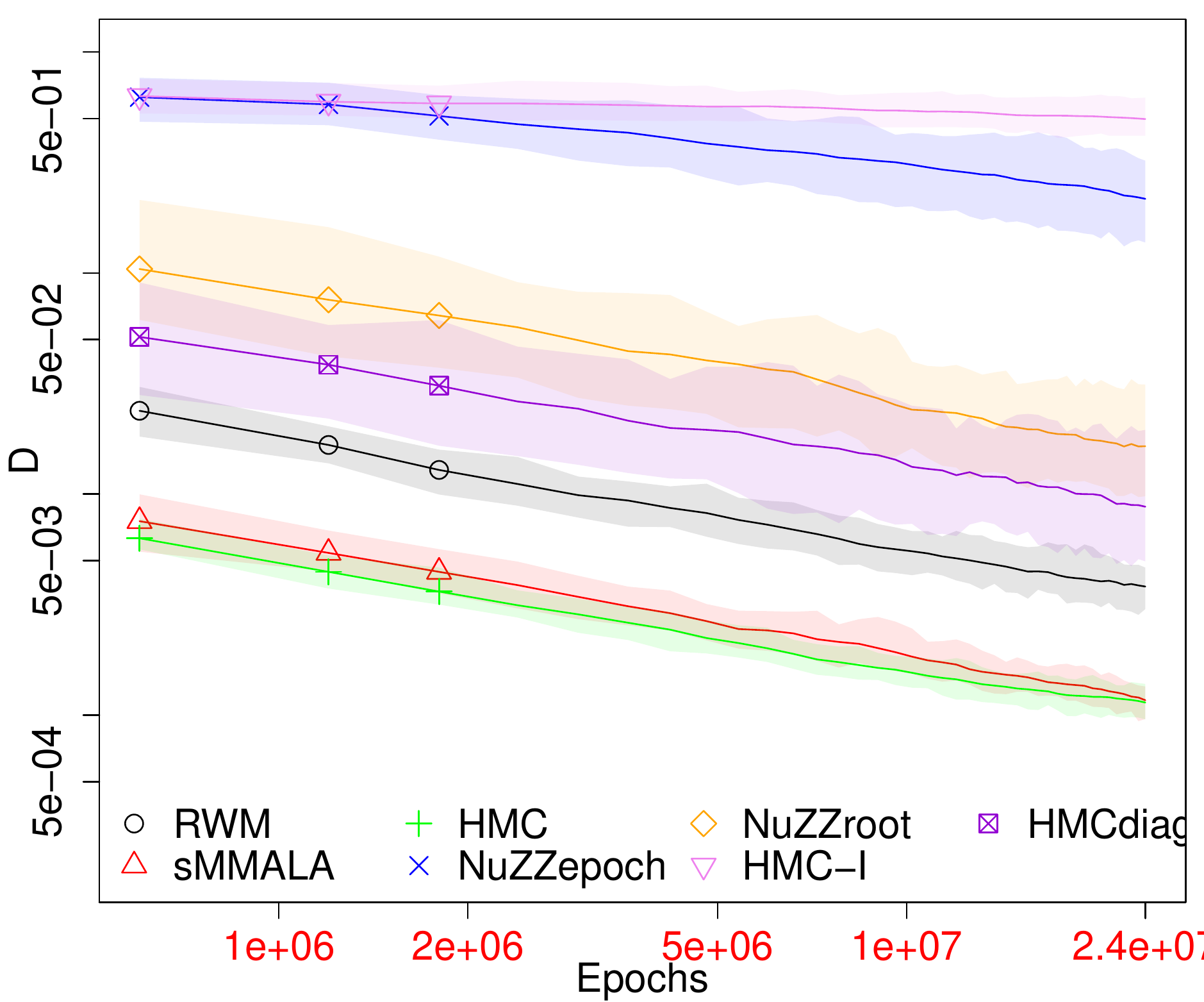}
	\centering
	\caption{Convergence of MCMC algorithms in the Kolmogorov-Smirnov metric on a linear regression model estimated on the House Prices datase. }
	\label{HousePricesconvpch}
\end{figure}

sMMALA performs well, surpassing RW by good measure %They were tuned to have an acceptance of $25\%$ for RW and $50\%$ for sMMALA. 
and performs similarly to HMC, again due to the low number of leapfrog steps that HMC needs, and to the fact that the Hessian is independent of $\bm{\beta}$. %the value of

Once again the green $+$ line representing HMC with mass matrix $M=\hat{\sigma}^{-2}(\bm{X}^{\top} \bm{X})$, step size $.83$ and $3$ leapfrog steps, performs very well, with acceptance close to $67\%$. 

The pink line $\bigtriangledown$ at the top of the graph on the other hand, shows the performance of HMC should the mass matrix be left as the identity matrix. That requires HMC to take an extreme $2400$ leapfrog steps with step size $.2$, with poor results compared to the other algorithms. This may serve as a caveat to practitioners on the importance of tuning for HMC.

As in the other examples, the blue line $\times$ corresponding to NuZZepoch is far above the other algorithms, as the cost of the root finding and integration is very high.

The yellow $\Diamond$ line corresponding to NuZZroot is above the black $\bigcirc$ line corresponding to the Random Walk. The Zig-Zag dynamics in this example is hampered by the presence of features such as high dimension and high correlation together. For comparison, we added the purple line $\boxtimes$ representing HMC run with $100$ leapfrog steps, step size $.2$, and mass matrix equal to the diagonal of $\hat{\sigma}^{-2}(\bm{X}^{\top} \bm{X})$, i.e.\ eliminating information about the correlation from the HMC setup, to make it more comparable to how we tune the velocities for NuZZ. The NuZZroot performance is still inferior to that of HMC in this case, likely due to the presence of correlation between regressors, once again highlighting the importance of the work in \cite{bertazzi2020}.

\section{Effects of tolerances}
\label{app:tol}

As mentioned in the text, in most experiments we set both the integration routine absolute tolerance and Brent's method absolute tolerance to $10^{-10}$. 

In the example from Section \ref{sec:fatTails} we changes these tolerances systematically and assessed their impact on accuracy of the posterior samples.

\begin{figure}[!ht]
	\centering
	\includegraphics[width=.9\linewidth]{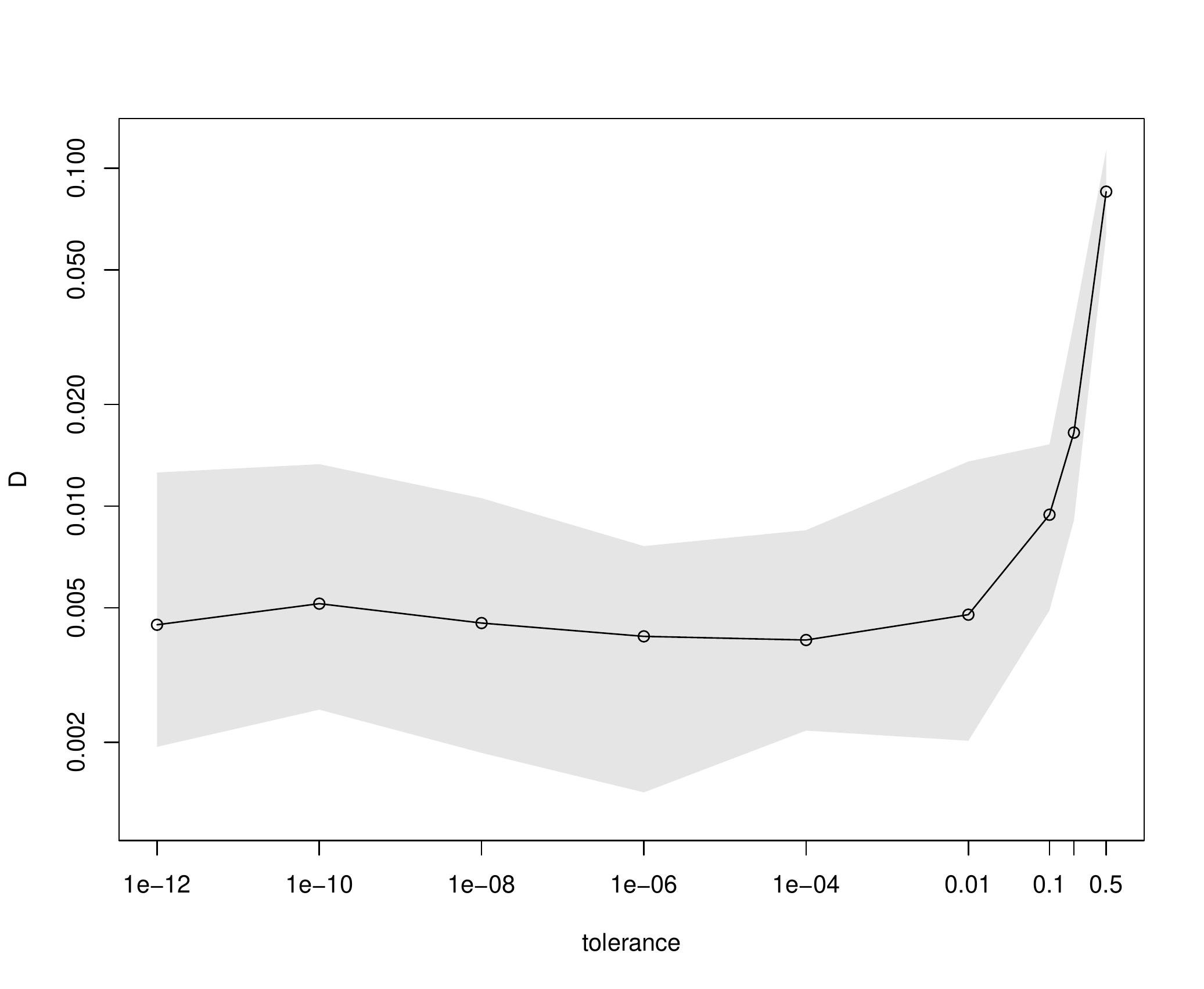}
	\caption{Dependence between the largest Kolmogorov distance on
		the marginals and the root finding tolerance $\varepsilon_{\mathrm{Bre}}$, for a fixed integration tolerance $\varepsilon_{\mathrm{int}}=10^{-10}$. The computational budget was $6 \times 10^6$ epochs, which was processed into $6 \times 10^6$ NuZZ samples. }
	\label{fig:toleranceRootStudT10d}
\end{figure} 

Figure \ref{fig:toleranceRootStudT10d} shows how $D$, the worst Kolmogorov-Smirnov distance on the marginals, increases as we increase $\varepsilon_{\mathrm{Bre}}$.  In this example, with $6 \times 10^6$ samples, the Monte Carlo error dominates the numerical error for all $\varepsilon_{\mathrm{Bre}}<10^{-2}$.
This result is model dependent, but it supports the point that even though NuZZ uses numerical approximations to sample the switching times, if the numerical error is small, the difference in the posterior is not detectable.

We also tested the influence of the integration error on the quality of the sample from NuZZ, but since the QAGS integration routine performs a minimum of a 15-point Gaussian integration steps, this is often more than enough to reach the condition $\eta_{\mathrm{int}} \le \varepsilon_{\mathrm{int}}=10^{-10}$. Therefore we were unable to coarsen this part of the approximation sufficiently in order to produce the analogous plot to Figure \ref{fig:toleranceRootStudT10d} for $\varepsilon_{\mathrm{int}}$, but simply note that for all the examples we considered, one integration step is usually sufficient, and the integration tolerance simply acts as a fail-safe for when they are not.

\end{document}